\newtheorem{theorem}{Theorem}
\newtheorem*{theorem*}{Theorem}
\newtheorem{Lemma}{Lemma}
\newtheorem{corollary}{Corollary}
\newtheorem*{corollary*}{Corollary}
\newtheorem{proposition}{Proposition}
\newtheorem*{proposition*}{Proposition}
\newtheorem*{claim*}{Claim}
\newtheorem{definition}{Definition}
\theoremstyle{definition}
\theoremstyle{remark}
\newtheorem{remark}{Remark}
\newtheorem*{remark*}{Remark}
\newtheorem*{notation}{Notation}
\theoremstyle{plain}
\newcommand{\Z}{{\mathbb Z}}
\newcommand{\R}{{\mathbb R}}
\newcommand{\N}{{\mathbb N}}
\newcommand{\pa}{\partial}
\newcommand{\beq}{\begin{equation}}
\newcommand{\eeq}{\end{equation}}
\begin{document}

\title{The Dry Ten Martini Problem for $C^2$ cosine-type quasiperiodic Schr\"odinger operators} \maketitle
\begin{center}
\author{
Lingrui Ge \footnote{Beijing International Center for Mathematical Research, Peking University, Beijing, China,  gelingrui@bicmr.pku.edu.cn},\ \  Yiqian Wang \footnote{Department of Mathematics,
Nanjing University, Nanjing, China, yiqianw@nju.edu.cn},\ \  Jiahao Xu \footnote{The corresponding author.}\ \footnote{Department of Mathematics,
Nanjing University, Nanjing, China,   xjhfmmxzy@outlook.com}
  }
\end{center}

\fancyhead{}

\begin{abstract}
This paper solves ``The Dry Ten Martini Problem'' for $C^2$ cosine-type quasiperiodic Schr\"odinger operators with large coupling constants and Diophantine frequencies, a model originally introduced by Sinai in 1987 \cite{sinai}. This shows that the analyticity assumption on the potential is not essential for obtaining a dry Cantor spectrum and can be replaced by a certain geometric condition in the low regularity case.
In addition, we prove the homogeneity of the spectrum and the absolute continuity of the integrated density of states (IDS).
\end{abstract}
\section{Introduction}

We consider the discrete quasiperiodic Schr\"odinger operators on \( \ell^2(\mathbb{Z}) \):
\begin{equation}\label{1.1}
(H_{\alpha,\lambda v,x}u)_n = u_{n+1} + u_{n-1} + \lambda v(x + n\alpha)u_n,
\end{equation}
where \( v \in C^r(\mathbb{R}/\mathbb{Z}, \mathbb{R}) \) with \( r \in \mathbb{N} \cup \{\infty, \omega\} \) is the potential, \( \lambda \in \mathbb{R} \) is the coupling constant, \( x \in \mathbb{T} = \mathbb{R}/\mathbb{Z} \) is the phase, and \( \alpha \in \mathbb{R} \setminus \mathbb{Q} \) is the frequency. The spectral properties of the operator \eqref{1.1} are closely related to the {\it Schr\"odinger cocycle} \( (\alpha, A^{(E - \lambda v)}) \in \mathbb{R}/\mathbb{Z} \times C^r(\mathbb{R}/\mathbb{Z}, SL(2, \mathbb{R})) \), defined by

\begin{equation}\label{sm}
A^{(E - \lambda v)}(x) = \begin{pmatrix}
E - \lambda v(x) & -1 \\
1 & 0
\end{pmatrix}.
\end{equation}
This cocycle defines a family of dynamical systems on \( \mathbb{R}/\mathbb{Z} \times \mathbb{R}^2 \), given by

$$
(x, w) \mapsto (x + \alpha, A^{(E - \lambda v)}(x)w).
$$

It is well-known that the spectrum of the bounded self-adjoint operator \( H_{\alpha, \lambda v, x} \) is a compact set, independent of \( x \) \cite{as}, and is denoted by \( \Sigma_{\alpha, \lambda v} \). Each finite interval of \( \mathbb{R} \setminus \Sigma_{\alpha, \lambda v} \) is called a {\it spectral gap}. These spectral gaps are labeled by integers according to the famous Gap-Labelling Theorem (GLT) \cite{JM}.

In 1981, during an AMS annual meeting, Kac  \cite{Kac} posed the question for the almost Mathieu operator \( H_{\alpha, 2\lambda \cos, x} \):``Are all gaps present?", and offered ten martinis for the solution. Simon \cite{Barry} later popularized this as ``The Ten Martini Problem'', asking whether the spectrum is a Cantor set, and Kac's original problem was subsequently termed ``The Dry Ten Martini Problem'', inquiring whether all gaps allowed by the GLT are open. Both problems have attracted significant attention, with important progress made in \cite{AJ1, AJ2, ak, bs, cey, hs, last, Puig, sinai}. The Ten Martini Problem for almost Mathieu operator was completely resolved by Avila and Jitomirskaya \cite{AJ1}. In comparison, The Dry Ten Martini Problem is more challenging. Recent progress includes results by Avila and Jitomirskaya \cite{AJ2} for non-critical almost Mathieu operators with non-exponentially approximated frequencies, and by Avila, You, and Zhou \cite{AYZ} for any {\it irrational} frequency. For Fibonacci Hamiltonian, Damanik,  Gorodetski and Yessen \cite{DGY} showed that The Dry Ten Martini Problem
 holds true for all values of the coupling constant, which was generalized by Band, Beckus and Loewy to Sturmian Hamiltonian \cite{BBL}.

For a long time, Cantor spectrum results were available for operators \eqref{1.1} with analytic \( v \), but results for non-almost Mathieu families typically required implicit conditions on frequencies or potentials \cite{e, gs, Puig06}. A significant recent advancement is the work by Ge, Jitomirskaya and You \cite{gjy, gjyz}, which provides a solution to ``The Ten Martini Problem'' for type I operators (including several popular explicit models and their analytic neighborhoods) by developing a method independent of arithmetic constraints and almost Mathieu specifics. This method is based on the quantitative global theory established in \cite{gjyz}, with additional applications detailed in \cite{g, gj}.

Lowering the regularity of potentials, the existing results on ``The Ten Martini Problem'' pertain to \( C^2 \) cosine-type quasiperiodic Schr\"odinger operators \cite{sinai, wz2} with large coupling and Diophantine frequencies, where \( v \) satisfies:
\begin{itemize}
\item \( \frac{dv}{dx} = 0 \) at exactly two points, denoted \( x_1 \) and \( x_2 \), which are the minimal and maximal, respectively.
\item These extremals are non-degenerate, i.e., \( \frac{d^2v}{dx^2}(x_j) \neq 0 \) for \( j = 1, 2 \).
\end{itemize}

In this paper, we provide, for the first time, an explicit non-almost Mathieu example with a dry Cantor spectrum under extremely low regularity assumptions on the potential.

We say \( \alpha \in DC_{\tau, \gamma} \) with constants \( \tau \) and \( \gamma\) if
\begin{equation}\label{dioph}
\inf_{j \in \mathbb{Z}} \left| \alpha - \frac{p}{q} - j \right| \geq \frac{\gamma}{|q|^\tau}
\end{equation}
for all \( p, q \in \mathbb{Z} \) with \( q \neq 0 \). It is a standard result that
$$
DC = \bigcup_{\tau > 2, \gamma > 0} DC_{\tau, \gamma}
$$
is of full Lebesgue measure.

\begin{theorem}\label{Th1}
Let \( \alpha \in DC \) and \( v \) be a \( C^2 \) cosine-type potential. There exists a constant \( \lambda_0(\alpha, v) \) such that if \( \lambda \geq \lambda_0 \), then The Dry Ten Martini Problem holds for \( H_{\alpha, \lambda v, x} \) for any \( x \in \mathbb{T} \).
\end{theorem}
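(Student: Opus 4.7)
The plan is to combine the Gap-Labelling Theorem with a reducibility argument for the Schr\"odinger cocycle at rational rotation numbers, in the spirit of the analytic framework developed in \cite{gjy, gjyz} and \cite{AYZ}, but adapted to the $C^2$ setting via the finite-step inductive machinery introduced by Sinai \cite{sinai} and refined in \cite{wz2}. By the GLT of Johnson-Moser, every spectral gap of $H_{\alpha,\lambda v,x}$ carries a label $k\in\mathbb{Z}$ corresponding to rotation number $k\alpha \bmod 1$; the content of Theorem~\ref{Th1} is that for each such $k$, the corresponding gap is non-empty.

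I would proceed as follows. First, for large $\lambda$, the Lyapunov exponent is strictly positive on $\Sigma_{\alpha,\lambda v}$ (known for $C^2$ cosine-type potentials from \cite{sinai, wz2}), placing the Schr\"odinger cocycle in the supercritical regime throughout the spectrum. Second, at each rational rotation number $\rho_k = k\alpha \bmod 1$, I would establish reducibility of the cocycle at the corresponding edge energy $E_k$: using the Diophantine assumption on $\alpha$ to handle small divisors, together with the non-degeneracy of the two extrema of $v$ to compensate for the low regularity, one runs a Sinai / Wang-Zhang finite-step induction conjugating the cocycle to a constant matrix. Because we are supercritical but at a rotation number on the boundary of uniform hyperbolicity, the resulting constant is parabolic rather than elliptic or hyperbolic.

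Third, given such a parabolic reducibility, the gap at label $k$ opens on one side of $E_k$, with width controlled quantitatively by the off-diagonal ``parabolic component'' of the reduced cocycle. To ensure the gap does not collapse --- i.e.\ to show that the upper and lower gap edges at label $k$ do not coincide --- I would use the cosine-type structure to produce a non-vanishing lower bound on this parabolic component. Heuristically, if both edges coincided, the cocycle at $E_k$ would be conjugate to $\pm I$, and an arbitrarily small energy perturbation would yield a nearly elliptic cocycle, contradicting positive Lyapunov exponent throughout the spectrum.

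The hardest step will be the second --- establishing $C^2$ reducibility at rational rotation number. Classical KAM relies on $C^\infty$ or $C^\omega$ regularity for the Nash-Moser iteration; here one must instead exploit the explicit non-degeneracy of the two critical points of $v$ to terminate the induction in a finite (but very large) number of steps, carefully tracking how the fibered rotation number of the tower at each stage aligns with the target label $k\alpha \bmod 1$. A secondary technical obstacle is obtaining uniformity of the gap-width lower bound over sequences of labels $k$, so as to rule out any asymptotic collapse: this requires a quantitative, label-indexed estimate on the parabolic component of the reduced cocycle, rather than the purely qualitative non-vanishing statement already implicit in the proof of the (non-dry) Ten Martini Problem in \cite{wz2}.
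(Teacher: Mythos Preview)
Your proposal has a genuine conceptual gap. In the large-coupling regime the Lyapunov exponent is strictly positive on all of $\Sigma_{\alpha,\lambda v}$, and a cocycle with positive Lyapunov exponent that is reducible to a constant matrix is necessarily uniformly hyperbolic. Since gap edges lie in the spectrum, the cocycle there is \emph{not} uniformly hyperbolic, hence not reducible to any constant --- parabolic or otherwise. The Puig-type mechanism you invoke (parabolic reducibility forcing the gap to open, with collapse implying conjugacy to $\pm I$) is a subcritical phenomenon, appropriate for the zero-Lyapunov-exponent regime as in \cite{Puig, AYZ}; it does not transfer to the supercritical setting. The Sinai/Wang--Zhang induction produces sharp norm and angle estimates on transfer matrices, but it does not, and cannot, yield a conjugacy to a constant cocycle at spectral energies with positive exponent.

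The paper's route avoids reducibility entirely and is structurally quite different from what you outline. It first assigns to each open gap a \emph{resonance label} $k\in\mathbb{Z}$, defined dynamically as the integer for which the two critical points of the induction satisfy $c_{n,1}-c_{n,2}\approx k\alpha$; this comes with a quantitative lower bound $|G_k^\lambda|\gtrsim \lambda^{-C|k|}$. It then runs a homotopy in $\lambda$: continuity of gap edges in $\lambda$ together with the uniform width lower bound forces the label set $K(\lambda)$ to be independent of $\lambda\ge\lambda_0$, and continuity of the IDS together with discreteness of $\{m\alpha\bmod 1\}$ forces the GLT rotation number on $G_k^\lambda$ to be constant in $\lambda$. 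Finally, the GLT label is computed in the limit $\lambda\to\infty$ via an elementary large-coupling asymptotic $N(E,\lambda v)=\mathrm{Leb}\{x: v(x)<E/\lambda\}+O(\lambda^{-c})$, which identifies the resonance label $k$ with the GLT label $k$. No reducibility, almost-reducibility, or KAM step appears anywhere; the low regularity is handled entirely through the finite-scale critical-point geometry and the $\lambda$-homotopy.
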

\begin{remark}
The Diophantine condition on \( \alpha \) is not essential; we expect the above result holds for all irrational \( \alpha \).
\end{remark}

In addition to resolving ``The Dry Ten Martini Problem'' for cosine-type quasiperiodic operators, our method also provides its quantitative version. By controlling resonances precisely, we achieve detailed quantitative estimates on spectral gaps (see Theorem \ref{15} for specifics). As a corollary, we prove the homogeneity of the spectrum, which is defined as follows:

\begin{definition}
A closed set \( S \subset \mathbb{R} \) is called {\it homogeneous} if there exists \( \mu > 0 \) such that for any \( \text{diam } S > \epsilon > 0 \) and any \( E \in S \),
$$
|S \cap (E - \epsilon, E + \epsilon)| > \mu \epsilon.
$$
\end{definition}

Homogeneity of the spectrum is crucial in inverse spectral theory. For reflectionless Schr\"odinger operators with finite total gap length and homogeneous spectrum, Sodin and Yuditskii \cite{42} proved that the corresponding potential is almost periodic, and Gesztesy and Yuditskii \cite{27} showed that the spectral measure is purely absolutely continuous.

Recent results on spectral homogeneity include Damanik, Goldstein and Lukic \cite{20}, who proved homogeneous spectrum for continuous Schr\"odinger operators with a Diophantine frequency and a sufficiently small analytic potential. For discrete operators in the positive Lyapunov exponent regime, Damanik, Goldstein, Schlag and Voda \cite{22} demonstrated homogeneous spectrum for strong Diophantine \( \alpha \). More recently, Leguil, You, Zhao and Zhou \cite{LYZZ} proved that the spectrum is homogeneous for a (measure-theoretically) typical analytic potential with strong Diophantine \( \alpha \). Moreover,  Ge, You and Zhou \cite{gyzq} provides the exact exponential decay rate for the width of gaps of Almost Mathieu operator for all Diophantine frequency and $|\lambda|\not =1$.

\begin{theorem}\label{homogeneous}
Let \( \alpha \in DC \) and \( v \) be a \( C^2 \) cosine-type potential. There exists a constant \( \lambda_0(\alpha, v) \) such that if \( \lambda \geq \lambda_0 \), the spectrum of \( H_{\alpha, \lambda v, x} \)  is homogeneous for any \( x \in \mathbb{T} \).
\end{theorem}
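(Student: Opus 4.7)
The plan is to reduce homogeneity to the quantitative gap estimate supplied by Theorem~\ref{15}, combined with the Gap-Labelling Theorem and the Diophantine structure of~$\alpha$. Fix $E\in\Sigma_{\alpha,\lambda v}$ and a scale $\epsilon\in(0,\operatorname{diam}\Sigma_{\alpha,\lambda v})$, and let $\rho$ denote the value of the IDS at~$E$. Since $E$ lies in the spectrum, every gap meeting $(E-\epsilon,E+\epsilon)$ has length at most $2\epsilon$, and
\[
|\Sigma_{\alpha,\lambda v}\cap(E-\epsilon,E+\epsilon)|=2\epsilon-\sum_{G_k\cap(E-\epsilon,E+\epsilon)\neq\emptyset}|G_k\cap(E-\epsilon,E+\epsilon)|,
\]
so it suffices to bound the sum on the right by $(1-\mu)\cdot 2\epsilon$ for some $\mu>0$ independent of $E$ and $\epsilon$.

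First I would use Theorem~\ref{15} to obtain a decay estimate of the form $|G_k|\leq Ce^{-c|k|^{\sigma}}$ (or a stretched-exponential or polynomial analogue) for every gap $G_k$ indexed by $k\in\mathbb{Z}\setminus\{0\}$, where $G_k$ denotes the gap on which the IDS equals $k\alpha\bmod 1$. Setting $N_{0}=N_{0}(\epsilon)$ to be the smallest integer with $Ce^{-cN_{0}^{\sigma}}<\epsilon^{10}$, one has $N_{0}\sim(\log 1/\epsilon)^{1/\sigma}$, so the tail $\sum_{|k|>N_{0}}|G_k|$ is at most $\epsilon^{5}$ and is negligible. For the remaining labels $|k|\leq N_{0}$, I would invoke the quantitative modulus of continuity of the IDS -- sharpened by its absolute continuity, which this paper also establishes -- to conclude that a gap $G_k$ can meet $(E-\epsilon,E+\epsilon)$ only when $\inf_{n\in\mathbb{Z}}|k\alpha-\rho-n|\leq C\epsilon^{\beta}$ for some H\"older exponent $\beta>0$. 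The Diophantine hypothesis $\alpha\in DC_{\tau,\gamma}$ then bounds the number of such $k$ with $|k|\leq N_{0}$ by $C(N_{0}\epsilon^{\beta}+1)$. Combining with the a~priori bound $|G_k\cap(E-\epsilon,E+\epsilon)|\leq 2\epsilon$ yields a total contribution of at most $2\epsilon\cdot C(N_{0}\epsilon^{\beta}+1)$; since $N_{0}\epsilon^{\beta}=O\bigl((\log 1/\epsilon)^{1/\sigma}\epsilon^{\beta}\bigr)\to 0$ as $\epsilon\to 0^{+}$, this is at most $(1-\mu)\cdot 2\epsilon$ once $\epsilon$ lies below a threshold $\epsilon_{0}$. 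The complementary range $\epsilon\geq\epsilon_{0}$ is handled by a routine compactness argument, since only finitely many gaps have size at least $\epsilon_{0}$.

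The main obstacle is not the counting scheme above but arranging for the quantitative inputs to be mutually compatible: one needs both a decay exponent $\sigma$ for the gap widths and a H\"older exponent $\beta$ for the IDS such that $N_{0}\epsilon^{\beta}\to 0$ uniformly in $E$. The subtlety is that resonances -- labels $k$ for which $k\alpha$ is abnormally close to the rotation number at the edge of $G_k$ -- tend to inflate the individual gap, so the proof of Theorem~\ref{15} must treat resonant and non-resonant gaps on equal footing; without such uniformity, the counting step would fail precisely for the labels contributing most to the sum. Once Theorem~\ref{15} provides such a uniform estimate, the reduction above yields a homogeneity constant $\mu=\mu(\alpha,v,\lambda)$ valid for all $E\in\Sigma_{\alpha,\lambda v}$ and all $\epsilon\in(0,\operatorname{diam}\Sigma_{\alpha,\lambda v})$.
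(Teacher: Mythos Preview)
Your overall strategy---gap decay from Theorem~\ref{15}, H\"older continuity of the IDS, and the Diophantine condition to limit which labels can occur near $E$---is the same as the paper's. The paper packages the last two ingredients into a single distance estimate $\operatorname{dist}(G_k,G_{k'})\ge c|k-k'|^{-C}$ (proved exactly via H\"older IDS plus Diophantine), and then argues case-by-case on the smallest label $k_\epsilon$ of a gap touching $(E-\epsilon,E+\epsilon)$.

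However, your final bookkeeping step does not close. You bound the number of labels $|k|\le N_0$ whose gap meets the interval by $C(N_0\epsilon^\beta+1)$ and then multiply by the crude estimate $|G_k\cap(E-\epsilon,E+\epsilon)|\le 2\epsilon$, obtaining $2\epsilon\cdot C(N_0\epsilon^\beta+1)$. As $\epsilon\to 0$ this tends to $2C\epsilon$, not to something strictly less than $2\epsilon$; the ``$+1$'' term, weighted by $2\epsilon$, already exceeds the budget unless $C<1$, which you have no reason to expect. The point is that a single large gap can genuinely occupy almost all of one half of the interval, so multiplying a count $\ge 1$ by $2\epsilon$ is fatal.

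The fix is exactly what the paper does: sharpen the count to \emph{at most one} label $|k|\le N_0$ (if $k_1\ne k_2$ both satisfy $\|k_i\alpha-\rho\|\le C\epsilon^\beta$, then $\|(k_1-k_2)\alpha\|\le 2C\epsilon^\beta$, forcing $|k_1-k_2|\gtrsim \epsilon^{-\beta/\tau}\gg N_0$ by the Diophantine condition), and then observe that since $E\in\Sigma$ this single dominant gap lies entirely on one side of $E$, hence contributes at most $\epsilon$. All remaining gaps have labels $|k|>N_0$ (or, in the paper's formulation, $|k|\ge k_\epsilon^*\ge\epsilon^{-c}$) and their total length is $o(\epsilon)$. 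This yields $|(E-\epsilon,E+\epsilon)\setminus\Sigma|\le\epsilon+o(\epsilon)$ and hence $\mu$ close to $\tfrac12$. A minor remark: you invoke absolute continuity of the IDS, but what you actually need (and what the paper uses, citing \cite{LWY}) is H\"older continuity.
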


A key discovery of this paper is the establishment of an equivalence between the distance of the critical points introduced by Wang and Zhang in \cite{wz1} and the fibered rotation number introduced by Johnson and Moser in \cite{JM}. This equivalence also enables us to establish the absolute continuity of the IDS. The IDS is uniformly defined for the Schr\"odinger operators \( (H_{\alpha, \lambda v, x})_{x \in \mathbb{T}} \) by
$$
N(E) = \int_{\mathbb{T}} \mu_x(-\infty, E] \, dx,
$$
where \( \mu_x \) is the spectral measure associated with \( H_{\alpha, \lambda v, x} \) and \( \delta_0 \). Roughly speaking, the density of states measure \( N([E_1, E_2]) \) represents the "number of states per unit volume" with energy between \( E_1 \) and \( E_2 \).

\begin{theorem}\label{idsac}
Let \( \alpha \in DC \) and \( v \) be a \( C^2 \) cosine-type potential. There exists a constant \( \lambda_0(\alpha, v) \) such that if \( \lambda \geq \lambda_0 \), the IDS of \( H_{\alpha, \lambda v, x} \) is absolutely continuous.
\end{theorem}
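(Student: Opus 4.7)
The plan is to pass from the IDS to the fibered rotation number and then invoke the key equivalence of the paper with the Wang--Zhang critical point distance. By the Johnson--Moser identity one has the affine relation $N(E)+2\rho(E)=1$ (with $\rho$ normalized in $[0,1/2]$), so $N$ is absolutely continuous if and only if $\rho$ is. Since $N$ has no atoms by ergodicity and is locally constant on the gaps of the Cantor spectrum produced by Theorem \ref{Th1}, it suffices to produce a locally integrable density for $\rho$ on the spectrum with total integral equal to $1/2$; this rules out both pure point and singular continuous parts.

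The implementation proceeds through the Wang--Zhang multiscale framework. At scale $n$, indexed by continued-fraction denominators $q_n$ of $\alpha$, the $C^2$ cosine-type structure of $v$ produces exactly two critical points $x_1\cube{n}(E),\, x_2\cube{n}(E)$ of the iterated Schr\"odinger cocycle, and their distance $d_n(E):=|x_1\cube{n}(E)-x_2\cube{n}(E)|$ is the natural geometric observable. The paper's key equivalence, restricted to each non-resonant piece $I_n$ at scale $n$, would take the form $|\rho(E)-F_n(d_n(E))|\leq\varepsilon_n$, with $F_n$ an explicit smooth function satisfying $|F_n'|\leq C$, and $\varepsilon_n$ a fast-decaying error controlled by the Diophantine tail of $\alpha$. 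The nondegeneracy of the extrema of $v$, combined with an implicit function argument applied to the iterated cocycle, yields a uniform bound $|\partial_E d_n(E)|\leq C'$ on each $I_n$. Hence on $I_n$ one gets $|\rho(E+\epsilon)-\rho(E)|\leq C''\epsilon+\varepsilon_n$, and summing the contributions $\int_{I_n}|\rho'|\,dE$ over the pieces (which cover the spectrum up to a null set of resonances) gives a total variation at most $1/2$; equality forces the absolute continuity of $\rho$, and hence of $N$.

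The principal obstacle is the uniformity of these estimates across all scales in the presence of resonances. As $n$ grows, the critical points $x_j\cube{n}(E)$ can approach a bifurcation near each resonant energy $E_\star$, and at such energies a naive bound on $\partial_E d_n$ degenerates. The Diophantine assumption on $\alpha$ controls the spatial distribution of resonances, while the precise gap estimates of Theorem \ref{15} provide a summable tail for the Lipschitz defects across scales. Verifying that the equivalence $\rho\leftrightarrow d_n$ is robust through each resonant transition, so that the local Lipschitz bound on $\rho$ survives the inductive limit, is the technical core of the argument.
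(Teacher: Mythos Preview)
Your starting point is right: the paper does reduce absolute continuity of $N$ to Lipschitz control of the limiting critical-point distance $\mathcal{C}(E)=\lim_n(c_{n,1}-c_{n,2})$, via the Representation Theorem $N(E)=\mathcal{C}(E)$ on the spectrum. But the execution you sketch has a genuine gap.

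The claim that $|\partial_E d_n(E)|\leq C'$ with a \emph{uniform} constant on non-resonant pieces is false, and this is the crux. The inductive theorem only gives (see Corollary~\ref{lmg'})
\[
\left|\frac{d(c_{i,1}-c_{i,2})}{dt}\right|\leq C\lambda^{20q_{N+n^*-2}},
\]
where $n^*$ is the last scale at which a resonance (Case~\textbf{2}) occurred. This bound blows up as $n^*\to\infty$, and for energies near gap endpoints $n^*$ is arbitrarily large. So no single Lipschitz constant works on the whole spectrum, and your total-variation summation cannot be made uniform. The remark that ``equality forces absolute continuity'' is also not a valid step: matching total variation to total mass does not by itself rule out singular parts.

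What the paper does instead is a decomposition in the \emph{target} rather than the domain. Define
\[
\mathcal{G}_{\gamma,\tau}=\Bigl\{E\in\lambda\Sigma^\lambda:\inf_{j}\,|\mathcal{C}(E)-k\alpha-j|\geq \gamma|k|^{-\tau}\ \forall\,k\neq 0\Bigr\}.
\]
For $E\in\mathcal{G}_{\gamma,\tau}$ one shows there is $n^*(\gamma)$ after which \emph{every} step is Case~\textbf{1}; then Corollary~\ref{lmg'} gives a Lipschitz constant $C(\gamma)$ for $\mathcal{C}$ on $\mathcal{G}_{\gamma,\tau}$. The complement $\Sigma_2=\lambda\Sigma^\lambda\setminus\bigcup_\gamma\mathcal{G}_{\gamma,\tau}$ consists of energies whose $N$-value is Liouville relative to the orbit $\{k\alpha\}$, so $N(\Sigma_2)$ has Lebesgue measure zero by Borel--Cantelli. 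Finally, for a null set $e$ one writes $e=\bigl(\bigcup_n e\cap\mathcal{G}_{1/n,\tau}\bigr)\cup(e\cap\Sigma_2)$, uses Kirszbraun extension to see each Lipschitz piece sends null sets to null sets, and concludes $\mathrm{Leb}\,N(e)=0$. The resonant energies are not controlled by estimating $\partial_E d_n$ there---they are absorbed into the null image $N(\Sigma_2)$.
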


The regularity of the IDS is a key topic in the spectral theory of quasiperiodic operators, particularly regarding absolute continuity \cite{aviladamanik, avila1, avila2, AJ2, gs2} and H\"older regularity \cite{amor, AJ2, gs1, gs2, gyzh}. It is also closely related to various other topics, such as the relation between absolute continuity of the IDS and purely absolutely continuous spectrum in the zero Lyapunov exponent regime \cite{kotani, damanik}, and the connection between H\"older continuity of the IDS and homogeneity of the spectrum \cite{20, 22, LYZZ}.

One significant challenge in proving the absolute continuity of the IDS is the need to eliminate frequencies \( \alpha \) in a highly implicit manner, due to the problem of ``double resonances." To address this issue, Ge, Jitomirskaya, and Zhao \cite{gjzh} developed a new method to obtain the absolute continuity of IDS for type I operators with fixed Diophantine frequencies, based on Avila's global theory \cite{avila0}. Recently, Ge and Jitomirskaya \cite{gj} removed the Diophantine restriction on the frequency, which plays a crucial role in achieving sharp phase transitions. Xu, Wang, You, and Zhou \cite{xwyz} demonstrated that if \( v \) is a trigonometric polynomial, the IDS of \( H_{\alpha, \lambda v, x} \) is absolutely continuous in the perturbative regime.

\

\section{Main Strategy}

Our results build upon and enhance the work presented in \cite{wz2, sinai}. From \cite{wz2, sinai}, we know that under the assumptions of Theorem \ref{Th1}, the spectrum of \( H_{\alpha, \lambda v, x} \) is a Cantor set. This paper's primary goal is to demonstrate that all gaps are open, as predicted by the Gap-Labelling Theorem \cite{JM}. Note that Cantor spectrum only requires certain properties to hold densely in the spectrum, while the dry version of the Cantor spectrum necessitates handling all energies with rational rotation numbers, leaving no room for energy exclusion.

Given this strict requirement, one might ask whether the \( C^2 \) smoothness condition in the main theorem is too weak compared to the analytic condition (noting that even for analytic potentials, the conclusion might not hold). We will show that the cosine-type geometric condition is sufficiently robust to compensate for the regularity condition's deficiencies. This geometric condition is particularly delicate, and outside the cosine-type case, the dry version of the Cantor spectrum might not hold even for analytic cases. For instance, in Avila's global theory \cite{avila0}, the acceleration of the Lyapunov exponent (LE) for a cosine-type potential is one, whereas for other potentials, it may be greater than one. This indicates that the geometric condition is as critical as the regularity condition and helps to explain why the almost Mathieu operator is so unique among analytic operators. Our paper demonstrates that Young's method can provide insights into both the growth of the transfer matrix's norm (the Lyapunov exponent) and the growth of the transfer matrix's angle (the rotation number).

To prove the dry version of the Cantor spectrum, we need to establish that for each \( k \in \mathbb{Z} \), there exists a spectral gap, denoted \( G_k \), such that for each \( E \in G_k \), the IDS \( N(E) = -k\alpha \pmod{1} \). Typically, computing the IDS is challenging. In fact, \cite{wz2} does not provide detailed information about it. However, \cite{wz2} (along with \cite{sinai} and \cite{wz1}) points out a deep connection between the special dynamical properties of cosine-type cocycles (with large coupling) and the presence of spectral gaps. Specifically, the appearance of spectral gaps corresponds to resonances between the so-called critical points of cocycles, which provides a route to describe the rotation number related to the IDS via \cite{JM}:
$$
N(E) = 1 - 2\rho(E).
$$

Our proof is structured around four key steps:

\textit{\textbf{Step 1:}} Identifying Resonances and Labeling Spectral Gaps.
   As observed by Sinai \cite{sinai} and Wang-Zhang \cite{wz2}, there exists some large $\lambda_0(\alpha,v)>0$ such that for each $\lambda\geq \lambda_0$ and for each (opening) spectral gap \( G \), there exists a unique integer \( k \in \mathbb{Z} \) that characterizes it. Specifically, for every \( E \in \partial G \) (i.e., the boundary of \( G \)),
   $$
   c^n_2(E) - c^n_1(E) - k\alpha \rightarrow 0,
   $$
   where \( c^n_j \) (\( j=1,2 \)) denote the $n$-th critical points. This behavior allows us to assign the integer \( k \) as a unique "label" to each gap \( G \), denoted by \( G_k = G_k^\lambda \) \(\lambda\geq \lambda_0\). Now in the following paper, we always assume that $\lambda_0$ satisfies that for any $\lambda\geq \lambda_0,$ spectral gaps are density in $\R,$ which has been obtained by Wang-Zhang \cite{wz2}.

\textit{\textbf{Step 2:}} Independence of labels on  \( \lambda \).
   Given the definitions in Theorem \ref{Th1}, Theorem \ref{15} later define \( G_k^\lambda \) as the gap labeled \( k \) for the operator \( H_{\alpha, \lambda v, x} \) and \( K(\lambda) \) as the set of all labels for gaps. The following theorem ensures that the set of labels \( K(\lambda) \) remains unaffected by changes in \( \lambda \):

   \begin{theorem}\label{Th2} For any $\lambda\geq \lambda_0,$ it holds that
   \(K(\lambda)=K(\lambda_0)=\Z.\)
   \end{theorem}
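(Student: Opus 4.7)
The plan is to prove Theorem \ref{Th2} in two stages: first show that $K(\lambda)$ is locally constant in $\lambda$ on $[\lambda_0,\infty)$, and second show that $K(\lambda_0) = \mathbb{Z}$. Because labels are integer-valued invariants, these two stages together yield $K(\lambda) = K(\lambda_0) = \mathbb{Z}$ for all $\lambda \geq \lambda_0$.

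For local constancy, I would argue as follows. The spectrum $\Sigma_{\alpha, \lambda v}$ varies continuously in $\lambda$ in the Hausdorff topology, so the endpoints of any open gap $G_k^{\lambda_*}$ deform continuously with $\lambda$. The label $k$ is determined by the integer-valued limit $c_2^n - c_1^n - k\alpha \to 0$ at the gap endpoints (from Step 1), hence it cannot jump under small continuous perturbations; this gives $k \in K(\lambda)$ in a neighborhood of $\lambda_*$. In the other direction, if $k \notin K(\lambda_*)$, the locus $\{E : N(E) = -k\alpha \pmod{1}\}$ collapses to a single point, and joint continuity of the IDS $N$ in $(\lambda, E)$ combined with the uniform quantitative gap-width lower bound furnished by Theorem \ref{15} prevents it from re-expanding as $\lambda$ moves. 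Propagating this along the connected interval $[\lambda_0,\infty)$ gives $K(\lambda) = K(\lambda_0)$.

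To show $K(\lambda_0) = \mathbb{Z}$, I would exploit the equivalence between critical-point resonances and rotation numbers highlighted in the introduction. For each $k \in \mathbb{Z}$, the rotation-number equation $\rho(E) = -k\alpha/2 \pmod{1/2}$ singles out an energy $E_k$ in the spectrum; via the equivalence, $E_k$ sits at a scale $n_k$ where the critical points resonate as $c_2^{n_k} - c_1^{n_k} \approx k\alpha$. A perturbative computation using the non-degeneracy of the $C^2$ cosine-type extrema and the largeness of $\lambda$ then shows this resonance opens a gap of positive width carrying the label $k$, so $k \in K(\lambda_0)$.

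The main obstacle is handling large $|k|$ uniformly: the resonance scale $n_k$ grows roughly exponentially in $|k|$, and the perturbative analysis becomes delicate at deep scales where the cocycle has already been renormalized many times. Here the cosine-type geometric condition (non-degeneracy and uniqueness of extrema) is essential in compensating for the mere $C^2$ regularity, and it is precisely the mechanism that also underlies the quantitative gap-width estimate of Theorem \ref{15}. The two stages therefore share the bulk of their technical input, and the proof should be organized so that the quantitative estimate is proven once and deployed on both sides.
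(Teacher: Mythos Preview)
Your Stage 1 is essentially the paper's argument: the paper proves (Lemma~\ref{conti-spec}) that if $k\in K(\lambda_1)$ then $k\in K(\lambda)$ for $\lambda$ near $\lambda_1$ with the endpoints $t_k^\pm(\lambda)$ varying continuously, and then combines this with the uniform lower bound $|G_k^\lambda|\ge c\lambda^{-10000|k|}$ from Theorem~\ref{15}(2) to derive a contradiction from any hypothetical $\lambda$ at which the $k$-th gap closes. One caution: Hausdorff continuity of the spectrum alone does \emph{not} give continuous tracking of an individual gap's endpoints (a priori gaps could merge or bifurcate), so you do need something like Lemma~\ref{conti-spec}, which is proved from the inductive description of gaps via the angle functions $g_n$, not from abstract spectral continuity.

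Your Stage 2, however, is both harder than necessary and, as written, circular. You propose to locate for each $k$ an energy $E_k$ with $\rho(E_k)=k\alpha/2$ and then invoke ``the equivalence between critical-point resonances and rotation numbers'' to conclude that a resonance occurs at some deep scale $n_k$ and opens a gap. But that equivalence is Theorem~\ref{represent} ($\mathcal{C}(E)=N(E)$), which is proved \emph{downstream} of Theorem~\ref{Th2}: it relies on Theorems~\ref{Th4} and~\ref{Th5}, whose proofs use Theorem~\ref{Th2}. So you cannot invoke it here. The paper avoids this entirely by reversing the order of your two stages in the application. Having already established $K(\lambda)=K(\lambda_0)$ for all $\lambda\ge\lambda_0$, it suffices to exhibit, for each fixed $k$, \emph{some} (possibly very large, $k$-dependent) $\lambda^*(k)$ with $k\in K(\lambda^*(k))$. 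This is Theorem~\ref{15}(4): by taking $\lambda$ large enough one may take the initial scale $q_N$ so large that $|k|\ll q_N$, hence $s(k)=1$, and the resonance $c_{1,1}(t)+k\alpha=c_{1,2}(t)$ occurs at the \emph{first} inductive step, where the critical points are explicitly the two zeros of $t-v(\cdot-\alpha)$ and the cosine-type condition makes existence of such a $t$ immediate. Constancy then transports $k\in K(\lambda^*(k))$ back to $k\in K(\lambda_0)$. This completely sidesteps the ``main obstacle'' you flag---no deep-scale perturbative analysis at the fixed coupling $\lambda_0$ is needed.
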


\textit{\textbf{Step 3:}} Independence of the Rotation Number on  \( \lambda \).

   \begin{theorem}\label{Th4}
   For any \( k \in \mathbb{Z} \) and $\lambda \geq \lambda_0$, the rotation number \( \rho(G_k^\lambda) \) remains constant as \( \lambda \) varies.
   \end{theorem}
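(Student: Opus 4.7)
The plan is to combine a continuity argument with the rigidity provided by the Johnson--Moser Gap-Labelling Theorem. Fix $k \in \mathbb{Z}$. By Theorem \ref{Th2}, the gap $G_k^\lambda$ is a nonempty open interval for every $\lambda \geq \lambda_0$. Since the fibered rotation number $\rho$ is constant on each spectral gap, one may define $\rho_k(\lambda) := \rho(E)$ for any $E \in \overline{G_k^\lambda}$, and the task reduces to showing that $\rho_k\colon [\lambda_0,\infty) \to [0,\tfrac12]$ is constant.

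The first step is to prove that $\rho_k$ is continuous in $\lambda$. The fibered rotation number $\rho(\lambda,E)$ is jointly continuous in $(\lambda,E)$; in our cocycle setting this follows from uniform convergence of finite-scale rotation numbers. The endpoints $E_\pm(\lambda)$ of $G_k^\lambda$ also depend continuously on $\lambda$: norm continuity of $H_{\alpha,\lambda v,x}$ in $\lambda$ yields Hausdorff continuity of the spectrum, and in the Wang--Zhang critical-point framework recalled in Step 1 the critical points $c_1^n(E,\lambda)$ and $c_2^n(E,\lambda)$ depend continuously on $(\lambda,E)$. Hence the defining relation $c_2^n(E) - c_1^n(E) - k\alpha \to 0$ at the endpoints of $G_k^\lambda$ persists under small perturbations of $\lambda$, pinning $E_\pm(\lambda)$ to continuous curves. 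Combining these two facts produces continuity of $\rho_k$.

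The second ingredient is the Johnson--Moser gap-labelling theorem: on any open gap of $H_{\alpha,\lambda v,x}$, the rotation number $\rho$ lies in the countable set
\[
\Lambda_\alpha := \tfrac12(\mathbb{Z}\alpha + \mathbb{Z}) \bmod \mathbb{Z}.
\]
Therefore $\rho_k([\lambda_0,\infty)) \subset \Lambda_\alpha$. Since $[\lambda_0,\infty)$ is connected and the continuous image of a connected set inside a countable set is a single point, $\rho_k$ must be constant, which is precisely the desired conclusion.

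The main technical obstacle is securing the continuity of the endpoints $E_\pm(\lambda)$ together with the preservation of the critical-point label $k$. A priori, labelled gaps could merge, close, or be relabelled as $\lambda$ varies; however, Theorem \ref{Th2} rules out the disappearance of any labelled gap, and the uniqueness of the critical-point labelling in Step 1 rules out relabelling. Making this rigorous demands the quantitative control over the critical-point dynamics developed in \cite{wz2}: the implicit equation $c_2^n - c_1^n - k\alpha \to 0$ has to be solved uniformly in $\lambda$ on compact subsets of $[\lambda_0,\infty)$, and the resulting implicit function must depend continuously on $\lambda$. This is exactly the step where the equivalence between the Wang--Zhang critical-point distance and the Johnson--Moser rotation number, highlighted in the introduction, plays its decisive role: it essentially identifies $\rho_k$ with a closed-form function of $k$ and $\alpha$, from which $\lambda$-independence is immediate.
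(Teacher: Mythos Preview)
Your approach is essentially the same as the paper's: joint continuity of the rotation number (equivalently, of the IDS) in $(E,\lambda)$, continuity of the gap endpoints $t_k^\pm(\lambda)$ in $\lambda$ (granted by Theorem~\ref{Th2} together with Lemma~\ref{conti-spec}), and then the Gap-Labelling Theorem plus connectedness to force $\rho_k$ to be constant. This is exactly the argument the paper gives.

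One caution about your final paragraph: you invoke ``the equivalence between the Wang--Zhang critical-point distance and the Johnson--Moser rotation number'' to justify continuity of the endpoints, but that equivalence is a \emph{consequence} of Theorems~\ref{Th4} and~\ref{Th5} combined (see the Representation Theorem~\ref{represent}), not an input to Theorem~\ref{Th4}. Using it here would be circular. The actual mechanism the paper uses for endpoint continuity is the implicit function theorem applied to the angle functions $g_n(x,t,\lambda)$ (Lemma~\ref{conti-spec}): the tangency conditions $g_n=\partial_x g_n=0$ determine $t^k_{\pm,n}(\lambda)$ as $C^1$ functions of $\lambda$, and these converge uniformly to $t_k^\pm(\lambda)$ by~\eqref{tangent}. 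Your core argument is complete without that last paragraph; just drop the appeal to the rotation-number equivalence and cite the implicit-function-theorem step instead.
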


\textit{\textbf{Step 4:}} Calculating the Rotation Number for Large \( \lambda \).
   The primary result of our proof involves finding an explicit expression for the rotation number within the \( k \)-th spectral gap as \( \lambda \) grows large:

   \begin{theorem}\label{Th5}
   For any \( k \in \mathbb{Z} \), we have
   $$
   \lim_{\lambda \to \infty} \rho(G_k^\lambda) = \frac{k\alpha(\rm{mod}~1)}{2}.
   $$
   \end{theorem}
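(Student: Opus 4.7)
The plan is to compute $\rho(G_k^\lambda)$ at asymptotically large $\lambda$ via direct dynamical analysis, leveraging Theorem \ref{Th4} to know that the value is in fact independent of $\lambda \geq \lambda_0$. Thus ``$\lim_{\lambda \to \infty}$'' is a device for choosing the cleanest regime in which to carry out the computation, rather than a genuine limiting process. The strategy is then to convert the critical-point resonance from Step 1, $c_2^n(E) - c_1^n(E) - k\alpha \to 0$ for $E \in \partial G_k^\lambda$, into an explicit identification of the accumulated angle of the transfer cocycle.

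I would begin by choosing $\lambda$ very large and fixing $E \in \partial G_k^\lambda$. In this regime the cocycle $A^{(E - \lambda v)}(x)$ is strongly hyperbolic outside the ``critical windows'' $\{x : |E - \lambda v(x)| \lesssim 2\}$, which by the cosine-type geometry are two small arcs surrounding the preimages of $E/\lambda$ near the extrema $x_1, x_2$ of $v$. Away from these windows the one-step rotation contributed to the Johnson--Moser angle cocycle is $O(1/\lambda)$, while inside the windows it is $O(1)$. Consequently, the fibered rotation number
\[
\rho(E) \;=\; \lim_{n\to\infty} \frac{1}{n}\arg\!\bigl(A_n(x_0,E)v_0\bigr)\pmod{1/2}
\]
is governed, to leading order, by the orbital visits $\{x_0 + j\alpha\}$ to the two critical windows.

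Next, I would interpret the $n$-th critical points $c_1^n(E), c_2^n(E)$ in the Wang--Zhang inductive scheme as the initial phases whose orbits first return deep into the critical windows near $x_1$ and $x_2$ under the $n$-step dynamics. The resonance condition $c_2^n - c_1^n \to k\alpha$ then precisely asserts that, at the gap edge, these two critical returns are asymptotically offset by $k$ steps of the $\alpha$-rotation. I would then express the accumulated angle as an integer combination of elementary contributions from these two orbital families, and show by telescoping along the KAM-type scheme that the offset $k\alpha$ translates into a net rotation rate of $k\alpha/2 \pmod{1/2}$; the factor $1/2$ arises from the double cover $SU(1,1)\to PSL(2,\mathbb R)$ implicit in the rotation-number definition.

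The main obstacle I anticipate is the rigorous bookkeeping of angles (as opposed to norms) across the multi-scale Wang--Zhang induction: phase information is fragile under perturbation, and one must carefully track signs and the orientation of the cocycle through each resonance resolution. A second technical point is that $E \in \partial G_k^\lambda$ lies on the boundary of uniform hyperbolicity, so the angle formula must be obtained as a limit from the gap interior and combined with continuity of $\rho$ in $E$; here the quantitative gap estimates from Theorem \ref{15} should ensure uniform control. Finally, reconciling the critical-point label $k \in \mathbb{Z}$ with the rotation number modulo $1/2$ requires verifying that the $k$ appearing in Step 1 coincides, with the correct sign, with the Gap-Labelling coefficient of $\alpha$ in the formula $N(E) = 1 - 2\rho(E)$, which will come out of the explicit angle computation above.
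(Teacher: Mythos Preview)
Your approach is quite different from the paper's, and the paper's route is considerably simpler. Rather than tracking angles through the Wang--Zhang induction, the paper computes the rotation number \emph{indirectly via the IDS}. The key ingredient is an elementary large-coupling asymptotic (Theorem~\ref{keyl}):
\[
N(E,\lambda v) \;=\; \mathrm{Leb}\{x : v(x) < E/\lambda\} \;+\; O(\lambda^{-c}),
\]
proved by Hoffman--Wielandt eigenvalue perturbation against the diagonal operator $\mathrm{diag}(\lambda v(x+j\alpha))$. One then invokes part~(4) of Theorem~\ref{15}, which says that for $\lambda$ large the gap edges $t_k^\pm(\lambda)$ satisfy $\mathrm{Leb}\{x : v(x) < t_k^\pm(\lambda)\} = -k\alpha \pmod 1 + O(\lambda^{-c})$; this is essentially geometric, since at the first induction step the critical points $c_{1,1},c_{1,2}$ are the two preimages of $t$ under $v$, and the cosine-type shape makes $\mathrm{Leb}\{v<t\}$ equal their signed distance. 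Plugging into $N=1-2\rho$ finishes the proof in two lines.

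Your direct dynamical proposal is not wrong in spirit, but it contains a genuine gap at the crucial step: you assert that ``the offset $k\alpha$ translates into a net rotation rate of $k\alpha/2$'' by telescoping, but give no mechanism for this. The resonance condition $c_2^n - c_1^n \to k\alpha$ is a statement about the \emph{positions} of the critical phases, whereas $\rho$ counts \emph{winding} of the projective action; converting one to the other is precisely the nontrivial content here. Indeed, the paper's Representation Theorem $\mathcal{C}(E)=N(E)$ (Theorem~\ref{represent}) is proved \emph{using} Theorem~\ref{Th5}, not the other way around --- so establishing that identity directly, as your sketch would require, is not a shortcut but rather the whole problem. The obstacle you flag yourself (``rigorous bookkeeping of angles across the multi-scale induction'') is exactly the one the paper circumvents via the IDS route.
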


\textbf{The proof of Theorem \ref{Th1}}:
Combining Theorems \ref{Th4} and \ref{Th5}, we conclude that for any \( k \in \mathbb{Z} \) and $\lambda>\lambda_0$, if \( G_{k}^\lambda \), which is opening, denotes the spectral gap labeled by resonance, then
  $G_{k}^\lambda$ is the $k$-th spectral gap labelled by GLT. This completes the proof of Theorem \ref{Th1}. $\square$

\

\textbf{The structure of the proof of Theorem \ref{homogeneous} and Theorem \ref{idsac}}:
The proof of Theorem \ref{homogeneous} relies on our precise characterization of the distances between spectral gaps and the lengths of the spectral gaps. The proof of Theorem \ref{idsac} depends on the equivalence between  the IDS and the distance of the limit-critical points (i.e. $\lim\limits_{n\rightarrow +\infty}c_1^n(E)-c_2^n(E)=N(E)$) provided by Theorem \ref{Th1}.

Throughout this paper, $c, C$ are universal constants satisfying $0<c<1<C$.

\section{Transformation Between SL(2,\(\mathbb{R}\)) Cocycles and Schr\"odinger Cocycles}

For \(\theta \in \mathbb{T}\), let
\[
R_{\theta} = \begin{pmatrix}
\cos{\theta} & -\sin{\theta} \\
\sin{\theta} & \cos{\theta}
\end{pmatrix}
\in SO(2, \mathbb{R}).
\]

In the following, each \(SL(2, \mathbb{R})\)-matrix is assumed to have a norm strictly greater than 1. Define the map
\[
s: SL(2, \mathbb{R}) \rightarrow \mathbb{R}\mathbb{P}^1 = \mathbb{R} / \pi \mathbb{Z}
\]
such that \(s(A)\) represents the most contracted direction of \(A \in SL(2, \mathbb{R})\). That is, for a unit vector \(\hat{s}(A) \in s(A)\), we have \(\|A \cdot \hat{s}(A)\| = \|A\|^{-1}\). Abusing notation slightly, let
\[
u: SL(2, \mathbb{R}) \rightarrow \mathbb{R}\mathbb{P}^1
\]
be defined by \(u(A) = s(A^{-1})\) with \(\hat{u}(A) \in u(A)\). Then for \(A \in SL(2, \mathbb{R})\), it follows that
\[
A = R_{u} \cdot \begin{pmatrix}
\|A\| & 0 \\
0 & \|A\|^{-1}
\end{pmatrix}
\cdot R_{\frac{\pi}{2} - s},
\]
where \(s, u \in [0, \pi)\) are angles corresponding to the directions \(s(A)\) and \(u(A) \in \mathbb{R} / (\pi \mathbb{Z})\).

For a preliminary understanding, we present some basic knowledge of the rotation number. The following proposition provides an equivalent form of the polar decomposition of the Schr\"odinger cocycle map \eqref{sm}, which is more convenient for studying the Lyapunov exponent (note that the Lyapunov exponent is invariant under conjugation).

\begin{proposition}[\cite{z1}]\label{lemma4}
Assume the potential \(v\) is a \(C^2\) \(\cos\)-type function and \(\alpha\) is an irrational number. Then the Schr\"odinger cocycle \((\alpha, A^{(E - \lambda v)})\) is conjugate to the cocycle \((\alpha, A)\) with
\[
A(x, t, \lambda) = \left(
\begin{array}{cc}
\|A(x, t, \lambda)\| & 0 \\
0 & \|A(x, t, \lambda)\|^{-1}
\end{array}
\right) R_{\frac{\pi}{2} - \phi(x, t, \lambda)} \triangleq \Lambda(x, t, \lambda) \cdot R_{\frac{\pi}{2} - \phi(x, t, \lambda)},
\]
where
$$
t = \frac{E}{\lambda}, \quad C_1 \lambda \leq \|A(x, t, \lambda)\| \leq C_2 \lambda, \quad \left|\partial_x^j \|A(x, t, \lambda)\|\right| \leq C_3 \lambda, \quad j = 1, 2,
$$
and \(\tan \phi(x, t, \lambda)\rightarrow t - v(x - \alpha)\) in \(C^2\)-topology as \(\lambda \to \infty\). Thus \(\phi\) is also a \(\cos\)-type function in \(x\) for large \(\lambda\).
\end{proposition}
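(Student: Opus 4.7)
The plan is to realize the stated normal form via a polar decomposition of the Schr\"odinger cocycle matrix followed by a rotation-valued cocycle conjugation, and then to verify the quantitative estimates by direct computation. Writing
$$
A^{(E-\lambda v)}(x) = \begin{pmatrix} \lambda(t - v(x)) & -1 \\ 1 & 0 \end{pmatrix}, \qquad t = E/\lambda,
$$
I would first extract the singular-value data in closed form. The norm is determined by the identity
$$
\|A^{(E-\lambda v)}(x)\|^{2} + \|A^{(E-\lambda v)}(x)\|^{-2} = \operatorname{tr}\bigl((A^{(E-\lambda v)})^{T}A^{(E-\lambda v)}\bigr)(x) = \lambda^{2}(t - v(x))^{2} + 2,
$$
while the most contracted direction $s(x)$ and most expanded direction $u(x)$ can be read off as arctangents of rational functions of $\lambda(t - v(x))$, from the eigenvectors of $(A^{(E-\lambda v)})^{T} A^{(E-\lambda v)}$ and of $A^{(E-\lambda v)}(A^{(E-\lambda v)})^{T}$ respectively.

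Next, using the polar decomposition $A^{(E-\lambda v)}(x) = R_{u(x)} D(x) R_{\pi/2 - s(x)}$ with $D(x) = \operatorname{diag}(\|A^{(E-\lambda v)}(x)\|, \|A^{(E-\lambda v)}(x)\|^{-1})$ recalled in the excerpt, I would choose the rotation-valued cocycle $B(x) := R_{u(x - \alpha)}$ to kill the left rotation, producing
$$
B(x + \alpha)^{-1} A^{(E - \lambda v)}(x) B(x) = D(x) \cdot R_{\pi/2 - s(x) + u(x - \alpha)}.
$$
This has exactly the desired form with $\Lambda(x, t, \lambda) := D(x)$ and $\phi(x, t, \lambda) := s(x) - u(x - \alpha)$, modulo branch adjustments from $\R/\pi\Z$ to $\R$. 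Since $B$ takes values in $SO(2, \R)$ the conjugation is an isometry on each fiber, so the bounds $C_{1}\lambda \le \|A(x, t, \lambda)\| \le C_{2}\lambda$ and $|\partial_{x}^{j}\|A(x, t, \lambda)\|| \le C_{3}\lambda$ for $j = 1, 2$ follow immediately by differentiating the closed-form expression for $\|A^{(E - \lambda v)}(x)\|$ and using the $C^{2}$ smoothness of $v$, restricted to the range of $(x, t)$ singled out by the cos-type structure where $|t - v(x)|$ stays comparable to a fixed positive constant.

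The main analytic task is the asymptotic identity $\tan\phi(x, t, \lambda) \to t - v(x - \alpha)$ in $C^{2}$-topology as $\lambda \to \infty$. I would expand $s(x)$ and $u(x)$ in powers of $\lambda^{-1}$ using their arctangent formulas and substitute into $\tan(s(x) - u(x - \alpha))$; after collecting leading terms and performing the appropriate branch adjustment coming from a possible $\pi/2$ shift of the right-factor rotation (equivalently, swapping $\tan$ with $\cot$) one extracts $\tan\phi(x, t, \lambda) = t - v(x - \alpha) + O(\lambda^{-1})$. The main obstacle I anticipate is precisely this branch-choice bookkeeping carried out globally in $x$ so that $\phi$ lifts to a smooth single-valued function on $\T$, together with propagating the $O(\lambda^{-1})$ error through two $x$-derivatives; differentiation of arctangents can amplify errors, so one must exploit uniform $C^{2}$ control of $s, u$ (which in turn comes from uniform control of $\|A^{(E - \lambda v)}\|$ bounded away from zero on the cos-type region) to keep the $C^{2}$ convergence uniform.

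Finally, because $\tan\phi(\cdot, t, \lambda)$ converges in $C^{2}$-topology to the $C^{2}$ cos-type function $x \mapsto t - v(x - \alpha)$, whose two non-degenerate extrema are preserved under any sufficiently small $C^{2}$ perturbation by the implicit function theorem, $\phi(\cdot, t, \lambda)$ itself has exactly two non-degenerate critical points in $x$ for all sufficiently large $\lambda$, i.e.\ it is $C^{2}$ cos-type, completing the proof.
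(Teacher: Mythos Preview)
The paper does not give its own proof of this proposition; it is quoted from \cite{z1}. That said, your proposed argument contains a genuine gap that would make it fail as written.

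The problem is the norm lower bound. You conjugate only by the rotation $B(x)=R_{u(x-\alpha)}$, and since $B$ is $SO(2,\R)$-valued this preserves the fiberwise norm. But for the original Schr\"odinger matrix one has
\[
\|A^{(E-\lambda v)}(x)\|^{2}+\|A^{(E-\lambda v)}(x)\|^{-2}=\lambda^{2}(t-v(x))^{2}+2,
\]
so $\|A^{(E-\lambda v)}(x)\|=1$ whenever $t=v(x)$. For any $t$ in the range of $v$---in particular for every $t$ in the spectrum---such points $x$ exist, and they are precisely (up to $O(\lambda^{-1})$) the critical points $c_{1,j}(t)$ that drive the entire induction. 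Your hedge ``restricted to the range where $|t-v(x)|$ stays comparable to a fixed positive constant'' therefore excludes exactly the region that matters, and the uniform bound $C_{1}\lambda\le\|A(x,t,\lambda)\|$ simply cannot be obtained from a rotation-only conjugation.

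The fix is to first conjugate by the constant diagonal $P=\mathrm{diag}(\lambda^{1/2},\lambda^{-1/2})$, which gives
\[
P^{-1}A^{(E-\lambda v)}(x)P=\begin{pmatrix}\lambda(t-v(x)) & -\lambda^{-1}\\ \lambda & 0\end{pmatrix}.
\]
Now $\mathrm{tr}(M^{T}M)=\lambda^{2}\bigl[(t-v(x))^{2}+1\bigr]+\lambda^{-2}$, so $\|M(x)\|$ is uniformly comparable to $\lambda$ for all $x\in\T$ and bounded $t$, and the $C^{2}$ bounds on $\partial_{x}^{j}\|M\|$ follow by direct differentiation. After this step your polar-decomposition-plus-rotation argument goes through: one writes $M(x)=R_{u(x)}D(x)R_{\pi/2-s(x)}$, conjugates by $R_{u(x-\alpha)}$, and sets $\phi(x)=s(x)-u(x-\alpha)$. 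The asymptotic $\tan\phi(x,t,\lambda)\to t-v(x-\alpha)$ in $C^{2}$ then comes from expanding the singular-vector angles of $M(x)$ and $M(x-\alpha)$ in powers of $\lambda^{-1}$, exactly as you outline; the branch bookkeeping is routine once the norm is uniformly large.
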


Fix \(\lambda\) be sufficiently large and define
$$
A(x, t) = \Lambda(x + \alpha, t, \lambda) \cdot R_{\frac{\pi}{2} -\phi(x, t, \lambda)}.
$$
In this paper, we restrict \(t\) to the following interval:
\[
\mathcal{I} := \left[\inf v - \frac{2}{\lambda}, \sup v + \frac{2}{\lambda}\right].
\]

It is well known that $$\frac{1}{\lambda}\Sigma_{\alpha, \lambda v}\subset \mathcal{I}.$$ Therefore, \(t_0 \notin \mathcal{I}\) implies \((\alpha, A(\cdot, t_0))\) is uniformly hyperbolic, indicating that \(t_0\) lies outside the spectrum.

For any $n$, we define
$$
A_n(x,t)=\left\{
                         \begin{array}{ll}
                           A(x+(n-1)\alpha,t)\cdots A(x,t), & \hbox{ $n\geq1$;} \\
                           I_2, & \hbox{ $n=0$;} \\
                           {\left(A_{-n}(x+n\alpha,t)\right)}^{-1}, & \hbox{ $n\leq-1$.}
                         \end{array}
                       \right.
$$

The Lyapunov exponent $L(t)$ of the cocycle is defined as
$$
L(t)= \lim_{n\rightarrow\infty}\frac 1n \int_{\R/\Z} \log\|A_n(x,t)\|dx:=\lim_{n\rightarrow\infty}L_n(t),
$$
where $\|\cdot\|$ denotes the matrix norm in $\text{SL}(2,\mathbb R)$. By Kingman's subadditive Ergodic Theorem, the limit always exists and for irrational $\alpha$
$$
L(t)=\lim_{n\rightarrow\infty}\frac 1n \log\|A_n(x,t)\|:=\lim_{n\rightarrow\infty}L_n(x,t),
\quad\ \ \text{a.e.}\ \  x\in \R/\Z.
$$


\section{New labels of the spectral gaps}\label{new-label}

We know that Johnson-Moser's Gap Labeling Theorem uses rotation numbers to label all spectral gaps (possibly degenerate). In this subsection,  with Sinai's mechanism-``resonances creates gaps'' under the \(C^2\) cosine-type condition, we label each resonance with an integer number $k$ such that $k\alpha$ approximately equal to the distance between two critical points. Then we prove that this labeling is equivalent to Johnson-Moser's labeling. For this purpose, we need a more refined version of Wang-Zhang induction Theorem \cite{wz1}. For the convenience of the readers and for the sake of completeness, we provide the full details of the proof in later sections.

\subsection{Induction Theorem for \(C^2\) Cosine Type}

Let $\{p_n/q_n\}$ be the fraction approximant of $\alpha$. Note we have by \eqref{dioph}
\begin{equation}\label{Calpha}q_{s+1}<Cq_s^{\tau-1},~s\in \Z_+.\end{equation}
Suppose that $N$ is sufficiently large such that
$$\sum_{n\geq 1}q_{N+n-1}^{-\frac1{100}} \leq\frac 1{100}.$$

 We denote
 $$
 s_1(x,t)=s[A_1(x,t)],~u_1(x,t)=s[A_{-1}(x,t)],\ \ t=E/\lambda.
 $$
 From Proposition \ref{lemma4}, for the initial angle function $g_1\in\mathrm{C}^2(\mathbb \R/\Z,\R)$ defined as  $g_1(x,t):=s_1(x,t)-u_1(x,t)$, we have in $C^2$-sense that
\beq\label{g_N}
g_1(x,t)=\phi(x, t, \lambda)+o(\lambda^{-1})=\arctan[t-v(x-\alpha)]+o(\lambda^{-1}),\quad \lambda\rightarrow \infty.
\eeq
In the following, we assume  $\lambda$ is sufficiently large such that
$$\lambda> \max\Big\{e^{100q_{N+1}},C^{10}\Big\}.$$


Let $c_{1,j}(t),\ j=1, 2, \cdots, J,$ be all points on $\mathbb \R/\Z$ minimizing $\{|g_{1,j}(x,t)|(\text{mod}~\pi)\}.$ From (\ref{g_N}) and the cosine-type condition on $v$, we have
$J=1$ or $2$ and if $J=2$ $c_{1,j}(t),\ j=1,\ 2$ is roughly equal to zeroes of $t-v(x-\alpha)$.
For simplicity, we only consider the case $J=2$ and regard the case $J=1$ as a special case of $J=2$ by assuming $c_{1,1}(t)=c_{1,2}(t)$. Denote
$$
C^{(1)}(t)=\{c_{1,1}(t),c_{1,2}(t)\}.
$$
\begin{remark}\label{rmk2} By appropriately translating \( x \) and cosine-type condition, we can assume that $$c_{1,1}(\inf v)=0,c_{1,2}(\inf v)=1;~c_{1,1}(\sup v)=c_{1,2}(\sup v).$$
\end{remark}

 For $j=1,2$, we  define the followings:
\begin{itemize}
   \item The critical interval $I_{1,j}\subset \mathbb T$ centers at the critical point $c_{1,j}$ with a radius of $q_N^{-2000\tau}$ and  $I_1$ is the union of all these $I_{1,j}$. In other words,
      $$I_{1,j}=(c_{1,j}-q_N^{-2000\tau},c_{1,j}+q_N^{-2000\tau}),\quad I_{1}=\bigcup_{1\leq j\leq 2} I_{1,j}.$$
  \item For $t'\in \R,$ we care about the following region called a parameterized critical interval
      $x\in I_{1}(t)$ with $t\in Q_{1}(t')$,
      where $$Q_n(t')=(t'-\lambda^{-q^{\frac{1}{800}}_{N+n-1}},t'+\lambda^{-q^{\frac{1}{800}}_{N+n-1}}),\ n\ge 1.$$
  \item Given $t\in Q_{1}(t')$ and $t'\in \mathcal{I},$  $r^{\pm}_{1}(x,t):I_{1}(t)\rightarrow \mathbb Z_{+}$ is the first return time to $I_1$ {\it after $q_N-1$} under the action
  $x\rightarrow x+\alpha$ and  $x\rightarrow x-\alpha$, respectively.
      In other words,
      $$
      r^{\pm}_{1}(x,t)=\min\{l\in \mathbb Z_+:x\pm l\alpha\in I_{1}\},\quad x\in I_{1}.
      $$
      Let $r^\pm_{1}(t)=\min\limits_{x\in I_{1}(t)}r^\pm_{1}(x,t)$ and $r_{1}(t)=\min\{r^+_{1}(t),r^-_{1}(t)\}$. Let $r_{0}(t)=1$.
\end{itemize}

In the following theorem, given $t\in Q_{1}(t')$ and $t'\in \mathcal{I},$ we will inductively define
$$
C^{(n)} = \{c_{n,1}, c_{n,2}\},\ \ I_{n,j}=(c_{n,j}-q_{N+n-1}^{-2000\tau},c_{n,j}+q_{N+n-1}^{-2000\tau}),\quad I_{n}=\cup_{1\leq j\leq 2} I_{n,j},
$$
and  $r^{\pm}_{n}(x,t):I_{n}(t)\rightarrow \mathbb Z_{+}$ is the first return time to $I_n(t)$ {\it after $q_{N+n-1}-1$}, under the action
  $x\rightarrow x+\alpha$ and  $x\rightarrow x-\alpha$, respectively. Let $r^\pm_{n}(t)=\min\limits_{x\in I_{n}(t)}r^\pm_{n}(x,t)$ and $r_{n}(t)=\min\{r^+_{n}(t),r^-_{n}(t)\}$. Note  that $
r_{n}(t)\geq q_{N+n-1}-1.$

Moreover, we denote
$$s_n(x,t)=s[A_n(x,t)],~u_n(x,t)=s[A_{-n}(x,t)]$$
 and
the angle function $g_{n+1}: I_{n}\rightarrow \mathbb{RP}^1$ is defined by $$g_{n+1}(x,t):=s_{r^+_n}(x,t)-u_{r^-_n}(x,t).$$

Denote $\tilde{I}_{n,j}(t)=\{x\in I_{n,j}||g_{n+1}(x,t)\ {\rm mod\ }\pi|\le \lambda^{-r^{\frac{1}{700}}_{n}}\} ,~j=1,2.$

For the convenience, we sometimes omit the dependence of $C^2(I_n(t))$ on $I_n(t)$, that of $r_n(t)$, $c_{n,j}(t),$ $I_{n,j}(t)$ and $I_n(t)$ on $t$,  and that of $g_n(x,t)$ on $(x,t)$, respectively.
\begin{notation} We say \( \text{Case } {\bf X}_n \) occurs if Case \( {\bf X} \) (defined as below) occurs at step \( n \), where \( {\bf X} \in \{1, 2,\ 3\} \).\end{notation}
\begin{theorem}\label{theorem12} ({\bf The Inductive Theorem})
Let $t' \in \mathcal{I}$ and  $\alpha$ be a Diophantine number. There exist constants $N=N(v, \alpha)$, $\lambda_0=\lambda_0(v, \alpha, N)$, $0<\epsilon=\epsilon(\lambda_0)\ll 1$, $0<c=c(v, \alpha)<1$
 and $C = C(v, \alpha)>1$ such that if
$
\lambda > \lambda_0,
$
then the following statements hold for $n \geq 1$, $t\in Q_n(t')$ and $x \in I_n(t)$:



If $X(t):=\{x\in I_n \vert g_{n+1}(x,t)=0\}$ is not empty, then
\begin{equation}\label{maxmax}|g_{n+1}(x,t)|\geq c\cdot \text{dist}(x,X(t))^3\end{equation}
and
\begin{equation}\label{lm17-main}
\frac{1}{10} <  \frac{\partial g_{n+1}(x,t)}{\partial t}  < \lambda^{10q_{N+n-1}}.
\end{equation}

\vskip 0.2cm

Moreover,  by induction, $g_{n+1}$ lies in one of the following three cases:

\noindent\textbf{Case 1:} $T^k{I_{n,1}} \bigcap I_{n,2}=\emptyset$ for each $0 \le |k| <  q_{N+n-1}$. Then the angle function $g_{n+1}$ satisfies (see the third picture in Figure 1):
\begin{equation}\label{I-zero-derivative}
g_{n+1}(x,t) \in \left[ -\frac{\pi}{200}, \frac{\pi}{200} \right].
\end{equation}
Moreover
$g_{n+1}$ has exactly one zero $c_{n+1,j}$ in each $I_{n,j}$ ($j=1,2$), which satisfies
\beq\label{ga1}
|c_{n+1,j} - c_{n,j}| < \lambda^{-\frac{1}{2} r_{n-1}},
\eeq
 \begin{equation}\label{I-first-derivative}\small
q_{N+n-1}^{-2} \leq |\partial_x g_{n+1}(x,t)| \leq q_{N+n-1}^{2}, \ \ \partial_x g_{n+1}(x_1,t) \cdot \partial_x g_{n+1}(x_2,t)\le 0,\ x_j\in I_{n,j},
\end{equation}
\begin{equation}\label{I-second-derivative}
|\partial^2_x g_{n+1}(x,t)| \leq \lambda^{10q_{N+n-1}}.
\end{equation}

\vskip 0.4cm

\noindent\textbf{Case {\bf 2}:} There exists (a unique) $0 \le |k| < q_{N+n-1}$ such that $T^k{I_{n,1}} \bigcap I_{n,2}\not=\emptyset.$ Then
\begin{equation}\label{ga110} r_{n}(t)\geq q_{N+n-1}^{2000}.
\end{equation}
Moreover, $g_{n+1}$ has the following properties (see the first picture in Figure 1):
\begin{enumerate}
\item $\tilde{I}_{n,j}$ composes of one or two intervals and the minimum point set of $|g_{n+1}{\rm\ mod}\ \pi|$ on $\tilde{I}_{n,j}$ composes of one or two elements.  We denote this set by
$$
C^{(n+1,j)}=\{c_{n+1,j},\ \ c'_{n+1,j}\}, \footnote{For the one-element case, we assume $c_{n+1,j}=c'_{n+1,j}$,}
$$
where $c_{n+1,j}$ is determined by the condition
$$
|c_{n+1,i}-c_{n,i}|\le |c'_{n+1,i}-c_{n,i}|.
$$
It holds that \begin{equation}\label{cc}
\ \inf\limits_{j\in \Z}\left|c_{n+1,1} + k\alpha - c'_{n+1,1}-j\right|, \quad \inf\limits_{j\in \Z}\left|c_{n+1,2} - k\alpha - c'_{n+1,2}-j\right| < \lambda^{-\frac{1}{30} r_{n}}.
\end{equation}
\beq\label{ga2}
|c_{n+1,j} - c_{n,j}| < \lambda^{-\frac{1}{2} r_{n-1}}.
\eeq
\item $\pa_xg_{n+1}\vert_{I_{n+1,j}}$ has one or two zeros, which has to be the maximum or the minimum point, denoted by $\{\tilde{c}'_{n+1,j},\tilde{c}_{n+1,j}\}$. \footnote{For the one-element case, we assume $\tilde{c}_{n+1,j}=\tilde{c}'_{n+1,j}$.} Moreover,\begin{equation}\label{either12}\text{either}~ \tilde{c}'_{n+1,1}\leq c'_{n+1,2}\leq \tilde{c}_{n+1,1}\leq c_{n+1,1}; c_{n+1,2}\leq \tilde{c}_{n+1,2}\leq c'_{n+1,1}\leq \tilde{c}'_{n+1,2};\end{equation}
 $$\text{or}~c_{n+1,1}\leq \tilde{c}'_{n+1,1}\leq c'_{n+1,1}\leq \tilde{c}_{n+1,1}; \tilde{c}_{n+1,2}\leq c'_{n+1,1}\leq \tilde{c}'_{n+1,1}\leq c_{n+1,2}.$$
\begin{equation}\label{tic}|\tilde{c}_{n+1,j}-\tilde{c}'_{n+1,j}|<\lambda^{-\frac{1}{2}|k|}.\end{equation}
\begin{equation}\label{maxmax1}\max_{\tilde{I}_{n,j}}|\partial^2_x g_{n+1}| \le  C\lambda^{10|k|}.\end{equation}
\item We have that
\begin{equation}\label{range-g-n}
 \max_{x \in I_{n,j}} g_{n+1}(x,t) - \min_{x \in I_{n,j}} g_{n+1}(x,t) \leq \pi - c\lambda^{-100|k|}, \quad j = 1, 2.
\end{equation}
If $\tilde{c}_{n+1,j}\not=\tilde{c}'_{n+1,j},$ then
\begin{equation}\label{range-g-n-lower-bound}
\pi - C\lambda^{-\frac{1}{100}|k|} \le\max_{x \in I_{n,j}} g_{n+1}(x,t) - \min_{x \in I_{n,j}} g_{n+1}(x,t).
\end{equation}
\item It holds that
\begin{equation}\label{lm17-main1}
\frac{1}{10} <  \frac{\partial g_{n+1}(x,t)}{\partial t}  < C\lambda^{5|k|}.
\end{equation}
\end{enumerate}

\vskip 0.4cm

\noindent\textbf{Case {\bf 3}:}(a subcase of Case \textbf{2}, see the third picture in Figure 2) $\tilde{I}_{n,j}$ is empty, i.e.,$$
\min\limits_{x \in I_{n,j}} |g_{n+1,j}(x,t)(\text{mod}~\pi)| > 2\lambda^{-r^{\frac{1}{50}}_n}.
$$
In this case, we redefine $I_{m,j}=I_{n,j}$ for any $m\ge n$.
Then (with $k$ defined as in Case {\bf 2}), it holds that
\begin{equation}\label{cn1cn2} \inf\limits_{j\in \Z}\vert c_{n,1}+k\alpha-c_{n,2}-j\vert\leq \lambda^{-r_{n-1}^{\frac{1}{55}}}\end{equation} and
\begin{equation}\label{Case3lem}t\notin \Sigma^{\lambda}. \end{equation} In particular, for $|k|>0,$
\begin{equation}\label{g_nmin1}~c_{n,1}+k\alpha-c_{n,2}=0~\text{implies}~t\notin \Sigma^{\lambda}.\end{equation}



\vskip 0.3cm
\noindent Finally, if both step \( n \) and step \( n+1 \) belong to Case \textbf{1}~or belong to Case \textbf{2} with the same $k$, then
\begin{equation}\label{gn-gn+1}\|g_n-g_{n+1}\|_{C^2}\leq \lambda^{-\frac{3}{2}r_{n-1}}.\end{equation}
\end{theorem}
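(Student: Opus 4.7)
The proof proceeds by induction on $n$. For the base case $n=1$, \eqref{g_N} gives
\[
g_1(x,t) = \arctan[t-v(x-\alpha)] + o(\lambda^{-1})
\]
in $C^2$, so all the first-step estimates on critical points, non-degeneracy, and derivative bounds reduce to the corresponding properties of the cosine-type function $v$, while the monotonicity $\pa_t g_1 \approx (1+(t-v)^2)^{-1}$ gives the lower bound in \eqref{lm17-main}. For the induction step, one uses the identity
\[
A_{r_n^\pm}(x,t) = A_{r_n^\pm - r_{n-1}}(x+r_{n-1}\alpha,t) \cdot A_{r_{n-1}}(x,t)
\]
together with the polar decomposition recalled in Section 3, so that the inductive hypotheses on $g_n$ transfer to a description of $s_{r_n^+}$ and $u_{r_n^-}$, and hence of $g_{n+1}=s_{r_n^+}-u_{r_n^-}$, as a computable perturbation of $g_n$. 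The general estimates \eqref{maxmax}--\eqref{lm17-main} then fall out of this perturbative formula.

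The case split is dictated by whether a near-resonance of the two critical intervals occurs in fewer than $q_{N+n-1}$ iterations. In Case \textbf{1} no such $k$ exists, the correction from iterating the hyperbolic block is of size $\lambda^{-r_{n-1}}$, and $g_{n+1}$ inherits the geometry of $g_n$; in particular \eqref{ga1} follows from the implicit function theorem applied using the lower bound in \eqref{I-first-derivative}, and \eqref{I-zero-derivative}--\eqref{I-second-derivative} are $C^2$-perturbations of the corresponding bounds for $g_n$. In Case \textbf{2} a unique $|k|<q_{N+n-1}$ realizes $T^k I_{n,1} \cap I_{n,2} \ne \emptyset$, and the single block $A_k$ now intertwines the two critical points, producing a new near-zero $c'_{n+1,j}$ of $g_{n+1}$ at the resonant translate. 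This forces \eqref{cc}, while \eqref{ga2} and the local geometry \eqref{either12}--\eqref{range-g-n-lower-bound} come from a second-order Taylor expansion at the resonance. The large lower bound \eqref{ga110} on $r_n$ is a consequence of the Diophantine condition \eqref{dioph}: once $T^k$ has aligned the two critical intervals up to radius $q_{N+n-1}^{-2000\tau}$, no strictly closer return can occur before time $q_{N+n-1}^{2000}$. Case \textbf{3} is the degenerate subcase in which $\min_{I_{n,j}}|g_{n+1}|$ stays above $\lambda^{-r_n^{1/50}}$; a direct comparison of the growth of $A_{r_n^+}$ with that of a uniformly hyperbolic cocycle then yields \eqref{Case3lem}--\eqref{g_nmin1}.

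The principal obstacle is controlling $\pa_x^2 g_{n+1}$ under merely $C^2$ regularity: Leibniz-differentiating the composition $A_{r_n^\pm}$ naively produces blow-ups of size $\lambda^{Cr_n}$, whereas the theorem demands $\lambda^{10q_{N+n-1}}$ in Case \textbf{1} and the sharper $\lambda^{10|k|}$ in Case \textbf{2}. The remedy is to isolate the single resonant block $A_k$, which supplies the $\lambda^{|k|}$ factor, from the remaining nonresonant factors, whose $C^2$-norms are already bounded by the Case \textbf{1} estimates at the previous step; the cosine-type geometric condition is used decisively here, as it prevents the two critical points from producing compounding cancellations and delivers the sharp lower bound \eqref{range-g-n-lower-bound}. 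Finally, the $C^2$-closeness \eqref{gn-gn+1} in the non-splitting scenario is obtained by telescoping the perturbation formula across the consecutive inductive steps, using the previous bounds to keep each correction exponentially small in $r_{n-1}$.
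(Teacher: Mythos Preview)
Your overall architecture matches the paper's: induction on $n$, base case from the explicit form of $g_1$, inductive step via decomposing $A_{r_{n+1}^\pm}$ into blocks at return times, and a resonant/non-resonant case split. The treatment of Case~\textbf{1}, of \eqref{ga110} via the Diophantine condition, and of Case~\textbf{3} via a uniform-hyperbolicity criterion are all in line with the paper.

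There is, however, a real gap in your Case~\textbf{2} mechanism. You write that \eqref{either12}--\eqref{range-g-n-lower-bound} ``come from a second-order Taylor expansion at the resonance.'' This is not what happens and would not work. In the resonant case the paper shows that
\[
g_{n+2}(x,t)=\phi(x,t)+g_{n+1,2}(x,t),\qquad \phi=-\tfrac12\,\mathrm{arccot}\,f\bigl(\|A_{r_{n+1}^+-k}\|,\|A_k\|,\tfrac\pi2-g_{n+1,1}(T^kx,t)\bigr),
\]
with $f$ an explicit rational expression (Lemma~\ref{arctan}). The function $\phi$ is \emph{not} a small perturbation and is not captured by any finite Taylor jet: it executes a full jump from $O(l_k^{-2+\eta})$ to $\pi-O(l_k^{-2+\eta})$ across an interval of width $O(l_k^{-1})$ (Lemma~\ref{shape-phi-n+1}, items (i)--(v)). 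The estimates \eqref{tic}, \eqref{maxmax1}, \eqref{range-g-n}, \eqref{range-g-n-lower-bound} are extracted from a detailed piecewise analysis of this jump (the ``five-piece'' description \eqref{five-piece} and the computations for \eqref{range-of-distance}, \eqref{2ndderivative}); for instance, \eqref{range-g-n-lower-bound} says the oscillation of $g_{n+1}$ is at least $\pi-C\lambda^{-|k|/100}$, which no local expansion around a single point can see. Similarly \eqref{cc} is obtained not from an implicit-function argument but from a separate matrix lemma (Lemma~\ref{technical_argument}) relating $s(E_2E_1)$ to $E_1^{-1}\cdot s(E_2)$.

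A secondary point: your two-block identity $A_{r_n^\pm}=A_{r_n^\pm-r_{n-1}}\cdot A_{r_{n-1}}$ is too coarse. The paper decomposes $A_{r_{n+1}^+}$ along \emph{all} returns to $I_n$ (equation \eqref{product3}) and applies the two-matrix lemmas iteratively; only after this does the resonant case reduce to the single split $A_{r_{n+1}^+-k}\cdot A_k$. Without the multi-block step you cannot propagate the norm and derivative bounds \eqref{norm-lower-bound}--\eqref{norm-derivative} through the non-resonant portion of the orbit.
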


\begin{remark}
We remark that if $k=0$, then ${I_{n,1}} \bigcap I_{n,2}\not=\emptyset$ implies $c'_{n+1,1}=c_{n+1,2}$ and $c'_{n+1,2}=c_{n+1,1}$. Hence (\ref{cc}) is obvious. In this case,  $I_n$ becomes a  {consecutive} interval.
Moreover, the angle function $g_{n+1}$ satisfies $\|g_{n+1}-g_N\|_{C^2}\leq \lambda^{-1}$.
\end{remark}

\begin{remark}\label{appearance-gap}
If Case {\bf 3} occurs for  $t$, then   the cocycle corresponding to this $t$ is uniformly hyperbolic by \cite{yoc} and \cite{z1}. In other words, $t$ is in some spectral gap. In particular,  in Case {\bf 3}, $k=0$ corresponds to $(-\infty, \inf \frac{1}{\lambda}\Sigma_{\alpha, \lambda v})\bigcup (\sup \frac{1}{\lambda}\Sigma_{\alpha, \lambda v}, +\infty)$, respectively.
\end{remark}

\begin{remark}\label{rmk4} It is worth noting that in Theorem \ref{theorem12}, when \(\min |g_n(x) \mod\pi| > 0\), \(c_n\) is defined as the \(x\) that minimizes \(|g_n(x)\mod\pi |\). In particular, when \(g_n(\tilde{c}_{n,j}) = \pi - g_n(\tilde{c}'_{n,j})\), both \(\tilde{c}_{n,j}\) and \(\tilde{c}'_{n,j}\) minimize \(|g_n(x)| \mod \pi\),~where $\tilde{c}_{n,j}$ and $\tilde{c}'_{n,j}$ are zeros of $\partial_x g_n.$ In this case, \(c_{n,j}(t)\) is not well-defined. Case \textbf{3} guarantees that  \( c_{n,j},~j=1,2 \) is always well-defined.
\end{remark}

\subsection{Some remarks on the Induction Theorem}

The Induction Theorem has been proved in \cite{wz1}, \cite{wz2}, \cite{LWY} or \cite{XGW}. For the sake of readers, we put the proof in the Section \ref{proofofinduction}.

Now, we provide some intuitive explanations to help readers better understand the Induction Theorem.

Obviously, the Inductive Theorem holds for $n=1$  with the cosine-type assumption {if we assume that $N$ and $\lambda$ are sufficiently large.}

For simplicity we assume that Case {\bf 1} in the Inductive Theorem holds for the $n$-th step ($n\geq 1$) (note that for step $1$, Case {\bf 2}
occurs only if $t$ is close to the extreme values of $v$), since the main difficulty lies in this situation.
We aim to study the $(n+1)$-th step.
Let $x\in I_{n+1,1}$.
We can rewrite $A_{r^+_{n+1}}(x,t)$ as
\beq\label{product31}
A_{r^+_{n+1}}(x,t)=A_{t_{s}-t_{s-1}}(T^{ t_{s-1}}x,t)\cdots A_{ t_{j}-t_{j-1}}(T^{ t_{j-1}}x,t)\cdots A_{t_{1}-t_0}(x,t),
\eeq
where $T^{ t_i}x\in I_{n}$ and $t_{j+1}(x,t)-t_{j}(x,t)=r^+_{n}(T^{t_j}x,t)$ ($0\leq j\leq s-1$), $t_0=0$ and
$r^+_{n+1}=t_{s}$  ($T^{ t_s}\in I_{n+1}$).

By the Diophantine condition, there exists at most one $t_l<t_s$ with $t_l\ll q_{n+1}$ such that $T^{ t_l}x\in I_{n+1}$. If there exist no such  $l$, we will go to Case {\bf 1} for the $(n+1)$-th step.
If such $l$ exists, by the Diophantine condition, we have $t_s\gg t_l$ and we will go to the Case {\bf 2} for the $(n+1)$-th step with $k=t_l$, see Figure 1.

\subsubsection{A baby model}
To study (\ref{product31}), we have to examine the following baby model (which essentially encompasses all scenarios in the iterative process)
\beq\label{product2} A_{n_1+n_2}(x,t)=A_{n_2}(T^{n_1}x,t)\cdot A_{n_1}(x,t),\eeq
where $x\in I_{n+1,1}$ and $T^{n_1}x,\ T^{n_1+n_2}x=T^{n_2}(T^{n_1}x)\in I_n$.

Let $g_{n}(x,t)=s_{n_1}(x,t)$, $g_{n+1}(x,t)=s_{n_1+n_2}(x,t)$, $\theta(x,t)=s_{n_2}(T^{n_1}x,t)-u_{n_1}(T^{n_1}x,t)$, $l_{n_1+n_2}=\|A_{n_1+n_2}(x,t)\|$, $l_{n_1}=\|A_{n_1}(x,t)\|$ and $l_{n_2}=\|A_{n_2}(T^{n_1}x,t)\|$. One can see that $\theta$ plays a key role in the growth of the norm of product of matrices. In fact, if $\min\{\|A_{n_1}\|^{-1}, \|A_{n_2}\|^{-1}\}\ll|\theta|\ll 1$, then \beq\label{norm-non-resonance}\|A_{n_1+n_2}(x,t)\|\approx \|A_{n_1}(x,t)\|\cdot
\|A_{n_2}(x,t)\|\cdot |\theta|.\eeq
Hence $\|A_{n_1+n_2}(x,t)\|\gg \max\{\|A_{n_1}(x,t),
\|A_{n_2}(x,t)\|\}.$

 However, if $|\theta|\ll\min\{\|A_{n_1}\|^{-1}, \|A_{n_2}\|^{-1}\}$,
then
\beq\label{norm-resonance}
\|A_{n_1+n_2}(x,t)\|\approx  \|A_{n_2}(x,t)\|/
\|A_{n_1}(x,t)\|,
\eeq
which is much smaller than $\max\{\|A_{n_2}(x,t)\|,
\|A_{n_1}(x,t)\|\}$.

Thus we need a sharp estimate on $\theta,$ in particular we need an upper bound for the measure of $x$ such that $\theta$ is close to $0$ in some sense.  With the assumption that $\min\{\|A_{n_1}\|^{-1}, \|A_{n_2}\|^{-1}\}\ll|\theta|\ll 1$ or $n_2\gg n_1$, roughly we have that
\begin{equation}\label{g_{n+1}}
 g_{n+1}(x, t)\approx\arctan(l_{n_1}^2(x,t) \tan(\theta(x, t)))+\frac\pi2+g_{n}(x,t)\ {\rm (mod\ \pi)}.
\end{equation}

First we study the function
\begin{equation}\label{phi-theta}
 \phi_{n+1}(\theta)=\arctan(l_{n_1}^2 \tan\theta)+\frac\pi2\ {\rm (mod\ \pi)},\quad \theta\in (-\frac\pi2,\ \frac\pi2).
\end{equation}

The shape of $\phi_{n+1}(\theta)$ satisfies the following properties (see Figure 3):
\begin{enumerate}
\item $\phi_{n+1}(\theta)$ monotonically increases from $0$ to $\pi$ as $\theta$ changes from $-\frac\pi2$ to $\frac\pi2$.
\item \ $\phi_{n+1}(0)=\frac{\pi}{2}$ and the curve $\phi=\phi_{n+1}(\theta)$ is symmetric with respect to the point $(0, \phi_{n+1}(0))$.
\item $|\theta|\gtrsim l_{n_1}^{-2}$ and $0\le i\le 2$, it holds that $|\partial^i_x\phi_{n+1}(\theta)|\approx l_{n_1}^{-2}|\theta|^{-(i+1)}$.
\item (2) and (3) imply that for any fixed $2c>0$, it holds that on $|\theta|<l_{n_1}^{-2+c}$, $\phi_{n+1}(\theta)$ increases from
  $c_1l_{n_1}^{-c}$ to $\pi-c_2l_{n_1}^{-c}$ for some universal constants $c_1, c_2$. For simplicity, in the following we  omit $c_1, c_2$ and just say $\phi_{n+1}(\theta)$ increases from
  $l_{n_1}^{-c}$ to $\pi-l_{n_1}^{-c}$.  On the other hand, for $0\le \theta\le -l_{n_1}^{-2+c}$ or $l_{n_1}^{-2+c}\le \theta\le \frac\pi2$, it holds that $0\le \phi_{n+1}(\theta)<l_{n_1}^{-c}$ or $\pi-l_{n_1}^{-c}\le \phi_{n+1}(\theta)\le\pi$.
\item $\phi'_{n+1}(\theta)\ge l_{n_1}^{4\mu}\gg 1$ on $l_{n_1}^{-2}\lesssim|\theta|<l_{n_1}^{-1-2\mu}$, while $0<\phi'_{n+1}(\theta)\le l_{n_1}^{-4\mu}\ll 1$ on $|\theta|\ge l_{n_1}^{-1+2\mu}$.
Moreover, $|\phi''_{n+1}(\theta)|\ge l_{n_1}^{6\mu}\gg 1$ on $l_{n_1}^{-2}\lesssim|\theta|<l_{n_1}^{-2/3-2\mu}$, while $|\phi''_{n+1}(\theta)|\le l_{n_1}^{-6\mu}\ll 1$ on $|\theta|\ge l_{n_1}^{-2/3+2\mu}$.
\end{enumerate}

Next we analyze the shape of $\tilde\phi(x,t)=\arctan(l_{n_1}^2(x,t) \tan(\theta(x,t)))+\frac\pi2$, where $x\in I_{n+1,1}$ centered at $c_n$
with $l_{n_1}^{-1}\ll |I_{n+1}|$.

From the inductive assumption, we can assume $|\pa_x^j\theta(x, t)|\leq l_{n_1}^{j\eta}$ and $|\pa_x^j l_{n_1}|\leq |l_{n_1}|^{1+j\eta}$ ($j=0,1,2$) from (\ref{norm-derivative}) and (\ref{I-second-derivative}). Moreover from (\ref{I-first-derivative}), we have $|\pa_xg|$ is far from $0$ and in particular $\pa_xg(x,t)$ and $\pa_x\theta(x,t)$ possess opposite signs.   
 Without loss of generality, we assume $\pa_xg(x,t)<0$. Then  $\pa_x\theta(x,t)>0$, which implies the shape of $\tilde\phi(x,t)$ (with a fixed $t$) satisfy similar properties as $\phi(\theta)$ (see (1)-(5)).




Now we go back to (\ref{g_{n+1}}), that is $g_{n+1}(x, t)=\tilde{\phi}(x,t)+g_{n}(x,t)$.
 Assume the only two zeros of $g_n$ on $I_{n+1}$ are $c_n\in I_{n+1,1}$ and $c'_n\in I_{n+1,2}$.
Note that $\theta(x,t)\thickapprox g_n(x+n_1\alpha,t).$ From the above argument, we know that the shape of $g_{n+1}$ depends on whether $g_{n}$ possesses one zero on $I_{n+1,1}+n_1\alpha$,
or equivalently, depends on whether $c'_n\in n_1\alpha+I_{n+1,1} \ (mod\ 1)$.

If $c_n+n_1\alpha$ is close to $c'_{n}$ in the sense of $\mod 1$ (in this case $n_1=t_l$ with $t_l<t_s$), the jump part of $\tilde \phi$ is contained in $I_{n+1,1}$.
If $c_n+n_1\alpha$ is far from $c'_{n}$ in the sense of $\mod 1$ (in this case there exists no such a $k$), the jump part of $\tilde \phi$ moves outside $I_{n+1,1}$, and $\tilde \phi$ in $I_{n+1,1}$ nearly vanishes in $C^2$ sense, see Figure 3.
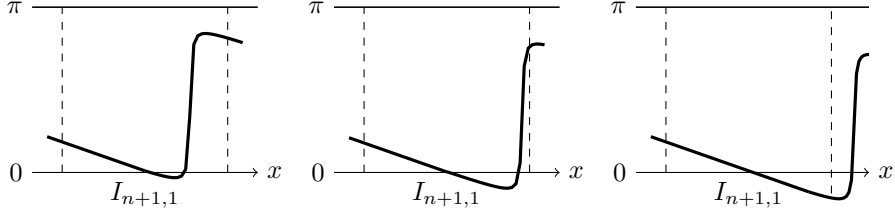
\begin{figure}
\begin{tikzpicture}[yscale=0.7]
\draw [->] (-1.5,0) -- (1.5,0);
\draw [very thick,domain=-0:1.3] plot (\x, {1/180*pi*atan(50*tan((\x -0.6) r))+0.5*pi-0.5*(\x)});
\draw [very thick,domain=-1.3:0] plot (\x, {0.03-0.5*(\x)});
\draw [semithick,domain=-1.5:1.5] plot (\x, {0*\x+pi});
\draw [dashed] (-1.1,0) -- (-1.1,pi);
\draw [dashed] (1.1,0) -- (1.1,pi);
\node [below] at (0,0) {$I_{n+1,1}$};
\node [right] at (1.5,0) {$x$};
\node [left] at (-1.5,0) {$0$};
\node [left] at (-1.5,3.14) {$\pi$};
\end{tikzpicture}
\begin{tikzpicture}[yscale=0.7]
\draw [->] (-1.5,0) -- (1.5,0);
\draw [very thick,domain=-0:1.3] plot (\x, {1/180*pi*atan(50*tan((\x -1) r))+0.5*pi-0.5*(\x)});
\draw [very thick,domain=-1.3:0] plot (\x, {0.01-0.5*(\x)});
\draw [semithick,domain=-1.5:1.5] plot (\x, {0*\x+pi});
\draw [dashed] (-1.1,0) -- (-1.1,pi);
\draw [dashed] (1.1,0) -- (1.1,pi);
\node [below] at (0,0) {$I_{n+1,1}$};
\node [right] at (1.5,0) {$x$};
\node [left] at (-1.5,0) {$0$};
\node [left] at (-1.5,3.14) {$\pi$};
\end{tikzpicture}
\begin{tikzpicture}[yscale=0.7]
\draw [->] (-1.5,0) -- (1.6,0);
\draw [very thick,domain=0.8:1.6] plot (\x, {1/180*pi*atan(50*tan((\x -1.4) r))+0.5*pi-0.5*(\x)});
\draw [very thick,domain=-1.3:0.8] plot (\x, {{0.03-0.5*(\x)}});
\draw [semithick,domain=-1.5:1.6] plot (\x, {0*\x+pi});
\draw [dashed] (-1.1,0) -- (-1.1,pi);
\draw [dashed] (1.1,-0.4) -- (1.1,pi);
\node [below] at (0,0) {$I_{n+1,1}$};
\node [right] at (1.6,0) {$x$};
\node [left] at (-1.5,0) {$0$};
\node [left] at (-1.5,3.14) {$\pi$};
\end{tikzpicture}
\caption{Graphs of $g_{n+1}$ in $I_{n+1,1}$: the jump part may move outside $I_{n+1,1}$ if $|d_{n}|$ increases as $t$ changes. The first and the third pictures correspond to Case {\bf 2} and {\bf 1}, respectively.}
\end{figure}

\vskip 0.2cm

Thus a key variable is $$d_n=c_n+n_1\alpha-c'_n (\mod\ 1).$$ For this purpose, we  give the following definitions.
 \begin{definition}
Let $d_{n}=c_n+n_1\alpha-c'_n (\mod1)$.
We say that the step $n$ is {\bf non-resonant} (corresponding to Case \textbf{1}), if $|d_{n}|\ge 3q_{N+n-1}^{-2000\tau}$ for each $n_1$ satisfying $0\leq|n_1|<q_{N+n-1}$.
Otherwise, if there exists a (unique) $n_1<q_{N+n-1}$ such that  we say that the step $n$ is {\bf resonant} (corresponding to Case \textbf{2}). From the condition $n_1+n_2\thicksim q_n$, we have $n_2\gg n_1$.
 \end{definition}
 \vskip -0.2cm

 For the non-resonance case, the jump part of $\tilde\phi$ is not contained in $n_1\alpha+I_{n+1,1}$.
 Hence $\tilde{\phi}$ is small and the shape of $g_{n+1}$ is similar as the one of $g_n$, i.e., Case {\bf 1} occurs for $g_{n+1}$, see the third one in Figure 1.

For the resonance case, the jump part is really contained in $n_1\alpha+I_{n+1,1}$. Note that $g'_n$ possess opposite signs on $I_{n+1,1}$ and $n_1\alpha+I_{n+1,1}$, respectively.
Thus the shape of $g_{n+1}=\tilde\phi(x,t)+g_{n}$ looks like the third one in Figure 1 or the first one in Figure 2.
It can be seen that $\pa_xg_{n+1}$ has exactly 2 zeros in $n_1\alpha+I_{n+1,1}$ (it still preserves the cosine-type characteristics). Moreover, we have the following important observation:
\beq\label{range-of-distance1}
\pi-l_{n}^{-1+3\eta}\leq \max_{x\in I_{n+1,1}}\{g_{n+1}(x,t)\}-\min_{x\in I_{n+1,1}}\{g_{n+1}(x,t)\} \leq \pi-l_{n}^{-1-3\eta}.
\eeq
(\ref{range-of-distance1}) implies that $g_{n+1}$ may has at most two zeros in $I_{n+1,1}$ denoted by $\{c_{n+1,1},c'_{n+1,2}\}.$ The above property also holds for $I_{n+1,2}.$ Furthermore, we have
$$\left|c_{n+1,1} + n_1\alpha - c'_{n+1,2}\right|, \quad \left|c_{n+1,2} - n_1\alpha - c'_{n+1,1}\right| < \lambda^{-\frac{1}{30} r_{n}}$$  for any further $n.$ This means that $c'_{n+1,j}$ and $c_{n+1,j}$ almost lie in the same orbit of the base dynamics. This also means that, essentially, we still have only two critical points $c_{n+1,1}$ and $c_{n+1,2}.$

\begin{figure}

\begin{tikzpicture}[yscale=0.7]
\draw [->] (-1.5,0) -- (1.5,0);
\draw [very thick,domain=-0:1.3] plot (\x, {1/180*pi*atan(50*tan((\x -0.6) r))+0.5*pi-0.5*(\x)-0.2});
\draw [very thick,domain=-1.3:0] plot (\x, {0.03-0.5*(\x)-0.2});
\draw [semithick,domain=-1.5:1.5] plot (\x, {0*\x+pi});
\draw [dashed] (-1.1,0) -- (-1.1,pi);
\draw [dashed] (1.1,0) -- (1.1,pi);
\node [right] at (1.5,0) {$x$};
\node [left] at (-1.5,0) {$0$};
\node [left] at (-1.5,3.14) {$\pi$};
\end{tikzpicture}
\begin{tikzpicture}[yscale=0.7]
\draw [->] (-1.5,0) -- (1.5,0);
\draw [very thick,domain=-0:1.3] plot (\x, {1/180*pi*atan(50*tan((\x -0.6) r))+0.5*pi-0.5*(\x)+0.1});
\draw [very thick,domain=-1.3:0] plot (\x, {0.03-0.5*(\x)+0.1});
\draw [semithick,domain=-1.5:1.5] plot (\x, {0*\x+pi+0.});
\draw [dashed] (-1.1,0) -- (-1.1,pi);
\draw [dashed] (1.1,0) -- (1.1,pi);
\node [right] at (1.5,0) {$x$};
\node [left] at (-1.5,0) {$0$};
\node [left] at (-1.5,3.14) {$\pi$};
\end{tikzpicture}
\begin{tikzpicture}[yscale=0.7]
\draw [->] (-1.5,0) -- (1.5,0);
\draw [very thick,domain=-0:1.3] plot (\x, {1/180*pi*atan(50*tan((\x -0.6) r))+0.5*pi-0.5*(\x)+0.3});
\draw [very thick,domain=-1.3:0] plot (\x, {0.03-0.5*(\x)+0.3});
\draw [semithick,domain=-1.5:1.5] plot (\x, {0*\x+pi+0.});
\draw [dashed] (-1.1,0) -- (-1.1,pi);
\draw [dashed] (1.1,0) -- (1.1,pi);
\node [right] at (1.5,0) {$x$};
\node [left] at (-1.5,0) {$0$};
\node [left] at (-1.5,3.14) {$\pi$};
\end{tikzpicture}
\begin{tikzpicture}[yscale=0.7]
\draw [->] (-1.5,0) -- (1.5,0);
\draw [very thick,domain=-0:1.3] plot (\x, {1/180*pi*atan(50*tan((\x -0.6) r))+0.5*pi-0.5*(\x)+0.45});
\draw [very thick,domain=-1.3:0] plot (\x, {0.03-0.5*(\x)+0.45});
\draw [semithick,domain=-1.5:1.5] plot (\x, {0*\x+pi+0.});
\draw [dashed] (-1.1,0) -- (-1.1,pi);
\draw [dashed] (1.1,0) -- (1.1,pi);
\node [right] at (1.5,0) {$x$};
\node [left] at (-1.5,0) {$0$};
\node [left] at (-1.5,3.14) {$\pi$};
\end{tikzpicture}
\begin{tikzpicture}[yscale=0.7]
\draw [->] (-1.5,0) -- (1.5,0);
\draw [very thick,domain=-0:1.5] plot (\x, {1/180*pi*atan(50*tan((\x -0.25) r))+0.5*pi-0.7*(\x+0.25)+0.75});
\draw [very thick,domain=-1.3:0] plot (\x, {0.03-0.5*(\x+0.25)+0.75});
\draw [semithick,domain=-1.5:1.5] plot (\x, {0*\x+pi+0.});
\draw [dashed] (-1.1,0) -- (-1.1,pi);
\draw [dashed] (1.1,0) -- (1.1,pi);
\node [right] at (1.5,0) {$x$};
\node [left] at (-1.5,0) {$0$};
\node [left] at (-1.5,3.14) {$\pi$};
\end{tikzpicture}

\caption{The Case {\bf 2}: $g_{n+1}\  {\rm\ (mod\ \pi)}$ may have two zeroes (the first and fifth pictures), one zero (the second  and fourth pictures) or no zeroes (the third picture, corresponds to Case {\bf 3}) in $x\in I_{n+1,1}$  if $d_{n}$ changes its sign as $t$ changes. }
\end{figure}
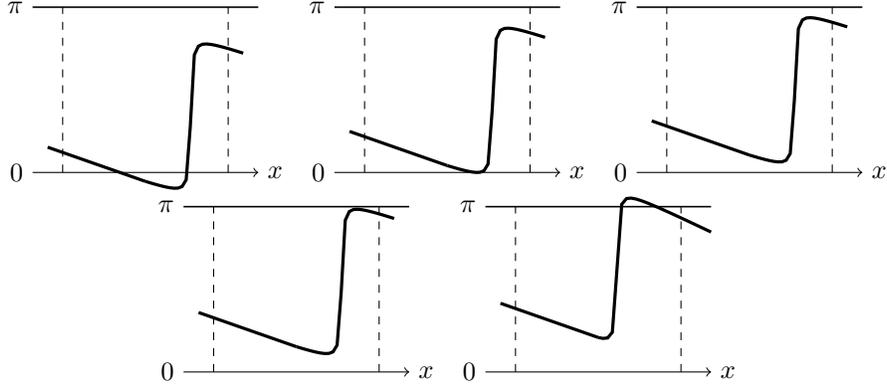
\begin{figure}
\begin{tikzpicture}[yscale=0.7]
\draw [->] (-1.4,0) -- (1.8,0);
\draw [very thick,domain=-1.3:1.5] plot (\x, {1/180*pi*atan(30*tan(\x r))+0.5*pi});
\draw [very thick,domain=1.5:1.7] plot (\x, {0.001*\x+pi});
\draw [semithick,domain=-1.4:1.8] plot (\x, {0*\x+pi});
\draw [dashed] (-0.8,0) -- (-0.8,pi);
\draw [dashed] (0.8,0) -- (0.8,pi);
\node [below] at (0,0) {$c^{*}_n\in n_1\alpha+I_{n+1,1}$ corresponds to Case {\bf 2}};
\node [right] at (1.8,0) {$x$};
\node [left] at (-1.4,0) {$0$};
\node [left] at (-1.4,3.14) {$\pi$};
\end{tikzpicture}
\begin{tikzpicture}[yscale=0.7]
\draw [->] (-1.4,0) -- (1.8,0);
\draw [very thick,domain=-1.3:1.5] plot (\x, {1/180*pi*atan(30*tan(\x r))+0.5*pi});
\draw [very thick,domain=1.5:1.7] plot (\x, {0.001*\x+pi});
\draw [semithick,domain=-1.4:1.8] plot (\x, {0*\x+pi});
\draw [dashed] (1.3,0) -- (1.3,pi);
\draw [dashed] (-0,0) -- (-0,pi);
\node [below] at (0.6,0) {$c^{*}_n {\rm\ is\ near\ the \ edge\ of\ }n_1\alpha+I_{n+1,1}$};
\node [right] at (1.8,0) {$x$};
\node [left] at (-1.4,0) {$0$};
\node [left] at (-1.4,3.14) {$\pi$};
\end{tikzpicture}
\begin{tikzpicture}[yscale=0.7]
\draw [->] (-1.4,0) -- (1.8,0);
\draw [very thick,domain=-1.3:1.5] plot (\x, {1/180*pi*atan(30*tan(\x r))+0.5*pi});
\draw [very thick,domain=1.5:1.7] plot (\x, {0.001*\x+pi});
\draw [semithick,domain=-1.4:1.8] plot (\x, {0*\x+pi});
\draw [dashed] (0.5,0) -- (0.5,pi);
\draw [dashed] (1.7,0) -- (1.7,pi);
\node [below] at (1.2,0) {$c^{*}_n\not\in n_1\alpha+I_{n+1,1}$ corresponds to Case {\bf 1}};
\node [right] at (1.8,0) {$x$};
\node [left] at (-1.4,0) {$0$};
\node [left] at (-1.4,3.14) {$\pi$};
\end{tikzpicture}
\caption{The graph of $g_{n+1}-g_{n}=\tilde \phi(x,t)$, the restriction of the complete graph of approximate arc tangent function in $n_1\alpha+I_{n+1,1}$, depends on the position of $c^{*}_n$, where $c^{*}_n$ is the middle point of the jump part.
  The Case that $g_{n+1}-g_{n} {\rm \ mod}\ \pi$ is  small corresponds to $c^{*}_n\not\in n_1\alpha+I_{n+1,1}$; and the Case that $g_{n+1}-g_{n}$ is a pulse function corresponds to $c^{*}_n\in n_1\alpha+I_{n+1,1}$.}
\end{figure}
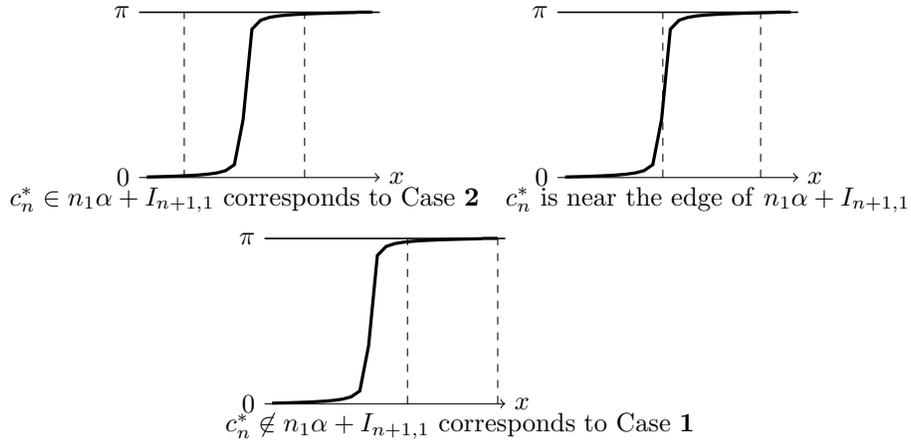

\begin{figure}\label{fig6'}
\begin{tikzpicture}[yscale=0.7]
\draw [->] (-6,0) -- (6,0);
\node [right] at (6,0) {$x$};
\node [left] at (-6,0) {$0$};
\node [left] at (-6,3.14) {$\pi$};
\node [left] at (-6,-3.14) {$-\pi$};
\node [left] at (-6,6.28) {$2\pi$};
\draw [semithick,domain=-6:6] plot (\x, {0*\x+2*pi+0});
\draw [semithick,domain=-6:6] plot (\x, {0*\x+pi+0});
\draw [semithick,domain=-6:6] plot (\x, {0*\x-pi+0});


\draw [very thick, orange, domain=0:1.5] plot ({\x - 0.3-3}, {1/180*pi*atan(50*tan((\x -0.25) r))+0.5*pi-0.7*(\x+0.25)+0.75 + 0.3});
\draw [very thick, orange, domain=-1.3:0] plot ({\x - 0.3-3}, {0.03-0.5*(\x+0.25)+0.75 + 0.3});
\draw [very thick, dashed, red, domain=-0.5:1.5] plot ({\x - 1-2.6}, {1/180*pi*atan(50*tan((\x -0.25) r))+0.5*pi});
\draw [very thick, dashed, yellow, domain=-2:0.4] plot ({\x-1.2}, {0.03-0.5*(\x+0.25)+0.75 + 2.3-pi});
\draw [very thick, dashed, yellow, domain=-0.5:1.5] plot ({-\x + 1+2.6}, {1/180*pi*atan(50*tan((\x -0.25) r))+0.5*pi});
\draw [very thick, dashed, red, domain=-2:0.4] plot ({-\x+1.2}, {0.03-0.5*(\x+0.25)+0.75 + 2.3-pi});
\draw [very thick, orange, domain=0:1.5] plot ({-\x + 0.3+3}, {1/180*pi*atan(50*tan((\x -0.25) r))+0.5*pi-0.7*(\x+0.25)+0.75 + 0.3});
\draw [very thick, orange, domain=-1.3:0] plot ({-\x + 0.3+3}, {0.03-0.5*(\x+0.25)+0.75 + 0.3});
\node [above] at (-3,-pi) {$I_{n,1}\bigcup T^{-k}I_{n,2}$};
\node [above] at (3,-pi) {$I_{n,2}\bigcup T^{k}I_{n,1}$};

\node [above] at (-2.6,2*pi-1.7) {$I_{n+1,1}$};
\node [above] at (2.6,2*pi-1.7) {$I_{n+1,2}$};

\fill (2.06,pi)circle(3pt);\fill (-2.06,pi)circle(3pt);
\fill (3,pi)circle(3pt);
\fill (2.92,pi+0.44)circle(3pt);
\fill (-2.92,pi+0.44)circle(3pt);
\fill (3.35,0)circle(3pt);\node [below] at (3.35,0) {$c_{n,1}+k\alpha$};
\fill (-3.35,0)circle(3pt);\node [below] at (-3.35,0) {$c_{n,2}-k\alpha$};
\fill (1.6,0)circle(3pt);\node [below] at (1.6,0) {$c_{n,2}$};
\fill (-1.6,0)circle(3pt);\node [below] at (-1.6,0) {$c_{n,1}$};

\fill (-3.2,pi-2.17)circle(3pt);\node [below] at (-3.2,pi-2.17) {$\tilde{c}'_{n+1,1}$};
\fill (+3.2,pi-2.17)circle(3pt);\node [below] at (+3.2,pi-2.17) {$\tilde{c}'_{n+1,2}$};
\fill (-3,pi)circle(3pt);
\node [above] at (2.92,pi+0.44) {$\tilde{c}_{n+1,2}$};\node [above] at (-2.92,pi+0.44) {$\tilde{c}_{n+1,1}$};
\node [below] at (1.85,pi) {$c_{n+1,2}$};
\node [below] at (3,pi) {$c'_{n+1,1}$};
\node [below] at (-1.85,pi) {$c_{n+1,1}$};
\node [below] at (-3.2,pi) {$c'_{n+1,2}$};

\draw [dashed] (2.06,pi) -- (2.06,0);
\draw [dashed] (3,0) -- (3,pi);
\draw [dashed] (-2.06,0) -- (-2.06,pi);
\draw [dashed] (-3,pi) -- (-3,0);
\draw [dashed] (-2.92,pi+0.44) -- (-5.9,pi+0.44);
\draw [dashed] (-3.2,pi-2.17) -- (-5.9,pi-2.17);
\draw[<-] [dashed,very thick] (-5.7,pi+0.44) -- (-5.7,2.3);
\draw[->] [dashed,very thick] (-5.7,1.9) -- (-5.7,pi-2.17);
\node [below] at (-5.5,2.6) {$\approx \pi-\lambda^{-c|k|}(<\pi)$};

\draw[<->] [dashed,very thick, domain=-2.06:3] plot (\x, 1);

\draw[<->] [dashed,very thick, domain=-3:2.06] plot (\x, 2.2);
\node [below] at (0,2) {$\approx k\alpha (\text{mod}~1)$};

\draw[black,dashed] (-1.3-3,-pi) rectangle (2.3-3,2*pi);
\draw[black,dashed] (1.3+3,-pi) rectangle (-2.3+3,2*pi);

\draw[black,dashed] (-3.4,0.6) rectangle (-1.3,2*pi-0.9);
\draw[black,dashed] (3.4,0.6) rectangle (1.3,2*pi-0.9);
\draw[black,dashed] (-3.2,2.5) rectangle (-1.5,2*pi-2.5);
\draw[black,dashed] (3.2,2.5) rectangle (1.5,2*pi-2.5);
\draw[->] [dashed,very thick] (-1.5,2*pi-2.5) -- (-1.5+0.6,2*pi-2.5+0.6);\node [right] at (-1.5+0.6,2*pi-2.5+0.6) {$\tilde{I}_{n,1}$};
\draw[->] [dashed,very thick] (1.5,2*pi-2.5) -- (1.5-0.6,2*pi-2.5+0.6);\node [left] at (1.5-0.6,2*pi-2.5+0.6) {$\tilde{I}_{n,2}$};
\end{tikzpicture}
\caption{In the domain $I_{n,1}\bigcup T^{-k}I_{n,2},$ red dashed line is $\arctan[\|A_k(x,t)\|^2 \tan(g_{n,2}(x+k\alpha,t))]+\frac\pi 2$, the yellow dashed line is $g_{n,1}(x,t)$;~In the domain $I_{n,2}\bigcup T^{k}I_{n,1},$ yellow dashed line is $\arctan[\|A_k(x,t)\|^2 \tan(g_{n,1}(x-k\alpha,t))]+\frac\pi 2$ and the red dashed line is $g_{n,2}(x,t).$ The orange part is yellow part $+$ red part, which represents $g_{n+1}.$ }
\end{figure}
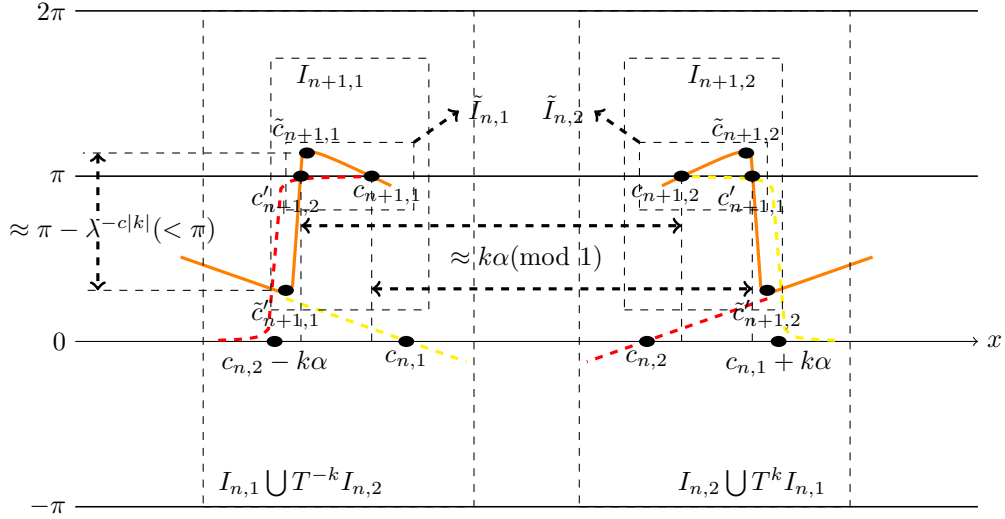

For the non-resonance case, property (ii) of $\tilde{\phi}$ shows that $g_{n+1}$ possesses similar properties as $g_n$, see the third picture of Figure 1.

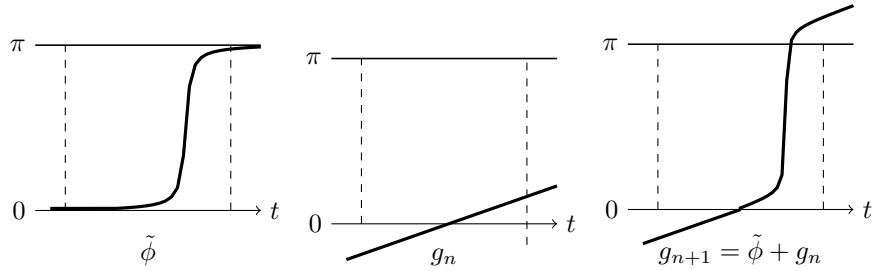
\begin{figure}
\begin{tikzpicture}[yscale=0.7]
\draw [->] (-1.5,0) -- (1.5,0);
\draw [very thick,domain=-0.4:1.5] plot (\x, {1/180*pi*atan(20*tan((\x -0.5) r))+0.5*pi});
\draw [very thick,domain=-1.3:-0.4] plot (\x, 0.04+0.001*\x);
\draw [semithick,domain=-1.5:1.5] plot (\x, {0*\x+pi});
\draw [dashed] (-1.1,0) -- (-1.1,pi);
\draw [dashed] (1.1,0) -- (1.1,pi);
\node [below] at (0,-0.3) {$ \tilde{\phi}$};
\node [right] at (1.5,0) {$t$};
\node [left] at (-1.5,0) {$0$};
\node [left] at (-1.5,3.14) {$\pi$};
\end{tikzpicture}
\begin{tikzpicture}[yscale=0.7]
\draw [->] (-1.5,0) -- (1.5,0);
\draw [very thick,domain=-1.3:1.5] plot (\x, {-0.03+0.5*(\x)});
\draw [semithick,domain=-1.5:1.5] plot (\x, {0*\x+pi});
\draw [dashed] (-1.1,0) -- (-1.1,pi);
\draw [dashed] (1.1,-0.4) -- (1.1,pi);
\node [below] at (0,-0.3) {$g_{n}$};
\node [right] at (1.5,0) {$t$};
\node [left] at (-1.5,0) {$0$};
\node [left] at (-1.5,3.14) {$\pi$};
\end{tikzpicture}
\begin{tikzpicture}[yscale=0.7]
\draw [->] (-1.5,0) -- (1.5,0);
\draw [very thick,domain=-0.02:1.5] plot (\x, {pi+1/180*pi*atan(50*tan((\x -0.6) r))-0.5*pi+0.5*(\x)});
\draw [very thick,domain=-1.3:0] plot (\x, {-0.0+0.5*(\x)});
\draw [semithick,domain=-1.5:1.5] plot (\x, {0*\x+pi});
\draw [dashed] (-1.1,0) -- (-1.1,pi);
\draw [dashed] (1.1,0) -- (1.1,pi);
\node [below] at (0,-0.3) {$g_{n+1}= \tilde{\phi}+g_{n}$};
\node [right] at (1.5,0) {$t$};
\node [left] at (-1.5,0) {$0$};
\node [left] at (-1.5,3.14) {$\pi$};
\end{tikzpicture}
\caption{Graphs of $g_{n+1}$ as a function of $t$ in $I_{n+1}$: }
\end{figure}

Now we go back to (\ref{product31}). Assume there is no $l<s$ with $0<|k|=|t_l|< 10^{-3}\varepsilon q_{n+1}$ such that  $T^{k}x\in I_{n+1}$. Applying the argument above for non-resonant case on $A_{ t_2-t_1}(T^{ t_1}x,t)\cdot A_{t_{1}}(x,t)$, one shows that  Case {\bf 1} occurs for
$A_{t_2}(x,t)=A_{ t_2-t_1}(T^{ t_1}x,t)\cdot A_{t_{1}}(x,t)$. Moreover, from (\ref{norm-non-resonance}), we obtain (\ref{norm-lower-bound}) and (\ref{norm-derivative}) for $A_{t_2}(x,t)$.
     Inductively, we can obtain similar estimates on $A_{t_j}(x,t)$ for any $j\le s.$ Repeat this process forward and we obtain Case {\bf 1} for $A_{t_s}(x,t)$.  Otherwise, if  such an $l\in \mathbb Z$ does exist, it is unique by the properties of continued fraction approximation and the Diophantine condition on $T$.
Then, we rewrite (\ref{product31}) as
$$A_{r^+_{n+1}}(x,t)=A_{r^+_{n+1}-k}(T^kx,t)\cdot A_{k}(x,t),\quad x\in I_{n+1},\quad k=t_l,$$
where $A_{r^+_{n+1}-k}(T^kx,t)$ and $A_{k}(x,t)$ can be estimated by the above discussion for non-resonance Case. Recall $r^+_{n+1}\thicksim q_{n+1}$,
we have $n_2=r^+_{n+1}-k\gg n_1=k$.
Then the estimate on $g_{n+1}$ is reduced to the resonant Case in Figure 2 (the first picture) and we obtain Case {\bf 2} for the $(n+1)$-th step. Moreover, by (\ref{norm-resonance}) we have
$$
\|A_{r^+_{n+1}}(x,t)\|\ge \|A_{r^+_{n+1}-k}(T^kx,t)\|/\|A_{k}(x,t)\|\gtrsim \lambda^{r^+_{n+1}-2k}\gtrsim \lambda^{(1-\epsilon_n)r^+_{n+1}}\gg \lambda^{r^+_{n}}
$$
with $\epsilon_n\rightarrow 0$ as $n\rightarrow \infty$, which is just (\ref{norm-lower-bound}). It is not difficult to obtain
(\ref{norm-derivative}).

If it is Case {\bf 2} for the $n$-th step, then critical points of $g_n$ on $I_{n+1}$ are close to each other such that $$\inf\limits_{j\in \Z}|(c_{n+1, 1}-c_{n+1,2}+k\alpha)-j|\ll |I_{n+1}|.$$ Thus only non-resonant Case occurs by the Diophantine condition in $(n+1)$-th steps. The situation keeps unchanged until some $m$ large enough such that
$$\inf\limits_{j\in \Z}|(c_{n+m, 1}-c_{n+m,2}+k\alpha)-j|\gtrsim |I_{n+m}|.$$ At that time, we reobtain a lower bound for $g'_{n+m}$ similar as in (\ref{I-first-derivative}) and thus we go back Case {\bf 1} again.

The mechanism for the absence of zeros for $g_{n+1}$, or equivalently, the appearance of spectral gaps in Case {\bf 3}, is shown in Figures 2 and 4.
Note that $g_n\thickapprox \arctan(t-v(x))$, which implies that the sign of $\pa_t g_n$ is unchanged (equals to $1$) everywhere (we have seen that due to the cosine-type condition, the sign of $\pa_x g_n$ cannot keep unchanged). Hence $\pa_t\tilde{\phi}$ possesses the same sign as that of $\pa_t g_n$, which leads that $\min|\pa_t g_{n+1}|\ge \min |\pa_t g_n|$. Thus by induction we obtain the first inequality in  (\ref{lm17-main}) (See Figure 5). The estimate for the second one is similar as the one with respect to $x$.

\subsection{The properties $c_{n,j}$ with respect to $t$}\label{class}
Corollary \ref{lmg'} is a direct result of \eqref{lm17-main} and \eqref{I-first-derivative} in Theorem \ref{theorem12}.

\begin{corollary} \label{lmg'} For any fixed $t_0\in [\inf v-\frac{2}{\lambda},\sup v+\frac{2}{\lambda}]$ and $n^*,~i\in\Z_+,$ if each step $i\geq n^*$ belongs to Case \textbf{1}, then for$\ any\ ~t\in B(t_0,\lambda^{-q^{\frac{1}{800}}_{N+i-1}}),$ the following inequality holds:
\begin{equation}\label{lmg'-2} c_0(\alpha,v)q^{-20}_{N+{n^*}-2}<\left\vert\frac{d(c_{i,1}(t)-c_{i,2}(t))}{dt}\right\vert< C(\alpha,v)\lambda^{20q_{N+{n^*}-2}}.\end{equation}

\end{corollary}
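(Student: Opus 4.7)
The plan is to apply the implicit function theorem to the defining relation $g_i(c_{i,j}(t), t) = 0$ and read off the bounds from \eqref{lm17-main} and \eqref{I-first-derivative}, with a propagation step that transfers these bounds from step $n^*$ to step $i$ via iterated use of \eqref{gn-gn+1}.

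First, I would differentiate $g_i(c_{i,j}(t),t) = 0$ implicitly to obtain
$$\frac{d c_{i,j}}{dt} = -\frac{\partial_t g_i(c_{i,j}(t),t)}{\partial_x g_i(c_{i,j}(t),t)},$$
and observe that by \eqref{I-first-derivative} the two values $\partial_x g_i(c_{i,1},t)$ and $\partial_x g_i(c_{i,2},t)$ carry opposite signs, while by \eqref{lm17-main} $\partial_t g_i > 0$ throughout $I_{i-1}$. Hence the two contributions to $\tfrac{d(c_{i,1}-c_{i,2})}{dt}$ have the same sign, so
$$\left|\frac{d(c_{i,1}(t)-c_{i,2}(t))}{dt}\right| = \sum_{j=1,2} \frac{\partial_t g_i(c_{i,j},t)}{|\partial_x g_i(c_{i,j},t)|}.$$
It therefore suffices to bound $\partial_t g_i$ and $|\partial_x g_i|$ from above and below at the points $c_{i,j}(t)$, uniformly in $i$.

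The non-trivial point is that the bounds must be expressed only in terms of $q_{N+n^*-2}$ rather than $q_{N+i-2}$, so the direct application of \eqref{lm17-main} and \eqref{I-first-derivative} at step $i-1$ is too coarse. Since every step from $n^*-1$ through $i-1$ is Case \textbf{1}, iterating \eqref{gn-gn+1} gives $\|g_i - g_{n^*}\|_{C^2} \leq 2\lambda^{-\frac{3}{2} r_{n^*-2}}$, and iterating \eqref{ga1} gives $|c_{i,j}(t)-c_{n^*,j}(t)| \leq 2\lambda^{-\frac{1}{2} r_{n^*-2}}$. Combined with the second-derivative control \eqref{I-second-derivative} at step $n^*-1$ via a Taylor expansion, this transfers the bounds $q_{N+n^*-2}^{-2}\leq |\partial_x g_{n^*}(c_{n^*,j},t)|\leq q_{N+n^*-2}^2$ (from \eqref{I-first-derivative} at step $n^*-1$) and $\tfrac{1}{10} < \partial_t g_{n^*}(c_{n^*,j},t) < \lambda^{10 q_{N+n^*-2}}$ (from \eqref{lm17-main}) to the corresponding quantities evaluated at $(c_{i,j}(t),t)$, with the loss of at most a multiplicative factor of $2$.

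Plugging these transferred bounds into the expression above yields
$$\left|\frac{d(c_{i,1}-c_{i,2})}{dt}\right| \in \Bigl[\,\tfrac{1/20}{2q_{N+n^*-2}^2},\ \tfrac{4\lambda^{10q_{N+n^*-2}}}{q_{N+n^*-2}^{-2}/2}\,\Bigr] \subset \Bigl[\,c_0 q_{N+n^*-2}^{-20},\ C\lambda^{20 q_{N+n^*-2}}\,\Bigr],$$
as claimed. I expect the main obstacle to be the transfer of the $\partial_t g$ bound from $g_{n^*}$ to $g_i$, since \eqref{gn-gn+1} as stated controls only the $C^2$-norm in $x$. This gap has to be bridged by re-examining the Case \textbf{1} construction of $g_{n+1}$ from $g_n$: the correction is small in $C^2(I_n(t))$ uniformly in $t$, and its $t$-dependence enters only through the matrix norm $\|A_{r_n^+}\|$, whose $t$-derivative at each step is controlled by \eqref{lm17-main}. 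Propagating this observation across the Case \textbf{1} steps between $n^*$ and $i$ yields the required $t$-derivative control.
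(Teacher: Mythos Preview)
Your approach is correct and is exactly what the paper's one-line proof (``direct result of \eqref{lm17-main} and \eqref{I-first-derivative}'') is meant to unpack: implicit differentiation of $g_i(c_{i,j}(t),t)=0$, the sign observation from the second half of \eqref{I-first-derivative}, and then the transfer from step $n^*$ to step $i$ via iterated \eqref{gn-gn+1}. Your concern about whether \eqref{gn-gn+1} controls the $t$-derivative is not an actual gap: in the paper's proof of the Induction Theorem (Case $\mathbf{1}_n\to$ Case $\mathbf{1}_{n+1}$), the estimate $\|g_{n+1}-g_{n+2}\|_{C^2}\le\lambda^{-\frac32 r_n}$ is obtained from Lemma \ref{nonresonance_phi}, which bounds the mixed partials $\partial_x^{m_1}\partial_t^{m_2}\tilde\phi$, and is then immediately used to propagate \eqref{lm17-main} itself; so the $C^2$-norm there does include $\partial_t$.
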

\begin{definition}\label{def5} For $X\in \Z,$ let $s(X)$ as $$q_{N+s(X)-2}< |X|\leq q_{N+s(X)-1}.$$ \end{definition}
\begin{corollary}\label{gwx1} If there exist $k\in \Z$ and $n>s(k)+1$~\text{such\ that}~$n-$th~\text{step belongs to Case} \textbf{3} with $T^kI_{n,1}\cap I_{n,2}\neq \emptyset$,  \text{then for all}~$s(k)\leq j\leq n-1,$
\begin{equation}\label{case32}~j-th~\text{step belongs to Case}~\textbf{2}.\end{equation}
Moreover, if $\inf\limits_{j\in \Z}\vert c_{s(k),1}(t)+k\alpha-c_{s(k),2}(t)-j\vert\leq C\lambda^{-r_{s(k)-1}^{\frac{1}{56}}},$ then
\begin{equation}\label{partialx}\tilde{c}_{s(k)+1,j}\neq \tilde{c}'_{s(k)+1,j},~j=1,2.\end{equation}
\end{corollary}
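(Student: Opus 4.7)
The plan is to handle the two conclusions of Corollary \ref{gwx1} separately, reading each off the Inductive Theorem \ref{theorem12} together with two structural features: the nesting of the critical intervals $I_{m,j}$ in $m$, and the uniqueness of the Case \textbf{2} resonance shift at every step.

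First, for \eqref{case32}, I use the monotonicity $I_{m,j}\subseteq I_{m',j}$ for $m\ge m'$ built into the Inductive Theorem (with the convention that Case \textbf{3} merely freezes the interval). The $n$-th step is in Case \textbf{3} with shift $k$, so $T^k I_{n,1}\cap I_{n,2}\neq \emptyset$. For every $s(k)\leq m\leq n-1$, the inclusion $I_{n,j}\subseteq I_{m,j}$ ($j=1,2$) gives $T^k I_{m,1}\cap I_{m,2}\supseteq T^k I_{n,1}\cap I_{n,2}\neq \emptyset$; and Definition \ref{def5} combined with $n>s(k)+1$ ensures $|k|\leq q_{N+s(k)-1}\leq q_{N+m-1}$, so $k$ is an admissible Case \textbf{2} shift at step $m$. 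Hence the non-resonance condition of Case \textbf{1} fails at step $m$, and by the uniqueness of the Case \textbf{2} shift (guaranteed by the Diophantine assumption \eqref{Calpha} on $\alpha$) step $m$ must be in Case \textbf{2} with the same shift $k$. Since Case \textbf{3} is a subcase of Case \textbf{2}, this delivers \eqref{case32}.

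Next, for \eqref{partialx}, I exploit the hypothesis
\[
|c_{s(k),1}(t)+k\alpha-c_{s(k),2}(t)|\leq C\lambda^{-r_{s(k)-1}^{1/56}}.
\]
Using the lower bound $r_{s(k)-1}\geq q_{N+s(k)-2}-1$ coming from Theorem \ref{theorem12}, the Diophantine estimate \eqref{Calpha}, and the assumption $\lambda>e^{100 q_{N+1}}$, this right-hand side is vastly smaller than the critical interval length $|I_{s(k)+1,1}|=q_{N+s(k)}^{-2000\tau}$. Combining this with \eqref{ga1}--\eqref{ga2}, the zero $c_{s(k),2}-k\alpha$ of $g_{s(k)}(\cdot+k\alpha)$ therefore lies deep in the interior of $I_{s(k)+1,1}$. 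The baby-model analysis preceding the statement --- specifically the description of $\tilde\phi(x,t)=\arctan(l_{n_1}^2\tan\theta(x,t))+\pi/2$ and the decomposition $g_{s(k)+1}=\tilde\phi+g_{s(k)}$, illustrated by Figures 1 and 2 --- then forces the full ``jump'' of $\tilde\phi$ from near $0$ to near $\pi$ to occur strictly inside $I_{s(k)+1,1}$. This is exactly the two-turning-point configuration for $g_{s(k)+1}$: $\partial_x g_{s(k)+1}$ has two distinct zeros on $I_{s(k)+1,1}$, i.e.\ $\tilde c_{s(k)+1,1}\neq \tilde c'_{s(k)+1,1}$. A quantitative check consistent with \eqref{range-g-n-lower-bound} verifies that the oscillation of $g_{s(k)+1}$ across $I_{s(k)+1,1}$ indeed attains the bound $\pi-C\lambda^{-|k|/100}$. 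The same reasoning applied on $I_{s(k)+1,2}$, using the shift $-k$ and the symmetric identity $|c_{s(k),2}-k\alpha-c_{s(k),1}|=|c_{s(k),1}+k\alpha-c_{s(k),2}|$, yields $\tilde c_{s(k)+1,2}\neq \tilde c'_{s(k)+1,2}$.

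The main obstacle is the quantitative bookkeeping in the second step: one must compare the exponentially small length $\lambda^{-r_{s(k)-1}^{1/56}}$ against the polynomially small interval length $q_{N+s(k)}^{-2000\tau}$ and against the thresholds $\lambda^{-|k|/100}$ in \eqref{range-g-n-lower-bound}, making sure the closeness hypothesis is strong enough to force the regime where $\partial_x g_{s(k)+1}$ genuinely has two distinct zeros rather than the borderline degenerate case $\tilde c=\tilde c'$ where they coalesce. A secondary subtlety is the edge case $m=s(k)$ for \eqref{case32}, where $|k|$ could equal $q_{N+s(k)-1}$; this is absorbed by the hypothesis $n>s(k)+1$, which ensures the inductive construction at step $s(k)$ has already completed before the Case \textbf{3} transition at step $n$.
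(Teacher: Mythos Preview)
Your argument is correct, and for \eqref{case32} it is actually a bit cleaner than the paper's. The paper argues by contradiction: if some step $j$ with $s(k)+1\le j\le n-1$ were in Case~\textbf{1}, then $T^kI_{j,1}\cap I_{j,2}=\emptyset$ would force $\inf_l|c_{j,1}+k\alpha-c_{j,2}-l|>q_{N+j-1}^{-2000\tau}$, which via \eqref{ga1}--\eqref{ga2} propagates to step $n$ and contradicts the Case~\textbf{3} bound \eqref{cn1cn2}. You bypass \eqref{cn1cn2} entirely by using the nesting $I_{n,j}\subset I_{m,j}$ (a consequence of \eqref{ga1}--\eqref{ga2} you should state and verify explicitly, since it is not recorded as such in Theorem~\ref{theorem12}) together with the bare hypothesis $T^kI_{n,1}\cap I_{n,2}\neq\emptyset$. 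This buys a shorter argument with fewer quantitative inputs; the paper's route has the advantage that it never needs the nesting claim, only the drift bound on the centers. Both proofs share the same boundary issue at $j=s(k)$ when $|k|=q_{N+s(k)-1}$ (the paper's proof also only treats $j\ge s(k)+1$); your proposed resolution via $n>s(k)+1$ does not actually settle it, so this remains a harmless but unaddressed edge case in either approach.

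For \eqref{partialx} your reasoning is correct in substance but more heuristic than the paper's. The paper reads the conclusion directly off the structural estimates of Case~\textbf{2}: combining \eqref{cc}, \eqref{either12} and \eqref{tic} bounds $|\tilde c_{s(k)+1,j}-c_{s(k),j}|+|\tilde c'_{s(k)+1,j}-c_{s(k),j}|$ by $C\lambda^{-r_{s(k)-1}^{1/56}}+C\lambda^{-r_{s(k)-1}/30}+C\lambda^{-|k|/2}\ll \tfrac12|I_{s(k),1}|$, so both turning points lie strictly inside $I_{s(k),1}$, which by item~(2) of Case~\textbf{2} forces them to be distinct. Your appeal to the baby model and Figures~1--2 captures the same geometry (the jump of $\tilde\phi$ sits well inside the interval), but in a final version you should replace the pictorial language by the explicit chain of inequalities above.
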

\begin{proof} Suppose, for the sake of contradiction,  that for some $s(k)+1\leq j\leq n-1,$ $j-th$ step belongs to Case \textbf{1}, then we have $T^kI_{j,1}\bigcap I_{j,2}=\emptyset.$ This implies
$$\inf\limits_{l\in \Z}\vert c_{j,1}+k\alpha-c_{j,2}-l\vert>|I_{j,1}|=q^{-2000\tau}_{N+j-1}.$$
Then \eqref{ga1} and \eqref{ga2} guarantee that
$$\inf\limits_{l\in \Z}\vert c_{n,1}+k\alpha-c_{n,2}-l\vert>\frac{1}{2}q^{-2000\tau}_{N+j-1}.$$

On the other hand, in Case \textbf{3}, \eqref{cn1cn2} implies
$$\inf\limits_{l\in \Z}\vert c_{n,1}+k\alpha-c_{n,2}-l\vert\leq \lambda^{-\frac{1}{55}r_{n-1}}\ll \frac{1}{2}q^{-2000\tau}_{N+n-1}<\frac{1}{2}q^{-2000\tau}_{N+j-1},$$  which leads to a contradiction. Hence \eqref{case32} holds true.

By \eqref{cc}, \eqref{either12} and \eqref{tic}, we have
$$
|\tilde{c}_{s(k),j}-c_{s(k),j}|+|\tilde{c}'_{s(k),j}-c_{s(k),j}|\leq C\lambda^{-r_{s(k)-1}^{\frac{1}{56}}}+C\lambda^{-\frac{1}{30}r_{s(k)-1}}+C\lambda^{-\frac{1}{2}|k|}\ll \frac{1}{2}|I_{s(k),1}|,
$$
which implies \eqref{partialx} by (2) of Case {\bf 2} in Theorem \ref{theorem12}.
\end{proof}

\subsection{The definition of the new labels of gaps}

The following theorem is the core part of this paper. From the induction theorem, we can label all the spectral gaps  by the number $k$ describing the distance of two critical points when a strong resonance occurs between them by (\ref{cn1cn2}). Moreover, it provides an accurate estimate for both the size of each spectral gap and the distance between any two spectral gaps.

\begin{definition} $~dist(A, B)\ {\rm denotes\ the \ distance\ of\ two\ sets}\ A \ and\ B.$
\end{definition}
Let $\Sigma^{\lambda}:=\lambda^{-1}\Sigma_{\lambda v, \alpha}.$ Note that $\mathbb{R} \backslash\Sigma^{\lambda}$ is open, it composes of a series of open intervals, that is, the spectral gaps.
\begin{theorem}\label{15}
For each $\lambda \geq \lambda_0$, there exists a map $\Phi$ between
$K(\lambda)$ and the set of gaps such that $\Phi(K(\lambda))=\mathbb{R}\backslash\Sigma^{\lambda} = \bigcup_{k \in K(\lambda)} G_k^{\lambda}$ and for each $k\in K(\lambda)$ the following properties are satisfied for each $ n \geq s(k)$:

\begin{itemize}
    \item[]{\rm (1)} For $G^{\lambda}_k = (t_k^-(\lambda), t_k^+(\lambda))$ and $X \in \{+, -\}$, we have
    \begin{equation}\label{var-ep}
    \liminf\limits_{j\in \Z}\left| c_{n,1}(t^{X}_k(\lambda)) + k\alpha - c_{n,2}(t^{X}_k(\lambda)) -j\right| \leq \lambda^{-r^{\frac{1}{100}}_{n-1}}.
    \end{equation}
    Furthermore, there exist sequences $\{t^{k}_{X,n}(\lambda)\}_{n \ge l_k} \subset \mathbb{R}$ such that $g_{n,1}(t^{k}_{X,n})$ is tangent to $y=0\ {\rm(}\!\mod \pi{\rm)}$ and
    \begin{equation}\label{tangent}
    | t^{k}_{X,n} - t_{k}^{X} | \leq \lambda^{-r^{\frac{1}{100}}_{n-1}}.
    \end{equation}

    \item[]{\rm (2)} It holds that
    \begin{equation}\label{pugapguj}
    c\lambda^{-10000|k|} \leq |G_{k}^{\lambda}| \leq C\lambda^{-\frac{1}{10000}|k|}.
    \end{equation}

\item[]{\rm (3)} $~k\in K(\lambda)$ if and only if~\text{there~exists~some~t}\text{~such~that}~$$\inf\limits_{j\in \Z}|c_{s(k),1}(t)+k\alpha-c_{s(k),2}(t)-j|\leq C\lambda^{-r^{\frac{1}{100}}_{s(k)-1}},\ \ \min\limits_{x\in I_{s(k)}}|g_{s(k)+1}(x,t)(\rm{mod}~\pi)|\geq c\lambda^{-|k|^3}.$$
    Particularly,
\begin{equation}\label{gap-mid}\text{if}~\quad~T^kI_{s(k),1}(t,\lambda)= I_{s(k),2}(t,\lambda),~\text{then}~k\in K(\lambda).\end{equation}
\item[]{\rm (4)} For any fixed $k\in \Z,$ there exists $\lambda^*(k)>\lambda_0$ such that for any $\lambda\geq \lambda^*,$ $k\in K(\lambda)$ and
\begin{equation}\label{xzengd}{\rm Leb}\{x\vert v(x-\alpha)<t^{X}_k(\lambda)\}=-k\alpha(\text{\rm mod}~1)+O(\lambda^{-c})~\text{for}~X\in \{+,-\}.\end{equation}
\end{itemize}
\end{theorem}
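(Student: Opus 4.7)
The plan is to show that Case \textbf{3} of the Induction Theorem provides an exact bijective correspondence between integers $k$ (labeling resonances between critical points) and bounded spectral gaps, then derive the four properties by leveraging the quantitative estimates in Theorem \ref{theorem12}.

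First, I would construct $\Phi$ directly. For each bounded component of $\mathbb{R}\setminus \Sigma^\lambda$, any $t$ in it is uniformly hyperbolic, so by Remark \ref{appearance-gap} it must enter Case \textbf{3} at some step $n_0\geq 1$, and this Case \textbf{3} event comes with a unique integer $k$ satisfying $0\leq |k|<q_{N+n_0-1}$ and $T^k I_{n_0,1}\cap I_{n_0,2}\neq\emptyset$. Using the propagation estimates \eqref{ga2} on critical points together with Corollary \ref{gwx1}, and the definition $q_{N+s(k)-2}<|k|\le q_{N+s(k)-1}$, the same integer $k$ governs the resonance at the earliest admissible step $n=s(k)$. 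Continuity in $t$ shows $k$ is constant on each connected component, producing a well-defined labeling; $K(\lambda)$ consists of those $k$ that are actually realized, and $\Phi$ covers all bounded gaps by construction.

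For part (1), the endpoints of $G_k^\lambda$ are exactly the boundary between Case \textbf{3} and Case \textbf{2} at step $s(k)+1$. At such a boundary value $t_k^X(\lambda)$, $\min|g_{s(k)+1}(\cdot,t)\ \mathrm{mod}\ \pi|$ just touches zero, so the resonance estimate \eqref{cn1cn2} from Case \textbf{3} passes to the limit and yields \eqref{var-ep} at scale $n=s(k)+1$. For higher $n$, the tangent parameter $t^{k}_{X,n}$ is produced by locating within $Q_n(t_k^X)$ the value of $t$ at which $g_{n,1}(\cdot,t)$ first becomes tangent to $y=0\ (\mathrm{mod}\ \pi)$. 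The monotonicity \eqref{lm17-main1} in $t$, combined with the rapidly shrinking Case \textbf{3} threshold $2\lambda^{-r_n^{1/50}}$, forces this tangent parameter to lie within $\lambda^{-r_{n-1}^{1/100}}$ of $t_k^X$, giving \eqref{tangent} and, upon passing to the limit in $n$, the full statement of \eqref{var-ep}.

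For part (2), the gap width is the length of the $t$-interval on which Case \textbf{3} with label $k$ persists. As $t$ traverses $G_k^\lambda$, $g_{s(k)+1}$ shifts monotonically by \eqref{lm17-main1}, and the shift available before a zero reappears modulo $\pi$ is exactly $\pi$ minus the range of $g_{s(k)+1}$ on $I_{s(k),j}$. Combining \eqref{range-g-n} and \eqref{range-g-n-lower-bound}, namely $\pi-C\lambda^{-|k|/100}\le \max-\min \le \pi-c\lambda^{-100|k|}$, with the derivative bounds $1/10<\partial_t g_{s(k)+1}<C\lambda^{5|k|}$, bounds $|G_k^\lambda|$ between $c\lambda^{-100|k|}/(C\lambda^{5|k|})$ and $C\lambda^{-|k|/100}/(1/10)$, which are absorbed into the conservative exponents of \eqref{pugapguj}. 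For (3), the ``only if'' direction is immediate from the construction of $\Phi$; the ``if'' direction and the special case \eqref{gap-mid} require reversing the implicit function argument from (1), verifying that the hypothesis on $g_{s(k)+1}$ forces Case \textbf{3} at some step $n\geq s(k)$ and hence uniform hyperbolicity. For (4), once $\lambda$ is so large that $s(k)$ depends only on $k$, Proposition \ref{lemma4} and \eqref{g_N} give $c_{s(k),j}(t)=x_j(t)+O(\lambda^{-c})$, where $x_1(t)<x_2(t)$ are the two zeros of $t-v(\cdot-\alpha)$; from Remark \ref{rmk2} and the cosine-type shape of $v$ we have $\mathrm{Leb}\{x: v(x-\alpha)<t\}=x_2(t)-x_1(t)\ (\mathrm{mod}\ 1)$, and the resonance condition \eqref{var-ep} at the endpoint yields $x_2(t_k^X)-x_1(t_k^X)=-k\alpha\ (\mathrm{mod}\ 1)+O(\lambda^{-c})$, which is \eqref{xzengd}. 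The most delicate step will be part (1): producing the tangent sequences $t_{X,n}^k$ with the sharp error $\lambda^{-r_{n-1}^{1/100}}$ simultaneously for all $n\geq s(k)$ requires tracking how the resonance at step $s(k)$ propagates through all subsequent Case \textbf{2} steps via Corollary \ref{gwx1}, balancing the rapid derivative growth $C\lambda^{5|k|}$ in \eqref{lm17-main1} against the exponentially small Case \textbf{3} threshold $2\lambda^{-r_n^{1/50}}$ at every scale.
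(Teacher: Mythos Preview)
Your overall architecture is right, and parts (2)--(4) are close to the paper's argument. The genuine gap is in part (1), specifically your identification of the gap endpoints.

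You write that ``the endpoints of $G_k^\lambda$ are exactly the boundary between Case~\textbf{3} and Case~\textbf{2} at step $s(k)+1$'' and that at the endpoint ``$\min|g_{s(k)+1}(\cdot,t)\bmod\pi|$ just touches zero''. This is not correct. The set of $t$ for which Case~\textbf{3} holds \emph{at step $s(k)$} is strictly smaller than the gap: for $t$ slightly beyond the tangent value $t_0$ (where $g_{s(k)+1}$ touches $\pi$), the function $g_{s(k)+1}(\cdot,t)\bmod\pi$ already has zeros, so Case~\textbf{3} fails at step $s(k)$---yet $t$ is still in the gap, because Case~\textbf{3} fires at some \emph{later} step. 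The true endpoint $t_k^X$ lies in the spectrum, which by Lemma~\ref{gapkehua} means Case~\textbf{3} occurs at \emph{no} step whatsoever; it is not characterized by any single-scale tangency. Consequently your scheme of ``first locate $t_k^X$ as the Case~\textbf{3}/\textbf{2} boundary, then find tangent parameters near it'' is circular: you do not yet know where $t_k^X$ is.

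The paper resolves this by reversing the logic. Starting from any $t$ in the gap, it constructs a sequence $t_0,t_1,t_2,\ldots$ where $t_p$ is the parameter at which $g_{s(k)+p+1}$ is tangent to $y=\pi$; using \eqref{lm17-main1} and \eqref{gn-gn+1} it shows $|t_{p+1}-t_p|\le\lambda^{-\frac{1}{9}r_{s(k)+p-1}}$, so the sequence is Cauchy with limit $t^*_+$. It then proves two things: $t^*_+\in\Sigma^\lambda$ (via \eqref{g*} and Lemma~\ref{lem20}, since $|g_{s(k)+p+1}(\cdot,t^*_+)-\pi|$ stays small at every scale) and $(t^*_+-\delta,t^*_+)\subset\mathbb{R}\setminus\Sigma^\lambda$ for all small $\delta$ (by showing Case~\textbf{3} occurs at a $\delta$-dependent scale). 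Only after this does one know $t^*_+=t_k^+$, and the sequence $\{t_p\}$ \emph{is} the tangent sequence of \eqref{tangent}. Your proposal needs this iterative Cauchy construction; the propagation via Corollary~\ref{gwx1} that you mention handles only the implication ``Case~\textbf{3} at step $n$ $\Rightarrow$ Case~\textbf{2} at steps $s(k),\ldots,n-1$'', not the existence and location of the endpoint itself.
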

\vskip 0.2cm
Thus we have the following definition of the (new) labels of spectral gaps,
\begin{definition}\noindent Each spectral gap $G$ of $\lambda v$ can be identified by a unique $k=k(G)\in \mathbb{Z}$ such that \eqref{var-ep}--\eqref{pugapguj} holds true, which is called the new label  of the gap (different from the one obtained from GLT).
\end{definition}

From the Induction Theorem, we know that Case \textbf{3} is the only mechanism that creates an ``open'' gap.

\section{Proof of Theorems \ref{Th2}, \ref{Th4}, and \ref{Th5}}

\subsection{Large Coupling Asymptotics of the IDS}

In this subsection, we provide a basic lemma for estimating the IDS under a fairly mild setting. Let \( v \in C^0(\R/\Z) \) be such that for any \( a \in \mathbb{R} \) and \( \epsilon>0\),
\begin{equation}\label{C1}
\text{Leb}(\{x \mid |v(x) - a| < \epsilon \}) < \epsilon^c
\end{equation}
for some \( c(v) > 0 \). This condition are satisfied by all 1-periodic real analytic functions and functions \( v \in C^k(\R/\Z) \) satisfying the following non-degeneracy condition:
$$
\inf_{x \in \R/\Z} \sum_{m=1}^k |v^{(m)}(x)| \geq c>0.
$$

Let \( H^L_{\alpha, \lambda v, x} \) denote the restriction of \( H_{\alpha, \lambda v, x} \) to the interval \([-L, L]\), and let \(\{E_j(x)\}_{j=-L}^{L}\) be the eigenvalues of \( H^L_{\alpha, \lambda v, x} \). Recall that the IDS of \( H_{\alpha, \lambda v, x} \) is defined by
\begin{equation}\label{vL}
N(E, \lambda v) = \lim_{L \to \infty} \frac{1}{2L+1} \#\{j : E_j(x) < E, |j| \leq L\},
\end{equation}
where the limit exists for almost every \( x \in \R/\Z \) and is independent of \( x \).

The estimate of the IDS can be reduced to the properties of potential functions.

\begin{theorem}\label{keyl}
Let \( v \in C^0(\R/\Z) \) satisfy \eqref{C1}, \( E \in \mathbb{R} \), and \( \alpha \in \mathbb{R} \setminus \mathbb{Q} \). Then for \( \lambda_0(v,\alpha) > 0 \) and \( c(v, \alpha) > 0 \) given in the Induction Theorem, for each \( \lambda > \lambda_0 \) it holds that
\begin{equation}\label{tarobtain1}
N(E, \lambda v) = \text{Leb}\{x \mid \lambda v(x) < E\} + O(\lambda^{-c}).
\end{equation}
\end{theorem}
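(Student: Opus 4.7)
The plan is to compare the eigenvalue counting function of the truncation $H^L_{\alpha,\lambda v,x}$ with the empirical distribution of the diagonal values $\{\lambda v(x+j\alpha)\}_{|j|\le L}$ via a Weyl-type perturbation estimate, then pass to the limit $L\to\infty$ using unique ergodicity of the irrational rotation, and finally control the residual error by the small-band estimate $(\ref{C1})$.

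First I would decompose $H^L_{\alpha,\lambda v,x}=D_x^L+\Delta_L$, where $D_x^L$ is the diagonal operator with entries $\{\lambda v(x+j\alpha)\}_{|j|\le L}$ and $\Delta_L$ is the discrete Laplacian, which satisfies $\|\Delta_L\|\le 2$ uniformly in $L$ and $\lambda$. Weyl's inequality for self-adjoint perturbations gives $|E_j(x)-\mu_j(x)|\le 2$, where $\mu_j(x)$ is the $j$-th smallest entry of $D_x^L$. This yields the pointwise bound
\begin{equation}\label{proposal-weyl}
\bigl|\#\{j:E_j(x)<E\}-\#\{|j|\le L:\lambda v(x+j\alpha)<E\}\bigr|\le \#\{|j|\le L:|\lambda v(x+j\alpha)-E|\le 2\}.
\end{equation}

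Next I would divide $(\ref{proposal-weyl})$ by $2L+1$ and let $L\to\infty$. The hypothesis $(\ref{C1})$ combined with continuity of $v$ forces $\text{Leb}\{x:\lambda v(x)=E\}=0$ and $\text{Leb}\{x:\lambda v(x)\in\{E\pm 2\}\}=0$, so each of the relevant indicators is continuous almost everywhere and can be sandwiched between continuous bumps; unique ergodicity of $x\mapsto x+\alpha$ then converts the Birkhoff averages into the corresponding Lebesgue measures uniformly in $x$. Passing to the limit on both sides of $(\ref{proposal-weyl})$ gives
\begin{equation}
\bigl|N(E,\lambda v)-\text{Leb}\{x:\lambda v(x)<E\}\bigr|\le \text{Leb}\{x:|\lambda v(x)-E|\le 2\}.
\end{equation}
The final step rewrites the right-hand side as $\text{Leb}\{x:|v(x)-E/\lambda|\le 2/\lambda\}$ and applies $(\ref{C1})$ with $\epsilon=3/\lambda$ (a harmless enlargement of the band) to obtain $(3/\lambda)^c=O(\lambda^{-c})$, which is exactly $(\ref{tarobtain1})$ with the same exponent $c$ as in the hypothesis.

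I expect the only real technical point to be justifying the $L\to\infty$ limit in a pointwise (or almost everywhere in $x$) sense, since the indicators involved are discontinuous at $E$ and at $E\pm 2$. This is resolved by the combination of continuity of $v$ with $(\ref{C1})$, which guarantees that these level sets are null and therefore do not obstruct the sandwich-plus-unique-ergodicity argument. Once this is in place, the quantitative comparison is entirely controlled by the single scalar estimate $(\ref{C1})$ applied at scale $\epsilon=3/\lambda$, and neither the Induction Theorem nor any Lyapunov-exponent input is actually needed for this lemma.
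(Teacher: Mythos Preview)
Your argument is correct and in fact cleaner than the paper's. The paper proceeds via the Hoffman--Wielandt inequality, obtaining only the $\ell^2$ bound $\sum_j|E_{\pi(j)}-\lambda v(x+j\alpha)|^2\le 4L$ for some permutation $\pi$; it must then carve out an exceptional set $B_L=\{j:|E_{\pi(j)}-\lambda v(x+j\alpha)|>\lambda^{1/2}\}$, bound $\#B_L\le 4\lambda^{-1/2}L$ by Chebyshev, and treat $B_L$ and its complement separately, applying $(\ref{C1})$ at scale $\epsilon=\lambda^{-1/2}$ and explicitly requiring $c<1/2$ so that the $\lambda^{-1/2}$ from $\#B_L$ is absorbed. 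Your use of the min--max (Weyl) inequality gives the uniform pointwise bound $|E_j-\mu_j|\le\|\Delta_L\|\le 2$ immediately, eliminates the exceptional set entirely, and lets you apply $(\ref{C1})$ at the finer scale $\epsilon\sim\lambda^{-1}$; as a bonus this removes the artificial constraint $c<1/2$. Your invocation of unique ergodicity (rather than a.e.\ Birkhoff) is also a mild upgrade, though either suffices here. The paper's route would generalize to non-self-adjoint or non-uniform-norm perturbations where Weyl fails, but for this lemma your approach is strictly more economical.
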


\begin{proof}
Define a diagonal matrix as
$
H_{\lambda v}^L = \text{diag}\{\lambda v(x + L \alpha), \ldots, \lambda v(x), \ldots, \lambda v(x - L \alpha)\}.
$
Note that both \( H_{\alpha, \lambda v, x}^L \) and \( H_{\lambda v}^L \) are symmetric matrices. It was proved by Hoffman and Wielandt \cite{hw} that there exists a permutation of \(\{-L, \ldots, 0, \ldots, L\}\) such that
\begin{equation}\label{pematrix}
\sum_{j=-L}^{L} |E_{\pi(j)} - \lambda v(x + j \alpha)|^2 \leq 4L.
\end{equation}
Since \( \alpha \in \mathbb{R} \setminus \mathbb{Q} \), by  Birkhoff ergodic theorem, we have
\begin{equation}\label{finalL}
\lim_{L \to \infty} \frac{1}{2L+1} \#\{j : \lambda v(x + j \alpha) < E, |j| \leq L\} = \text{Leb}\{x \mid \lambda v(x) < E\}
\end{equation}
for almost every \( x \in \R/\Z \).

Fix \( x \in \R/\Z \) such that \eqref{vL} and \eqref{finalL} hold. Then there exists \( L_0(\lambda v, x) \) such that if \( L > L_0 \), we have
\begin{equation}\label{weneed3}
\left|N(E, \lambda v) - \frac{1}{2L+1} \#\{j : E_j(x) < E, |j| \leq L\}\right| \leq \lambda^{-c}
\end{equation}
and
\begin{equation}\label{weneed4}
\left|\text{Leb}\{x \mid \lambda v(x) < E\} - \frac{1}{2L+1} \#\{j : \lambda v(x + j \alpha) < E, |j| \leq L\}\right| \leq \lambda^{-c}.
\end{equation}
Define
$$
B_L = \{j : |E_{\pi(j)}(x) - \lambda v(x + j \alpha)| > \lambda^{1/2}\},
$$
then by \eqref{pematrix}, we have
\begin{equation}\label{weneed1}
\# B_L \leq 4 \lambda^{-1/2} L.
\end{equation}
On the other hand, for \( j \in [-L, L] \setminus B_L \), we have
\begin{equation}\label{weneed2}
\{j : \lambda v(x + j \alpha) \leq E - \lambda^{1/2}\} \subset \{j : E_{\pi(j)}(x) < E\} \subset \{j : \lambda v(x + j \alpha) \leq E + \lambda^{1/2}\}.
\end{equation}

Denote \( D_1 = \{j : E_j(x) < E, |j| \leq L\} \) and \( D_2 = \{j : \lambda v(x + j \alpha) < E, |j| \leq L\} \). Let \( D_{i1} = D_i \cap B_L \) and \( D_{i2} = D_i \setminus B_L \) for \( i = 1, 2 \). By \eqref{weneed1}, \eqref{weneed2}, \eqref{weneed4}, and \eqref{C1}, we have
\begin{equation}\label{weneed5}
\begin{array}{ll}
 &|\#D_1 - \#D_2| \le |\#D_{11} - \#D_{21}| + |\#D_{12} - \#D_{22}| \\
 \leq & \#B_L + \#\{j : E - \lambda^{1/2} \leq \lambda v(x + j \alpha) \leq E + \lambda^{1/2}, |j| \leq L\} \\
 \leq & 4 \lambda^{-1/2} L + |\#\{j : \lambda v(x + j \alpha) \leq E + \lambda^{1/2}, |j| \leq L\} \\ &- \#\{j : \lambda v(x + j \alpha) < E - \lambda^{1/2}, |j| \leq L\}| \\
 \leq & 4 \lambda^{-1/2} L + (2L + 1) \left(\text{Leb}\{|\lambda v(x) - E| \leq \lambda^{1/2}\} + 2 \lambda^{-c}\right) \\
 \leq & 4 \lambda^{-1/2} L + (2L + 1) \left(\text{Leb}\left\{\left|v(x) - \frac{E}{\lambda}\right| \leq \lambda^{-1/2}\right\} + 2 \lambda^{-c}\right) \\
 \leq & 8 (2L + 1) \lambda^{-c} \quad \text{if } c < \frac{1}{2}.
\end{array}
\end{equation}
Finally, \eqref{weneed3}, \eqref{weneed4}, and \eqref{weneed5} imply \eqref{tarobtain1}.
\end{proof}

\subsection{Continuity of the Spectral Gaps}
First we list some classic results on \textbf{uniformly hyperbolic} systems ($\mathcal{UH}$ for short) without proof, one can see \cite{z1} for details.
\begin{proposition}[\cite{johnson}]\label{johnson} For irrational $\alpha,$ it holds that
$$\Sigma^{\lambda}=\{t \vert (\alpha,A^{\lambda(t-v)})\notin \mathcal{UH}\}.$$
\end{proposition}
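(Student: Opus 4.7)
The plan is to establish the equivalence by leveraging the correspondence between formal solutions of the eigenvalue equation $(H_{\alpha,\lambda v,x} u)_n = E u_n$ (with $E = \lambda t$) and orbits of the Schr\"odinger cocycle: a sequence $\{u_n\}$ solves the eigenvalue equation if and only if $(u_{n+1}, u_n)^T = A_n^{\lambda(t-v)}(x)(u_1, u_0)^T$. Via this dictionary, spectral properties of $H_{\alpha,\lambda v,x}$ correspond to dichotomy properties of the cocycle, so both implications can be attacked by constructing a Green's function from invariant bundles and, conversely, extracting invariant bundles from the resolvent.

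For the direction $\mathcal{UH} \Rightarrow t \notin \Sigma^{\lambda}$, I would start from the uniform continuous invariant splitting $\mathbb{R}^2 = E^s(x) \oplus E^u(x)$ guaranteed by uniform hyperbolicity. Taking unit generators $v^s(x), v^u(x)$ of these bundles and iterating the cocycle yields solutions $u^+$ and $u^-$ of the eigenvalue equation that decay exponentially at $+\infty$ and $-\infty$ respectively, with constants uniform in $x$. The transversality $E^s(x)\cap E^u(x)=\{0\}$ gives a nonvanishing Wronskian, so the Green's function $G(n,m;x)$ assembled from $u^{\pm}$ in the standard way has uniform exponential off-diagonal decay. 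This makes $(H_{\alpha,\lambda v,x} - E)^{-1}$ a bounded operator on $\ell^2(\mathbb{Z})$, proving $E \notin \Sigma_{\alpha,\lambda v}$, i.e.\ $t \notin \Sigma^{\lambda}$.

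For the converse, assume $t \notin \Sigma^{\lambda}$, so $(H_{\alpha,\lambda v,x} - E)^{-1}$ exists as a bounded operator on $\ell^2(\mathbb{Z})$, with norm uniform in $x$ because $\Sigma_{\alpha,\lambda v}$ is $x$-independent. A Combes--Thomas-type argument, exploiting the gap distance $\eta = \mathrm{dist}(E, \Sigma_{\alpha,\lambda v}) > 0$, yields exponential off-diagonal decay of $G(n,m;x) = \langle \delta_n, (H_{\alpha,\lambda v,x} - E)^{-1} \delta_m\rangle$ at a rate $e^{-c\eta|n-m|}$ uniform in $x$. For each $x$, define $E^s(x) \subset \mathbb{R}^2$ as the space of initial conditions whose cocycle orbit produces an $\ell^2$-summable solution at $+\infty$, and $E^u(x)$ analogously at $-\infty$; injectivity of $H_{\alpha,\lambda v,x} - E$ on $\ell^2$ forces these to be one-dimensional and transverse. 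The exponential decay of $G$ then transfers to uniform exponential contraction on $E^s$ and expansion on $E^u$, while continuity of $E^{s,u}(x)$ in $x$ follows from continuity of the cocycle combined with the conjugation $H_{\alpha,\lambda v, x+\alpha} = S^{-1} H_{\alpha,\lambda v,x} S$.

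The main obstacle is the converse: upgrading the operator-theoretic invertibility of $H - E$ on $\ell^2$ into a uniform \emph{geometric} dichotomy for the cocycle. The crucial input is the Combes--Thomas estimate, which converts a spectral gap into a quantitative, $x$-uniform exponential decay of the Green's function; once that is in hand, the construction of the invariant splitting and the verification of its continuity are routine, though one must still exploit minimality of the base rotation $x\mapsto x+\alpha$ for irrational $\alpha$ to propagate properties from a.e.\ $x$ to every $x$.
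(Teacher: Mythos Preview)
The paper does not give its own proof of this proposition: it is listed among ``classic results on uniformly hyperbolic systems \ldots\ without proof'' and is simply attributed to Johnson \cite{johnson} (with a pointer to \cite{z1} for details). So there is no argument in the paper to compare against.

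Your proposal is a correct outline of the standard proof. Both directions are right in spirit: from uniform hyperbolicity you build the Green's function out of the exponentially decaying solutions coming from the stable and unstable bundles, and from the resolvent side you use a Combes--Thomas bound to recover exponential off-diagonal decay of $G(n,m;x)$ uniformly in $x$, then read off the one-dimensional stable and unstable directions as the $\ell^2$ solutions at $\pm\infty$. One small redundancy: you already invoke the $x$-independence of $\Sigma_{\alpha,\lambda v}$ to get a uniform resolvent norm, so the later appeal to minimality ``to propagate properties from a.e.\ $x$ to every $x$'' is not really needed---the uniform Combes--Thomas constants already give the dichotomy at every $x$ directly, and continuity of the bundles follows from continuity of the cocycle plus the uniform angle bound coming from the uniform Wronskian/resolvent estimate. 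Otherwise the sketch is sound and matches the classical argument.
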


\begin{proposition}[\cite{yoc}]\label{yocc} $(\alpha,A^{\lambda(t-v)})\in \mathcal{UH}$ if and only if there exists $c>0$ and $\rho>1$ such that
$$\|A_{n}(x,t)\|\geq c\rho^{|n|}$$ for all $n\in \Z$ and for all $x\in \R/\Z.$
\end{proposition}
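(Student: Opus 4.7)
The plan splits the biconditional into two directions; the forward direction (UH $\Rightarrow$ uniform exponential growth) is straightforward, while the reverse direction requires reconstructing the invariant splitting from the uniform norm lower bound. For the forward direction, I would exploit the continuous invariant splitting $\R^2 = E^s(x) \oplus E^u(x)$ supplied by UH. By compactness of $\R/\Z$ and continuity, the angle between $E^s(x)$ and $E^u(x)$ is bounded below uniformly. Then for any unit vector $e^u(x) \in E^u(x)$, the definition of UH gives $\|A_n(x) e^u(x)\| \geq c\rho^n$ for $n \geq 0$, and hence $\|A_n(x)\| \geq c\rho^n$; the case $n \leq 0$ is symmetric via $E^s$ and the inverse cocycle.

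For the reverse direction, the plan is to construct the stable and unstable distributions as limits of ``most contracted'' directions of the iterated cocycle. Define $E^s_n(x) := s(A_n(x))$ for $n > 0$ and $E^u_n(x) := s(A_{-n}(x))$. The key technical ingredient is the elementary fact that for $A \in SL(2, \R)$ with $\|A\|$ large, the most contracted direction $s(A)$ is pinned down to precision $O(\|A\|^{-2})$: any direction on which $A$ contracts with norm comparable to $\|A\|^{-1}$ must lie within angular distance $O(\|A\|^{-2})$ of $s(A)$. Combining this with the identity $A_{n+1}(x) = A_n(x+\alpha) A(x)$ and the hypothesis $\|A_n(x)\| \geq c\rho^n$, I would show that the angular distance between $E^s_n(x)$ and $E^s_{n+1}(x)$ is $O(\rho^{-2n})$ uniformly in $x$. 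Hence $\{E^s_n(x)\}$ is Cauchy and converges uniformly to a continuous line field $E^s(x)$; symmetrically one obtains $E^u(x)$.

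It then remains to verify (a) invariance $A(x) E^s(x) = E^s(x+\alpha)$, (b) transversality $E^s(x) \neq E^u(x)$, and (c) exponential contraction on $E^s$ together with exponential expansion on $E^u$. For (a), the identity $A_{n+1}(x) = A_n(x+\alpha) A(x)$ shows that $A(x)$ sends $s(A_{n+1}(x))$ onto $s(A_n(x+\alpha))$ up to an error $O(\rho^{-2n})$; passing to the limit gives invariance. For (b), if $E^s(x_0) = E^u(x_0)$ then the common line would be simultaneously contracted under both forward and backward iteration, incompatible with $\|A_n(x_0)\| \geq c\rho^{|n|}$ for $n$ of both signs; continuity and compactness then upgrade nonequality to a uniform lower bound on the angle. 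Part (c) follows from the construction and the angle bound from (b). The main obstacle I anticipate is step (a): making the convergence uniform in $x$ so that the passage to the limit is legitimate, which amounts to controlling a projective skew product with uniformly attracting/repelling fixed points; the geometric decay $O(\rho^{-2n})$ together with a telescoping argument should suffice.
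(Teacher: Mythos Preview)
The paper does not prove this proposition; it is cited from Yoccoz and listed explicitly among ``classic results \ldots\ without proof'' (see the sentence immediately preceding Proposition~\ref{johnson}). So there is no in-paper argument to compare against; your sketch is in fact the standard route used in the cited reference.

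The forward direction and the construction of $E^s,E^u$ as uniform limits of most-contracted directions are correct in outline. One genuine gap is your justification of step~(b). A line being contracted by both $A_n(x_0)$ and $A_{-n}(x_0)$ is \emph{not} by itself incompatible with $\|A_n(x_0)\|\ge c\rho^{|n|}$: in $SL(2,\R)$ the norm can simply be realized on a complementary direction in each case, so no contradiction arises at the single point $x_0$. The correct argument uses your step~(a) first: invariance together with minimality of the rotation and continuity forces $E^s=E^u$ on all of $\R/\Z$. Then for a unit $v\in E^s(x_0)$ one has $w:=A_n(x_0)v\in E^u(x_0+n\alpha)$ with $\|w\|\le C\rho^{-n}$, while $A_{-n}(x_0+n\alpha)$ must both contract $w$ (since $w\in E^u$) and return it to $v$, giving $1=\|v\|\le C\rho^{-n}\|w\|\le C^2\rho^{-2n}$, a contradiction for large $n$. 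With this correction your plan goes through.
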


\begin{proposition}[\cite{z1}, Lemma 11]\label{uh} Let $\{B^{(k)}\}_{k\in \Z}\subset SL(2;\R)$ be a bounded sequence,  $\beta=\inf\limits_{k\in \Z}\|B^{(k)}\|$ and $\gamma=\inf\limits_{k\in \Z}\left\vert \tan[s(B^{(k)})-u(B^{(k-1)})] \right\vert$. Assume
$$\beta\gg\frac{1}{\gamma}\gg 1>\frac{2}{\beta-\beta^{-1}}.$$ Then for each $k\in \Z$ and each $n\geq 1$, it holds that
$$\|B^{(k+n-1)}\cdots B^{(k)}\|\geq (c\beta\gamma)^n.$$
\end{proposition}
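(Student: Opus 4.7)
The plan is to construct an invariant cone field around the most-expanded directions $\hat u(B^{(k)})$ and show that each factor $B^{(k+1)}$ preserves these cones while uniformly stretching the vectors inside them. The starting point is the polar decomposition
\[
B^{(k)}=R_{u_k}\cdot\mathrm{diag}(\beta_k,\beta_k^{-1})\cdot R_{\pi/2-s_k},\qquad \beta_k:=\|B^{(k)}\|,
\]
which yields the following precise formulas: if $v$ is a unit vector making angle $\theta$ with $\hat s(B^{(k+1)})$, then $\|B^{(k+1)}v\|^2=\beta_{k+1}^2\sin^2\theta+\beta_{k+1}^{-2}\cos^2\theta$ and $B^{(k+1)}v$ lies at angle $\arctan(\beta_{k+1}^{-2}\cot\theta)$ from $\hat u(B^{(k+1)})$, since the SVD carries orthonormal frames to orthonormal frames. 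Heuristically, as long as $v$ is kept bounded away from $\hat s(B^{(k+1)})$, a single application of $B^{(k+1)}$ essentially pushes $v$ onto $\hat u(B^{(k+1)})$ and multiplies its length by roughly $\beta_{k+1}\sin\theta$.

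Fix a small universal constant $\kappa\in(0,1)$ and let $\mathcal C_k$ denote the angular cone of half-width $\delta:=\kappa\arctan\gamma$ around $\hat u(B^{(k)})$. The hypothesis $|\tan(s(B^{(k+1)})-u(B^{(k)}))|\geq\gamma$ forces any $v\in\mathcal C_k$ to make angle at least $(1-\kappa)\arctan\gamma$ with $\hat s(B^{(k+1)})$. Substituting $\theta=(1-\kappa)\arctan\gamma$ into the formulas above and using $\beta\gg 1/\gamma$, the image $B^{(k+1)}v$ sits at angle at most $\beta^{-2}\gamma^{-1}$ (up to constants) from $\hat u(B^{(k+1)})$, which is far less than $\delta$; this gives the cone invariance $B^{(k+1)}(\mathcal C_k)\subset\mathcal C_{k+1}$. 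At the same time,
\[
\|B^{(k+1)}v\|\;\geq\;\beta_{k+1}\sin\bigl((1-\kappa)\arctan\gamma\bigr)\;\geq\;c\beta\gamma
\]
for a universal $c>0$, using $\sin(\arctan\gamma)=\gamma/\sqrt{1+\gamma^2}\asymp\gamma$ in the regime $\gamma\ll 1$ dictated by the hypothesis. The third assumption $\beta-\beta^{-1}>2$ is what properly separates $\beta_k$ from $1$, so that the diagonal factor is genuinely hyperbolic on the chosen cone.

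To extract the operator-norm lower bound, choose $w$ to be the right singular vector of $B^{(k)}$ associated with its largest singular value. Then $B^{(k)}w=\beta_k\hat u(B^{(k)})\in\mathcal C_k$ and $\|B^{(k)}w\|\geq\beta$. Iterating the cone preservation and the uniform expansion for $j=1,\dots,n-1$ yields
\[
\|B^{(k+n-1)}\cdots B^{(k)}w\|\;\geq\;\beta\cdot(c\beta\gamma)^{n-1}\;\geq\;(c\beta\gamma)^n,
\]
where the last inequality uses $c\gamma<1$ (achieved by taking $c\leq 1$ since $\gamma<1$). As $\|w\|=1$, this is a lower bound on $\|B^{(k+n-1)}\cdots B^{(k)}\|$. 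The main obstacle is the quantitative tuning of $\delta$: it must be simultaneously small enough for the image half-angle $\beta^{-2}\gamma^{-1}$ to fit inside $\mathcal C_{k+1}$, and large enough that the remaining angle $(1-\kappa)\arctan\gamma$ still delivers the full expansion rate $c\beta\gamma$. Both constraints are compatible precisely under the scaling $\beta\gg 1/\gamma\gg 1$ provided by the hypothesis.
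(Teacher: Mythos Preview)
Your cone-field argument is correct and is the standard route to this kind of uniform hyperbolicity estimate. The paper itself does not supply a proof of this proposition: it is quoted verbatim from \cite{z1} (Lemma~11 there) and listed among ``classic results on uniformly hyperbolic systems \ldots\ without proof.'' The argument in \cite{z1} is essentially the same as yours---an invariant-cone construction around the expanding directions $\hat u(B^{(k)})$, combined with the explicit SVD formula for how a single $B^{(k+1)}$ maps angles and stretches lengths. Your identification of the two competing constraints on the cone width $\delta$ (small enough for invariance, large enough to retain the expansion rate $c\beta\gamma$) and their compatibility under $\beta\gg 1/\gamma$ is exactly the point. One minor remark: the condition $\beta-\beta^{-1}>2$ is in fact implied by $\beta\gg 1$, so it plays no independent role once the first assumption is in force; your proof does not actually use it beyond ensuring $\beta>1$.
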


By the help of above proposition, the following holds true.
\begin{Lemma}\label{lem20} For $t\in [-\frac{2}{\lambda}+\inf v,\frac{2}{\lambda}+\sup v],$
if $t\notin \Sigma^{\lambda},$ then $~\liminf\limits_{j\rightarrow +\infty}\min\limits_{x\in I_{j}}\left\vert g_{j}(x,t) (\rm{mod}~\pi)\right\vert>0.$

\end{Lemma}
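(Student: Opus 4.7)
The plan is to deduce the lemma from the classical invariant-splitting theory of uniformly hyperbolic $SL(2,\R)$-cocycles. Assume $t\notin\Sigma^{\lambda}$. By Proposition \ref{johnson}, the cocycle $(\alpha,A(\cdot,t))$ is uniformly hyperbolic, so Proposition \ref{yocc} supplies constants $c=c(t)>0$ and $\rho=\rho(t)>1$ with
\begin{equation}\label{UH-bound}
\|A_n(x,t)\|\ge c\rho^{|n|}\qquad\text{for all }n\in\Z,\ x\in\R/\Z.
\end{equation}

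From uniform hyperbolicity one obtains continuous $A$-invariant line fields $E^{s},E^{u}:\R/\Z\to\mathbb{RP}^1$, forward- and backward-exponentially contracting respectively. Continuity on the compact base together with transversality gives
\begin{equation}\label{transverse-gap}
\delta_{0}:=\inf_{x\in\R/\Z}\bigl|E^{s}(x)-E^{u}(x)\ (\mathrm{mod}\ \pi)\bigr|>0.
\end{equation}
Moreover, for any unit vector $v\in E^{s}(x)$ one has $\|A_n(x,t)v\|\le C\rho^{-n}$; combined with $\|A_n(x,t)\|\ge c\rho^n$, a standard SVD decomposition argument forces the angle between $v$ and the most-contracted direction $s_n(x,t)$ to be $O(\rho^{-2n})$. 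Hence, uniformly in $x\in \R/\Z$,
\begin{equation}\label{s-conv}
|s_n(x,t)-E^{s}(x)|\le C\rho^{-2n},\qquad |u_n(x,t)-E^{u}(x)|\le C\rho^{-2n},
\end{equation}
the second inequality following by applying the first to the inverse cocycle, since $u_n(x,t)=s[A_{-n}(x,t)]$.

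I then combine these ingredients with the Induction Theorem. For each $j\ge 1$ and $x\in I_{j-1}$ the return times satisfy $r^{\pm}_{j-1}(x,t)\ge q_{N+j-2}-1\to\infty$ as $j\to \infty$, so from the definition $g_j(x,t)=s_{r^{+}_{j-1}}(x,t)-u_{r^{-}_{j-1}}(x,t)$ together with \eqref{transverse-gap} and \eqref{s-conv}, for $j$ sufficiently large and every $x\in I_j\subset I_{j-1}$,
\begin{equation}
\bigl|g_j(x,t)\ (\mathrm{mod}\ \pi)\bigr|\ge \delta_{0}-2C\rho^{-2(q_{N+j-2}-1)}\ge\tfrac{\delta_{0}}{2}.
\end{equation}
Taking the minimum over $x\in I_j$ and then the $\liminf$ in $j$ delivers the claim.

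The one substantive ingredient is the uniform exponential convergence of $s_n,u_n$ to the invariant splitting $E^{s},E^{u}$, but this is classical for uniformly hyperbolic $SL(2,\R)$-cocycles and poses no real obstacle; everything else is a routine bookkeeping of the quantities produced by the Induction Theorem.
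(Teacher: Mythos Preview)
Your argument is correct and is precisely the standard mechanism behind the cited result: the paper does not give an independent proof here but simply invokes \cite[Lemma~2]{z2}, whose content is exactly the uniform convergence of the most contracted directions $s_n,u_n$ to the invariant splitting $E^s,E^u$ of a uniformly hyperbolic cocycle, together with the transversality $\inf_x|E^s(x)-E^u(x)\ (\mathrm{mod}\ \pi)|>0$. Your write-up makes this explicit, and the bookkeeping with $r^\pm_{j-1}\ge q_{N+j-2}-1$ is the right way to feed it back into the inductive quantities; nothing further is needed.
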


\begin{proof}

It is directly obtained from \cite[Lemma 2]{z2}.

\end{proof}

\begin{Lemma}\label{conti-spec}
For \( \lambda_1 > \lambda_0 \) and \( k \in K(\lambda_1) \), there exists $\delta_k>0$ such that for any \( \lambda \in (\lambda_1-\delta_k,\lambda_1+\delta_k) \), it holds that \( k \in K(\lambda) \), and \( t_k^{\pm}(\lambda) \) is continuous on \((\lambda_1-\delta_k,\lambda_1+\delta_k)\).
\end{Lemma}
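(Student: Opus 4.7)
The plan is to exploit two features of Theorem \ref{15} and Theorem \ref{theorem12}. First, the entire inductive construction up to step $s(k)$ — the transfer matrices $A_n(x,t,\lambda)$ for $n\lesssim q_{N+s(k)-1}$, the angle functions $g_{s(k)+1}(x,t,\lambda)$, the critical points $c_{s(k),j}(t,\lambda)$, and the critical intervals $I_{s(k),j}(t,\lambda)$ — is built by finitely many smooth operations (matrix composition, rotation decomposition, locating zeros of $g$'s) applied to the cocycle map $A^{(E-\lambda v)}$. Hence all of these quantities are jointly continuous in $(t,\lambda)$ on a neighbourhood of $\{\lambda_1\}\times \mathcal I$. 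Second, by Theorem \ref{15}(3), the condition $k\in K(\lambda)$ is certified by two \emph{strict} inequalities on these step-$s(k)$ quantities, and both can be arranged with definite slack by choosing a suitable $t$ strictly inside $G_k^{\lambda_1}$.

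First I would establish persistence of the label. Pick $t_\ast\in G_k^{\lambda_1}$ well inside the gap (e.g.\ the midpoint). Combining Theorem \ref{15}(3), Case \textbf{3} of Theorem \ref{theorem12}, and Corollary \ref{gwx1}, one can ensure at $t_\ast$ both $\inf_{j\in\Z}|c_{s(k),1}(t_\ast,\lambda_1)+k\alpha-c_{s(k),2}(t_\ast,\lambda_1)-j|<\tfrac12 C\lambda_1^{-r_{s(k)-1}^{1/100}}$ and $\min_{x\in I_{s(k)}}|g_{s(k)+1}(x,t_\ast,\lambda_1)(\mathrm{mod}\,\pi)|>2c\lambda_1^{-|k|^3}$, with explicit slack in each inequality. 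Joint continuity of the finite-step quantities in $(t,\lambda)$ then yields a $\delta_k>0$ such that for all $\lambda\in(\lambda_1-\delta_k,\lambda_1+\delta_k)$ the same pair of inequalities remains valid at $t=t_\ast$; Theorem \ref{15}(3) then gives $k\in K(\lambda)$.

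Next I would prove continuity of $t_k^{\pm}(\lambda)$ on this neighbourhood. The gap endpoints are characterised by (\ref{var-ep}): $\inf_{j\in\Z}|c_{s(k),1}(t_k^X,\lambda)+k\alpha-c_{s(k),2}(t_k^X,\lambda)-j|\le \lambda^{-r_{s(k)-1}^{1/100}}$. By Corollary \ref{lmg'}, the map $t\mapsto c_{s(k),1}(t,\lambda)-c_{s(k),2}(t,\lambda)$ has $t$-derivative bounded below by $c_0q_{N+s(k)-2}^{-20}$ and above by $C\lambda^{20q_{N+s(k)-2}}$; in particular it is strictly monotone in a neighbourhood of each $t_k^{X}(\lambda_1)$. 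Given $\lambda_m\to\lambda_1$ in $(\lambda_1-\delta_k,\lambda_1+\delta_k)$, any limit point $t_\infty$ of $t_k^{X}(\lambda_m)$ must, by the joint continuity noted above, satisfy $\inf_{j\in\Z}|c_{s(k),1}(t_\infty,\lambda_1)+k\alpha-c_{s(k),2}(t_\infty,\lambda_1)-j|\le \lambda_1^{-r_{s(k)-1}^{1/100}}$, and the lower bound (\ref{pugapguj}) on $|G_k^{\lambda_m}|$ together with the Hausdorff continuity of $\Sigma^{\lambda}$ in $\lambda$ forces $t_\infty$ to lie on the boundary of $G_k^{\lambda_1}$. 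Monotonicity of $c_{s(k),1}-c_{s(k),2}$ pins down which boundary point, giving $t_\infty=t_k^{X}(\lambda_1)$ and hence continuity.

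The main obstacle is the matching step in the last paragraph: Hausdorff continuity of the spectrum by itself only tracks the \emph{union} of gap positions, not individual gaps carrying a prescribed integer label. What makes the identification work is the rigidity built into Theorem \ref{15}: each open gap is pinned by a unique integer $k$ through the near-resonance condition on $c_{s(k),1}-c_{s(k),2}+k\alpha$, and the monotonicity provided by Corollary \ref{lmg'} ensures that small perturbations of $\lambda$ move the solutions of this near-resonance equation continuously rather than letting them jump between different labels.
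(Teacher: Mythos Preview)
Your persistence argument (first paragraph of the plan) is essentially the paper's own proof: both perturb the two strict inequalities of Theorem~\ref{15}(3) at a fixed $t_\ast$ and use joint continuity of the step-$s(k)$ inductive data in $(t,\lambda)$ to conclude $k\in K(\lambda)$ for $\lambda$ near $\lambda_1$.

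For the continuity of $t_k^{\pm}(\lambda)$ your route is genuinely different, and as written it has a gap. You invoke Corollary~\ref{lmg'} to get strict monotonicity of $t\mapsto c_{s(k),1}(t,\lambda)-c_{s(k),2}(t,\lambda)$ near $t_k^X$, but that corollary requires all steps $i\ge n^\ast$ to lie in Case~\textbf{1}; near a gap endpoint, Corollary~\ref{gwx1} and Lemma~\ref{lemma9} force steps $s(k),s(k)+1,\ldots$ into Case~\textbf{2}, where $\partial_x g_{s(k)+1}$ vanishes and the implicit-function bound you need is unavailable. Moreover, even granting monotonicity, both endpoints $t_k^{\pm}$ satisfy the \emph{same} near-resonance inequality (\ref{var-ep}), so this scalar quantity cannot separate them. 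What actually makes your compactness/Hausdorff scheme work is the anchor $t_\ast$ from your first step: since $t_\ast\in G_k^{\lambda}$ for every $\lambda$ near $\lambda_1$, any subsequential limit $t_\infty$ of $t_k^-(\lambda_m)$ lies in $\Sigma^{\lambda_1}$, satisfies $t_\infty\le t_\ast$, and one checks via Hausdorff continuity that $(t_\infty,t_\ast]\cap\Sigma^{\lambda_1}=\emptyset$, whence $t_\infty=t_k^-(\lambda_1)$. So drop Corollary~\ref{lmg'} and use the anchor instead.

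The paper avoids all of this by using the tangency sequences $t^k_{\pm,n}(\lambda)$ from Theorem~\ref{15}(1), defined by $g_{n,1}(c_n,t,\lambda)=\partial_x g_{n,1}(c_n,t,\lambda)=0$. Since $\partial_t g_{n,1}>\tfrac{1}{10}$ by (\ref{lm17-main}), the Implicit Function Theorem gives continuity of each $t^k_{\pm,n}$ in $\lambda$, and the uniform bound (\ref{tangent}) passes continuity to the limit $t_k^{\pm}$. This is more direct and sidesteps both Hausdorff continuity of the spectrum and any Case~\textbf{1} hypothesis.
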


\begin{proof}
By Theorem \ref{theorem12}, for any fixed \( t \), \(n\in \Z\) and \( \lambda > \lambda_0 \), \( c_n(t, \lambda) \) must satisfy one of the following two conditions:
\begin{enumerate}
    \item[a:] \( g_n(c_{n}(t, \lambda), t, \lambda) = 0 \) with \( \partial_x g_n(c_{n}(t, \lambda), t, \lambda) \neq 0 \).
    \item[b:] \( \partial_x g_n(c_{n}(t, \lambda), t, \lambda) = 0 \) with \( \left|\frac{\partial^2 g_n(c_{n}(t, \lambda), t, \lambda)}{\partial x^2}\right| \neq 0 \).
\end{enumerate}
Since \( g_n(x, t, \lambda) \in C^2(I_n \times [\lambda_0, +\infty)) \), it follows from the Implicit Function Theorem that \( c_n(t,\lambda)\) is continuously differentiable with respect to \( (t,\lambda) \).

Notice that for \( k \in K(\lambda_1) \), by (3) of Lemma \ref{15}, there exist some \( t_1 \) and $s(k)$ for which
\[
\begin{cases}
    \inf\limits_{j\in \Z}|c_{s(k),1}(t_1, \lambda_1) + k \alpha - c_{s(k),2}(t_1, \lambda_1)-j| < C\lambda_1^{-r^{\frac{1}{100}}_{s(k)-1}}, \\
    \min\limits_{x} |g_{s(k)+1}(x,t_1, \lambda_1)(\rm{mod}~\pi)| > c \lambda_1^{-|k|^3}.
\end{cases}
\]
Since both \( c_{s(k)}(t_1,\lambda) \) and \( g_{s(k)+1}(x, t_1, \lambda) \) are continuous in $\lambda$, thus there exists \( \delta_k(\lambda_1) > 0 \) such that for any \( \lambda \in (\lambda_1 - \delta_k, \lambda_1 + \delta_k) \), the following conditions hold:
\[
\begin{cases}
    \inf\limits_{j\in \Z}|c_{s(k),1}(t_1, \lambda) + k \alpha - c_{s(k),2}(t_1, \lambda)-j| < 2C\lambda^{-r^{\frac{1}{100}}_{s(k)-1}}\ll |I_{s(k),1}(t_1)|, \\
    \min\limits_{x} |g_{s(k)+1}(x,t_1, \lambda)(\rm{mod}~\pi)| >\frac{1}{2}c \lambda^{-|k|^3},
\end{cases}
\]
which immediately implies that for $t_1$ and any \( \lambda \in (\lambda_1 - \delta_k, \lambda_1 + \delta_k) \), step $n$ belongs to \( \text{Case } {\bf 3} \) with $T^kI_{n,1}\bigcap I_{n,2}\neq \emptyset$. Thus,  \eqref{Case3lem} implies $t_1\notin \Sigma^{\lambda}$ for any \( \lambda \in (\lambda_1 - \delta_k, \lambda_1 + \delta_k) \). Therefore, \( k \in K(\lambda) \) for any \( \lambda \in (\lambda_1 - \delta_k, \lambda_1 + \delta_k) \).

On the other hand, by (1) of Theorem \ref{15}, there exist two sequence $\{{t}^{k}_{\pm,n}(\lambda)\}_{n\ge 1}\subset \R$ such that ${\rm \ for\ any\ } n\geq l(k)$, $g_{n,1}({t}^{k}_{\pm,n}(\lambda))$ is tangent to $y=0 \ (mod\  \pi)$ and
\begin{equation}\label{tangent111}\vert {t}^{k}_{\pm,n}(\lambda)-{t}_{k}^{\pm}(\lambda) \vert \leq \lambda^{-r^{\frac{1}{100}}_{n-1}}.
\end{equation}

Finally, one notes that
$$
g_{n,1}(c_{n}({t}^{k}_{\pm,n}, \lambda), {t}^{k}_{\pm,n}, \lambda)=\frac{\partial g_{n,1}(x,t,\lambda)}{\partial x}(c_{n}({t}^{k}_{\pm,n}, \lambda), {t}^{k}_{\pm,n}, \lambda)= 0.
$$
By \eqref{lm17-main} and the Implicit Function Theorem, we obtain the continuity of \( t^k_{\pm, n}(\lambda) \) with respect to \( \lambda \). Finally, from \eqref{tangent111}, we conclude the continuity of \( t_k^{\pm}(\lambda) \) on \((\lambda_1-\delta_k,\lambda_1+\delta_k)\).
\end{proof}

\subsection{Proof of Theorem \ref{Th2}}

\

\textbf{Proof of $K(\lambda_0)\subset K(\lambda)$:}
Suppose, for the sake of contradiction, that there exists some \( k \in K(\lambda_0) \) such that
\begin{equation}\label{LAM1}
\Lambda(k) := \{\lambda \geq \lambda_0 \mid k \notin K(\lambda)\}\neq \emptyset.
\end{equation}
Then we can define
$$
\lambda_* := \inf \{\lambda\geq \lambda_0 \mid \lambda\in \Lambda(k)\}~({\rm The~infimum~of~\lambda(\geq \lambda_0)~such~that~|G^{\lambda}_k|=0}).
$$
Note the continuity of $|G_k^{\lambda}|,$ which follows from Lemma \ref{conti-spec}, implies \begin{equation}\label{contra} |G_k^{\lambda_*}| = 0 .\end{equation}


By the definition of $\lambda_*$, for any \( \epsilon > 0 \), we have $k\in K(\lambda_*-\epsilon)$, thus by \eqref{pugapguj} of Theorem 8,
\beq\label{gd1}
|(\lambda_*-\epsilon)^{-1} G_k^{\lambda_* - \epsilon}| > c (\lambda_*-\epsilon)^{-10000|k|} > 0.
\eeq
By Lemma \ref{conti-spec}, there exists $\delta_k>0$ such that \( |\lambda^{-1} G_k^{\lambda}| \) is continuous on \[ [\lambda_* - \frac{1}{2}\delta_k, \lambda_*]\subset [\lambda_*- \frac{1}{2}\delta_k, \lambda_* + \frac{1}{2}\delta_k] \subset (\lambda_*-\epsilon-\delta_k, \lambda_*-\epsilon+\delta_k)\](~by taking \(0<\epsilon\leq \frac{\delta_k}{10000}\)). It together with \eqref{gd1} implies that
$$
|\lambda_*^{-1} G_k^{\lambda_*}| = \lim_{\lambda \to \lambda_*^-} |\lambda^{-1} G_k^{\lambda}| \geq \frac{1}{2}c \lambda_*^{-10000|k|} > 0.
$$
This contradicts \eqref{contra}. Hence, \eqref{LAM1} is false and we have $k\in \bigcap\limits_{\lambda\geq \lambda_0}K(\lambda).$ This implies \begin{equation}\label{K111}K(\lambda_0)\subset K(\lambda),~\text{for any}~\lambda\geq\lambda_0.\end{equation}

\

\textbf{Proof of $K(\lambda)\subset K(\lambda_0)$:}
Suppose,  for the sake of contradiction, there exists some $\lambda'\geq \lambda_0$ such that \begin{equation}\label{contra1}K(\lambda')\backslash K(\lambda_0)\neq \emptyset.\end{equation} Then there exists $k\in K(\lambda')\backslash K(\lambda_0)$ such that $|G_k^{\lambda_0}|=0$  and \( \left\vert \lambda'^{-1} G_k^{\lambda'} \right\vert > 0 \) with $\lambda'>\lambda_0$. Then $
\lambda^*(k) := \sup \{\lambda\geq \lambda_0 \mid \left\vert G_k^{\lambda} \right\vert = 0 \text{ and } \lambda < \lambda'\}
$ is well-defined.


And Lemma \ref{conti-spec} implies \begin{equation}\label{LAM3} |G_k^{\lambda^*}|=0 \end{equation}
and \( \left\vert \lambda^{-1} G_k^{\lambda} \right\vert > 0 \) on \( [\lambda' - \delta, \lambda'](\subset [\lambda_0+\delta, \lambda'] )\) for some suitable $\delta>0.$ Therefore \( \lambda_0 \leq \lambda^*(k) < \lambda' - \delta \). Note that the definition of $\lambda^*$ implies that for any sufficiently small \( \epsilon > 0 \), \( |(\lambda^*+\epsilon)^{-1} G_k^{\lambda^* + \epsilon}| > 0 \). As in the previous argument, this contradicts \eqref{LAM3}. Hence, \eqref{contra1} is invalid. Therefore
\begin{equation}\label{K222}K(\lambda)\subset K(\lambda_0),~\text{for any}~\lambda\geq\lambda_0.\end{equation}

\

\eqref{K111} and \eqref{K222} complete the proof of $K(\lambda)=K(\lambda_0)$.

\

\textbf{Proof of $K(\lambda_0)=\Z$:} This directly follows from (4) of Theorem \ref{15} and the  fact $K(\lambda)=K(\lambda_0)$ for any $\lambda\geq \lambda_0.$
\qed

\subsection{Proof of Theorem \ref{Th4}}

We need the following two conclusions.
\begin{Lemma}\label{CE-cont}
For each \( k \in \mathbb{Z} \), the functions \( t^\pm_k(\lambda) \) are continuous on \( [\lambda_0, +\infty) \).
\end{Lemma}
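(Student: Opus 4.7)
The plan is to derive Lemma \ref{CE-cont} as a direct consequence of the local continuity result in Lemma \ref{conti-spec}, upgraded by the uniform labeling fact $K(\lambda)=\mathbb{Z}$ from Theorem \ref{Th2}. Continuity is a local property, so once one has a neighborhood of continuity around every point of $[\lambda_0,+\infty)$, the global statement follows automatically.

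Fix $k\in\mathbb{Z}$. First, I would invoke Theorem \ref{Th2} to guarantee that $k\in K(\lambda)$ for every $\lambda\geq\lambda_0$. This ensures that the gap $G_k^\lambda=(t_k^-(\lambda),t_k^+(\lambda))$ is open for all such $\lambda$, so the two functions $\lambda\mapsto t_k^\pm(\lambda)$ are unambiguously defined on the whole closed half-line $[\lambda_0,+\infty)$. This is the only place where Theorem \ref{Th2} enters: it removes the implicit restriction in Lemma \ref{conti-spec} to the set where $k$ happens to label a gap.

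Next, for an arbitrary $\lambda_1\in[\lambda_0,+\infty)$, since $k\in K(\lambda_1)$, the argument of Lemma \ref{conti-spec} supplies a constant $\delta_k=\delta_k(\lambda_1)>0$ and an open interval $U_{\lambda_1}=(\lambda_1-\delta_k,\lambda_1+\delta_k)$ on which $k\in K(\lambda)$ and on which $t_k^\pm(\cdot)$ is continuous. When $\lambda_1=\lambda_0$ one intersects $U_{\lambda_1}$ with $[\lambda_0,+\infty)$, giving right-continuity at the endpoint. Concretely, the proof of Lemma \ref{conti-spec} uses only the joint $C^2$-regularity of $g_n(x,t,\lambda)$ in $(x,t,\lambda)$ (furnished by the Induction Theorem \ref{theorem12}) together with the Implicit Function Theorem and the tangency characterization \eqref{tangent}; all of these remain valid with one-sided neighborhoods in $\lambda$. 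Varying $\lambda_1$ over $[\lambda_0,+\infty)$ thus yields a local continuity neighborhood around every point, which is exactly the global continuity of $t_k^\pm$ on $[\lambda_0,+\infty)$.

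I do not expect any real obstacle. The only point that deserves a sentence of justification is that Lemma \ref{conti-spec} was stated for $\lambda_1>\lambda_0$, whereas we need it at the endpoint $\lambda_1=\lambda_0$ as well. Inspection of that proof shows that the hypothesis $\lambda_1>\lambda_0$ is used only to have a two-sided neighborhood; the Implicit Function Theorem argument, together with $g_n\in C^2(I_n\times[\lambda_0,+\infty))$, goes through on the one-sided neighborhood $[\lambda_0,\lambda_0+\delta_k)$ without change, giving right-continuity at $\lambda_0$.
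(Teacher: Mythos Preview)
Your proposal is correct and takes essentially the same approach as the paper, which states verbatim that Lemma~\ref{CE-cont} ``is a direct corollary of Lemma~\ref{conti-spec} and Theorem~\ref{Th2}.'' Your extra care about one-sided continuity at the endpoint $\lambda_0$ is a reasonable elaboration but does not change the argument.
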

\begin{proof} This is a direct corollary of Lemma \ref{conti-spec} and Theorem \ref{Th2}. \end{proof}
Johnson and Moser proved the following result in \cite{JM}:

\begin{proposition}[\cite{JM}]
\label{rhocont}
For any \( \alpha \in \mathbb{R} \setminus \mathbb{Q} \) and \( v \in C^0(\mathbb{T}) \), the function \( N(E, \lambda v) \), as a function of \( (E, \lambda v) \), is jointly continuous in \( \mathbb{R} \times (0, +\infty) \).
\end{proposition}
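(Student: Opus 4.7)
The plan is to deduce joint continuity of $N(E,\lambda v)$ from two ingredients: (i) the Johnson-Moser identity $N(E,\lambda v) = 1 - 2\rho(E,\lambda v)$, which expresses the IDS in terms of the fibered rotation number of the Schr\"odinger cocycle $(\alpha, A^{(E-\lambda v)})$, and (ii) continuous dependence of the fibered rotation number on the cocycle map in the $C^0$-topology. Once (ii) is established, joint continuity of $N$ follows immediately because the map $(E,\lambda) \mapsto A^{(E-\lambda v)} \in C^0(\mathbb{T}, SL(2,\mathbb{R}))$ is jointly continuous (the matrix entries are continuous in $(E,\lambda,x)$ and $\mathbb{T}$ is compact).

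For ingredient (ii), I would first set up the lift carefully. Given $A \in C^0(\mathbb{T}, SL(2,\mathbb{R}))$, the M\"obius action of $A(x)$ on $\mathbb{RP}^1 = \mathbb{R}/\pi\mathbb{Z}$ is an orientation-preserving homeomorphism and lifts canonically to an increasing continuous map $\tilde{F}_A(x,\cdot)\colon \mathbb{R} \to \mathbb{R}$ satisfying $\tilde{F}_A(x,\theta+\pi)=\tilde{F}_A(x,\theta)+\pi$; moreover the assignment $A \mapsto \tilde{F}_A$ is continuous from $SL(2,\mathbb{R})$ into $C^0(\mathbb{R})$ uniformly on compact subsets. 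The fibered rotation number is then defined by
$$\rho(\alpha, A) \;=\; \lim_{n\to\infty}\frac{\tilde{F}_A^{(n)}(x,\theta)-\theta}{n\pi}, \qquad \tilde{F}_A^{(n)}(x,\theta) := \tilde{F}_{A(x+(n-1)\alpha)}\!\circ\cdots\circ\tilde{F}_{A(x)}(\theta),$$
with the limit existing and being independent of $(x,\theta)$ by unique ergodicity of the base rotation (since $\alpha$ is irrational).

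The heart of the proof is a standard sandwich argument for continuity of $\rho$. If $\|A - A'\|_{C^0(\mathbb{T})} \leq \epsilon$ and both $A, A'$ lie in a fixed compact subset of $C^0(\mathbb{T}, SL(2,\mathbb{R}))$, then uniform continuity of the projective action on that compact set yields $|\tilde{F}_A(x,\theta) - \tilde{F}_{A'}(x,\theta)| \leq \omega(\epsilon)$ for some modulus of continuity $\omega$ with $\omega(\epsilon)\to 0$ as $\epsilon\to 0$, uniformly in $(x,\theta)$. Because $\tilde{F}_{A'}(x,\cdot)$ is monotone and $\pi$-equivariant, a one-line induction gives $|\tilde{F}_A^{(n)}(x,\theta) - \tilde{F}_{A'}^{(n)}(x,\theta)| \leq n\,\omega(\epsilon)$. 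Dividing by $n\pi$ and sending $n \to \infty$ yields $|\rho(\alpha, A) - \rho(\alpha, A')| \leq \omega(\epsilon)/\pi$, proving continuity.

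The step I expect to require the most care is the sandwich argument: one must verify that the local Lipschitz (or uniform-continuity) bound on the one-step lift survives iteration, which relies on monotonicity of the lifts and the uniformity of the modulus over the compact parameter region. There is no deep obstacle, only bookkeeping, and the irrationality of $\alpha$ is used only to guarantee that the limit defining $\rho$ is independent of $(x,\theta)$. Combining these ingredients, $\rho(E,\lambda v)$ is jointly continuous on $\mathbb{R}\times(0,+\infty)$, and hence so is $N(E,\lambda v) = 1 - 2\rho(E,\lambda v)$.
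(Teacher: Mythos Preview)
The paper does not give its own proof of this proposition; it simply attributes the result to Johnson--Moser \cite{JM} and quotes it. Your overall plan---pass to the fibered rotation number via $N=1-2\rho$ and prove that $\rho$ depends continuously on the cocycle in $C^0$---is exactly the standard route and is what \cite{JM} (and Herman) do. So at the level of strategy there is nothing to compare.

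There is, however, a genuine gap in your sandwich step. The claimed ``one-line induction'' yielding
\[
\bigl|\tilde F_A^{(n)}(x,\theta)-\tilde F_{A'}^{(n)}(x,\theta)\bigr|\le n\,\omega(\epsilon)
\]
does \emph{not} follow from monotonicity and $\pi$-equivariance of the lifts. Writing out the induction, one must bound $|\tilde F_{A'}(x+k\alpha,a)-\tilde F_{A'}(x+k\alpha,b)|$ in terms of $|a-b|$, and for this you would need $\tilde F_{A'}(x,\cdot)$ to be $1$-Lipschitz. It is not: the projective action of $B\in SL(2,\R)$ has derivative ranging over $[\|B\|^{-2},\|B\|^{2}]$, so a one-step discrepancy $\omega(\epsilon)$ can be amplified by a factor up to $\|A'\|_\infty^{2}$ at each iterate, and the $n$-step discrepancy can grow exponentially in $n$, not linearly. (A concrete circle-map example: take $G(\theta)=\theta+a\sin^2\theta$ with $0<a<1$ and $F=G+\delta$; already $F^{(2)}-G^{(2)}$ exceeds $2\delta$ near the expanding region.)

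The standard repair keeps your setup but replaces the pointwise comparison by a sub/super-additivity squeeze. Set $M_n(A)=\sup_{x,\theta}(\tilde F_A^{(n)}(x,\theta)-\theta)$ and $m_n(A)=\inf_{x,\theta}(\tilde F_A^{(n)}(x,\theta)-\theta)$. The cocycle identity gives $M_{n+m}\le M_n+M_m$ and $m_{n+m}\ge m_n+m_m$, so by Fekete's lemma $M_n/n\downarrow \pi\rho(A)$ and $m_n/n\uparrow \pi\rho(A)$; in particular
\[
\frac{m_n(A)}{n\pi}\ \le\ \rho(A)\ \le\ \frac{M_n(A)}{n\pi}\qquad\text{for every }n.
\]
For each fixed $n$, the maps $A\mapsto M_n(A)$ and $A\mapsto m_n(A)$ are continuous in $C^0(\T,SL(2,\R))$ (finite compositions of continuous maps, then sup/inf over a compact set). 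Given $\epsilon>0$, pick $n$ with $M_n(A)/n-m_n(A)/n<\epsilon$; then for $A'$ close enough to $A$ one has $\rho(A')\in[m_n(A')/n\pi,\,M_n(A')/n\pi]\subset(\rho(A)-\epsilon,\rho(A)+\epsilon)$. This yields continuity of $\rho$, hence of $N$, as you intended.
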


\begin{proof}[Proof of Theorem \ref{Th4}:]
By Proposition \ref{rhocont} and Lemma \ref{CE-cont}, the set
\[
Q_{\lambda_0, \lambda'} := \left\{ N\left( \lambda t^{-}_{k}(\lambda), \lambda v \right) \mid \lambda_0 \leq \lambda \leq \lambda' \right\}
\]
is a closed interval for any \( \lambda' \geq \lambda_0 \). On the other hand, GLT implies
\[
Q_{\lambda_0, \lambda'} \subset \left\{ m\alpha \, (\mathrm{mod} \, 1) \mid m \in \mathbb{Z} \right\}.
\]
Then by the connectedness of a closed interval, we conclude the existence of  some~ $k' \in \mathbb{Z}$ such that
\[
Q_{\lambda_0, \lambda'} \equiv k'\alpha ({\rm mod~1}), \quad \forall~\lambda' > \lambda_0.
\]
Therefore, for any \( \lambda_1, \lambda_2 \geq \lambda_0 \) and \(k\in \Z\) it holds that
\[
1-2\rho(G_k^{\lambda_1}) = N\left( \lambda_1 t^{-}_{k}(\lambda_1), \lambda_1 v \right) = N\left( \lambda_2 t^{-}_{k}(\lambda_2), \lambda_2 v \right) = 1-2\rho(G_k^{\lambda_2}).
\]
This completes the proof of Theorem \ref{Th4}.
\end{proof}

\subsection{Proof of Theorem \ref{Th5}}
Combining Theorem \ref{Th4}, Theorem \ref{keyl} and \eqref{xzengd} of Theorem \ref{15}, we have
\[
\lim\limits_{\lambda \to +\infty} N(\lambda t_k^+(\lambda), \lambda v) =\lim\limits_{\lambda\rightarrow +\infty}\mathrm{Leb}\{x\in \R/\Z \mid v(x) < t_k(\lambda)\}=1-k\alpha \ (\mathrm{mod} \ 1).
\]

Then $
   \lim_{\lambda \to \infty} \rho(G_k^\lambda) = \frac{1-\lim\limits_{\lambda \to +\infty} N(\lambda t_k^+(\lambda), \lambda v)}{2}= \frac{k\alpha(\rm{mod}~1)}{2}.
   $ $\square$


\section{\textbf{Proof of Theorem \ref{idsac}}}

In this section, we will demonstrate that the IDS for $C^2$-$\cos$ type quasiperiodic Schr\"odinger operators is absolutely continuous.

\subsection{\textbf{Representation Theorem of IDS}} By \eqref{var-ep} and Theorem \ref{Th5}, we obtain
\begin{equation}\label{duand}
c_{\infty,2}(t_k^{X}) - c_{\infty,1}(t_k^{X}) = k\alpha=-[1 - 2\rho(\lambda t_k^{X})]=-N(\lambda t_k^+(\lambda), \lambda v)
\end{equation}
for any \( k \in \mathbb{Z} \) and \( X \in \{+,-\} \). This implies that the function \( c_{\infty,1}(\frac{E}{\lambda}) - c_{\infty,2}(\frac{E}{\lambda}) \) equals to \( N(E,\lambda v) \) at the endpoints of the gaps \footnote{Here, we define a special gap labeled by \( 0 \) and denoted by \( G^{\lambda}_0 \),  is \( (-\infty, \inf \Sigma^{\lambda}) \cup (\sup \Sigma, +\infty) \). For this gap, \( t^0_- = \inf \Sigma^{\lambda} \) and \( t^0_+ = \sup \Sigma^{\lambda} \).}.

By \eqref{ga1}, we define (recall that $E=\lambda t$)
\beq\label{gaa1}
\mathcal{C}(E) :=\lim\limits_{n\rightarrow\infty}(c_{n,1}(t) - c_{n,2}(t)){\rm\ (mod}\ 1{\rm )}:=c_{\infty,1}(t) - c_{\infty,2}(t){\rm\ (mod}\ 1{\rm )},\ \ \forall t\in \Sigma^\lambda.
\eeq

We now state the following Representation Theorem, which connects the limit-critical point with the IDS.

\begin{theorem}\label{represent}
It holds that
$\mathcal{C}(E)\in C^0(\lambda \Sigma^{\lambda})$ and
$\mathcal{C}(E) = N(E)~{\rm for~any}~E\in\lambda\Sigma^{\lambda}.$
\end{theorem}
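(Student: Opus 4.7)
The plan is to first show that $\mathcal{C}$ is well-defined and continuous on $\lambda\Sigma^\lambda$, then verify $\mathcal{C}=N$ on the (dense) set of gap endpoints, and finally extend the identity by continuity to all of $\lambda\Sigma^\lambda$.

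For well-definedness: for $t \in \Sigma^\lambda$, only Cases \textbf{1} and \textbf{2} of Theorem \ref{theorem12} can occur at each step $n$, because \eqref{Case3lem} shows that Case \textbf{3} forces $t \notin \Sigma^\lambda$. Estimates \eqref{ga1} and \eqref{ga2} therefore apply at every step, giving $|c_{n+1,j}(t) - c_{n,j}(t)| < \lambda^{-r_{n-1}/2}$ with $r_{n-1} \geq q_{N+n-2}-1$. Since the Diophantine bound \eqref{Calpha} makes $q_{N+n-2}$ grow at least super-polynomially in $n$, the series $\sum_{n\geq 1}\lambda^{-r_{n-1}/2}$ converges, and the sequence $\{c_{n,j}(t)\}_{n\geq 1}$ is Cauchy \emph{uniformly} in $t\in\Sigma^\lambda$. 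Thus $c_{\infty,j}(t)$ exists and $\mathcal{C}(E) = c_{\infty,1}(E/\lambda) - c_{\infty,2}(E/\lambda)\pmod 1$ is well-defined on $\lambda\Sigma^\lambda$.

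For continuity on $\lambda\Sigma^\lambda$: given the uniform Cauchy property, it suffices to show $c_{n,j}:\Sigma^\lambda\to \mathbb{R}/\mathbb{Z}$ is continuous for each fixed $n$. For $t_0\in\Sigma^\lambda$ in Case \textbf{1}, $c_{n,j}(t_0)$ is a transversal zero of $g_n(\cdot,t_0)$ by \eqref{I-first-derivative}, and \eqref{lm17-main} gives $\partial_t g_n \geq 1/10 >0$; the Implicit Function Theorem produces a $C^1$ branch $t\mapsto c_{n,j}(t)$ through $t_0$. In Case \textbf{2}, the zeros of $g_n \bmod \pi$ are again transversal (the tangent situation falls into Case \textbf{3}, which is excluded), \eqref{lm17-main1} again controls $\partial_t g_n$, and the selection rule ``$c_{n,j}$ is the zero closest to $c_{n-1,j}$'' is unambiguous thanks to the uniform closeness $|c_{n,j}-c_{n-1,j}|<\lambda^{-r_{n-2}/2}$, so IFT again yields a continuous branch. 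Hence each $c_{n,j}$ is continuous on $\Sigma^\lambda$ and the uniform limit $c_{\infty,j}$ is continuous, whence $\mathcal{C}\in C^0(\lambda\Sigma^\lambda)$.

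For the identity $\mathcal{C}=N$: by \eqref{duand}, at every gap endpoint $t^X_k$ one has
\[
c_{\infty,1}(t^X_k)-c_{\infty,2}(t^X_k)= -k\alpha= N(\lambda t^X_k,\lambda v)\pmod 1,
\]
so $\mathcal{C}(\lambda t^X_k)=N(\lambda t^X_k)$ for every $k\in K(\lambda)=\mathbb{Z}$ and $X\in\{+,-\}$. Since Wang--Zhang \cite{wz2} guarantees $\Sigma^\lambda$ is a Cantor set and Theorem \ref{15}(3)--(4) provides a gap $G^\lambda_k$ for every $k\in\mathbb{Z}$, the collection $\{\lambda t^X_k\}$ is dense in $\lambda\Sigma^\lambda$. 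Combining the continuity of $\mathcal{C}$ established above with the continuity of $N$ (Proposition \ref{rhocont}), the equality on a dense subset extends to every $E\in \lambda\Sigma^\lambda$.

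The main obstacle is the continuity step, because at points $t\in\Sigma^\lambda$ that approach a gap endpoint $t^X_k$ the Case \textbf{2} structure degenerates toward the tangent situation of Case \textbf{3}, and one could worry that the selected branch among $\{c_{n+1,j},c'_{n+1,j}\}$ jumps. The saving fact is that \eqref{ga1}--\eqref{ga2} pin the chosen $c_{n+1,j}(t)$ to a $\lambda^{-r_{n-1}/2}$-neighborhood of $c_{n,j}(t)$, which is orders of magnitude smaller than the separation $|c_{n+1,j}-c'_{n+1,j}|$ guaranteed by \eqref{cc}, so the selection rule is locally single-valued and the branch varies continuously with $t\in\Sigma^\lambda$.
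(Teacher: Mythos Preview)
Your proposal follows the same strategy as the paper's proof: establish continuity of $\mathcal{C}$ on $\lambda\Sigma^\lambda$ via the uniform Cauchy estimate \eqref{ga1}--\eqref{ga2} together with continuity of each $c_{n,j}$, then use \eqref{duand} on the dense set of gap endpoints and pass to the closure. The structure and the key inputs are identical.

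There is one inaccuracy in your Case~\textbf{2} continuity argument. You write that ``the zeros of $g_n\bmod\pi$ are again transversal (the tangent situation falls into Case~\textbf{3}, which is excluded)''. This is not correct: Case~\textbf{3} is the situation where $\min_x|g_{n+1}(x,t)\bmod\pi|$ exceeds $2\lambda^{-r_n^{1/50}}$, i.e.\ no near-zeros at all. The tangent configuration (a single double zero, the second and fourth pictures in Figure~2) is still Case~\textbf{2}, and it can occur for $t\in\Sigma^\lambda$ --- indeed the tangent parameters $t^k_{X,n}$ of Theorem~\ref{15}(1) accumulate precisely on the gap endpoints, which lie in the spectrum. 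At such $t$ the critical point $c_{n+1,j}$ is not a transversal zero of $g_{n+1}$ but rather a nondegenerate extremum: $\partial_x g_{n+1}=0$ with $\partial_x^2 g_{n+1}\neq 0$. The paper handles this by the dichotomy recorded in the proof of Lemma~\ref{conti-spec}: $c_n(t,\lambda)$ always satisfies either (a) $g_n=0$, $\partial_x g_n\neq 0$, or (b) $\partial_x g_n=0$, $\partial_x^2 g_n\neq 0$, and the Implicit Function Theorem applies to whichever equation is nondegenerate. Once you add case~(b) to your argument (using \eqref{maxmax1} or \eqref{2ndderivative} for the second-derivative bound), your proof is complete and matches the paper's.
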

\begin{proof}

For the first conclusion, it is enough to show $\mathcal{C}(t)\in C^0(\Sigma^{\lambda}).$ Thus the continuity directly follows from the continuity of $c_{n,j}$ with respect to $t$ in its well-defined domain and that the convergence in \eqref{gaa1} is uniform in $t$.

For the second one, one notes that $N,\mathcal{C}\in C^{0}(\lambda\Sigma^\lambda)$. By \eqref{duand}, we have
\[ N(E)=1-2\rho(E)=\mathcal{C}(E)
\]
on $\bigcup_{k\in \Z}\{\lambda t_k^{+},\lambda t_k^-\}$.

Note that it was proved in \cite{wz2} that $\bigcup\limits_{k\in \Z}\{t_k^{+},t_k^-\}$ is dense in $\Sigma^{\lambda}.$ Therefore, \( N(E) =\mathcal{C}(E) \) on $\overline{\bigcup_{k\in \Z}\{\lambda t_k^{+},\lambda t_k^-\}}=\lambda \Sigma^{\lambda}.$
\end{proof}

\noindent \textbf{Proof of Theorem \ref{idsac}:} For any $\gamma>0$ and $\tau>1$, we define the following sets
\[
\quad \mathcal{G}_{\gamma,\tau} := \{ E \in \lambda\Sigma^{\lambda} \mid \inf_{j\in\Z}\vert \mathcal{C}(E) - k\alpha-j\vert \geq \frac{\gamma}{|k|^{\tau}},~\forall~k\in~\Z-\{0\}\},
\]
\[
\Sigma_2 := \lambda\Sigma^{\lambda} -\cup_{\gamma>0}\mathcal{G}_{\gamma,\tau}.
\]

\begin{Lemma}\label{ga3}
\( N(\cdot) \) is Lipschitz on each \(\mathcal{G}_{\gamma,\tau}\). More precisely, there is $C(\gamma)>0$ such that
$$
|N(E)-N(E')|\leq C(\gamma)|E-E'|, \ \  \forall E,E'\in \mathcal{G}_{\gamma,\tau}.
$$
\end{Lemma}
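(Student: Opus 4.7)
The plan is to use the Representation Theorem to replace $N$ by $\mathcal{C}$ on the spectrum, then exploit the Diophantine condition defining $\mathcal{G}_{\gamma,\tau}$ to force the induction into Case $\mathbf{1}$ at all sufficiently high scales, and finally extract the Lipschitz estimate from the derivative bound of Corollary \ref{lmg'}.

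By Theorem \ref{represent}, $N(E) = \mathcal{C}(E)$ on $\lambda\Sigma^{\lambda}$, so the task reduces to showing that $\mathcal{C}$ is Lipschitz on $\mathcal{G}_{\gamma,\tau}$. Fix $E_0 \in \mathcal{G}_{\gamma,\tau}$ and set $t_0 = E_0/\lambda$. The defining bound $\inf_{j\in\Z}|\mathcal{C}(E_0) + k\alpha - j| \geq \gamma/|k|^\tau$ for every $k \neq 0$, together with the exponentially fast convergence $|c_{n,j}(t_0) - c_{\infty,j}(t_0)| \leq C\lambda^{-r_{n-1}/2}$ obtained by iterating \eqref{ga1} and \eqref{ga2}, would yield for all $n$ large enough and every $0 < |k| < q_{N+n-1}$,
$$\inf_{j\in\Z}|c_{n,1}(t_0) + k\alpha - c_{n,2}(t_0) - j| \geq \frac{\gamma}{|k|^\tau} - 2C\lambda^{-r_{n-1}/2} > 2q_{N+n-1}^{-2000\tau}.$$
Even in the worst case $|k| \sim q_{N+n-1}$ this reduces to $\gamma \gg q_{N+n-1}^{-1999\tau}$, which holds for $n \geq n^*(\gamma)$ thanks to the large gap between the exponents $\tau$ and $2000\tau$. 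Hence $T^k I_{n,1}(t_0) \cap I_{n,2}(t_0) = \emptyset$ for every such $k$, so step $n$ at $t_0$ falls in Case $\mathbf{1}$ for all $n \geq n^*(\gamma)$.

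Corollary \ref{lmg'} now delivers $|d(c_{n,1}(t) - c_{n,2}(t))/dt| \leq C(\alpha,v)\lambda^{20q_{N+n^*-2}}$ on the $t$-ball $B(t_0,\lambda^{-q_{N+n^*-1}^{1/800}})$ for every $n \geq n^*$. Because consecutive Case $\mathbf{1}$ steps satisfy \eqref{gn-gn+1}, the angle functions converge in $C^2$, and by the implicit function theorem $c_{n,j}$ converges to $c_{\infty,j}$ in $C^1$, so the derivative bound persists in the limit. Translating back to $E$ via $t = E/\lambda$,
$$\left|\frac{d\mathcal{C}(E)}{dE}\right| \leq \lambda^{-1} C(\alpha,v)\lambda^{20q_{N+n^*-2}} =: C_1(\gamma)$$
on the Euclidean ball of radius $\delta(\gamma) := \tfrac{1}{2}\lambda^{1-q_{N+n^*-1}^{1/800}}$ around $E_0$, independently of the base point.

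To pass from this local estimate to global Lipschitz continuity on $\mathcal{G}_{\gamma,\tau}$, I would split into two cases: for $E,E' \in \mathcal{G}_{\gamma,\tau}$ with $|E-E'| < \delta(\gamma)$, both points lie in the ball around $E$ above, giving $|N(E)-N(E')| \leq C_1(\gamma)|E-E'|$; otherwise $|E-E'| \geq \delta(\gamma)$ and the trivial bound $|N(E)-N(E')| \leq 1$ yields $|N(E)-N(E')| \leq \delta(\gamma)^{-1}|E-E'|$. Setting $C(\gamma) := \max\{C_1(\gamma), \delta(\gamma)^{-1}\}$ closes the argument. The main obstacle will be the first step, where one must quantify how the Diophantine condition on the limiting quantity $\mathcal{C}(E_0)$ transfers, with controlled loss, to a non-resonance condition at every intermediate scale $n$, uniformly across $\mathcal{G}_{\gamma,\tau}$; the delicate balance is between the error $\lambda^{-r_{n-1}/2}$ from the convergence of critical points and the Case $\mathbf{2}$ threshold $q_{N+n-1}^{-2000\tau}$, which forces $n^*(\gamma)$, and hence the final constant $C(\gamma)$, to blow up as $\gamma \to 0$.
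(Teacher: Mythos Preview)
Your overall strategy matches the paper's: reduce to $\mathcal{C}$ via the Representation Theorem, use the Diophantine bound on $\mathcal{C}(E_0)$ together with the exponential convergence \eqref{ga1}--\eqref{ga2} to force Case~$\mathbf{1}$ for all $n\ge n^*(\gamma)$, then invoke Corollary~\ref{lmg'}. Your treatment of the non-resonance step is in fact slightly cleaner than the paper's, handling all $0<|k|<q_{N+n-1}$ at once rather than splitting at a threshold $k^*(\gamma)$.

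There is, however, one genuine technical slip in how you pass from the finite-scale derivative bound to the Lipschitz estimate for $\mathcal{C}$. Corollary~\ref{lmg'} gives the bound on $d(c_{i,1}-c_{i,2})/dt$ only on the ball $B(t_0,\lambda^{-q_{N+i-1}^{1/800}})$, which \emph{shrinks} as $i$ grows; it does not hold on the fixed ball $B(t_0,\lambda^{-q_{N+n^*-1}^{1/800}})$ for every $i\ge n^*$ as you assert. Consequently your ``$c_{n,j}\to c_{\infty,j}$ in $C^1$ on a fixed ball'' step does not go through: the domains collapse to the single point $t_0$, and a pointwise derivative bound on the possibly nowhere-dense set $\mathcal{G}_{\gamma,\tau}$ does not by itself yield Lipschitz continuity there.

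The paper repairs this exactly as you would anticipate: rather than passing to the limit, for a given pair $E,E'\in\mathcal{G}_{\gamma,\tau}$ one picks $n\gg n^*$ so that $\lambda^{-q_{N+n-1}^{1/800}}\le |E-E'|\le C\lambda^{-q_{N+n-2}^{1/800}}$, applies the derivative bound at that finite scale to obtain $|c_n(E)-c_n(E')|\le C^*(\gamma)|E-E'|$, and then absorbs the convergence error $|\mathcal{C}-c_n|\le C\lambda^{-r_{n-1}/2}\ll |E-E'|$ into the constant. Your final short-range/long-range dichotomy with $\delta(\gamma)$ then closes the argument just as in the paper.
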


\begin{proof}

For some fixed $E\in \mathcal{G}_{\gamma,\tau},$ by the definition there exists some $k^*(\gamma)\in \Z_+$ such that
\begin{equation}\label{ce-k}\inf_{j\in\Z}\vert \mathcal{C}(E) + k\alpha-j\vert \geq \frac{1}{|k|^{\frac{3}{2}\tau}},~\forall~|k|\geq k^*.\end{equation}

By \eqref{ga1} and \eqref{gaa1},  we have \begin{equation}\label{ce-k1}\inf\limits_{j\in \Z}|\mathcal{C}(E)-(c_{n,1}(E)-c_{n,2}(E))-j|\leq C\lambda^{-\frac{1}{2}r_{n-1}},~\forall~n\in \Z_+.\end{equation}

By \eqref{ce-k} and \eqref{ce-k1}, for $k^*<|k|\leq q_{N+n-1}$, we have
$$\inf\limits_{j\in \Z}|c_{n,1}(t)-c_{n,2}(t)+ k\alpha-j|\geq \frac{1}{|q_{N+n-1}|^{\frac{3}{2}\tau}}-C\lambda^{-\frac{1}{2}r_{n-1}}>|I_n|,~\forall~n\in \Z_+,$$
which implies for $k^*<|k|\leq q_{N+n-1}$,
\begin{equation}\label{uuu}T^kI_{n,1}\bigcap I_{n,2}=\emptyset.
\end{equation}

We claim that there exists some $n^*(\gamma)$ such that each step $n>n^*(\gamma)$ belongs to Case \textbf{1}. In fact, if there exist infinity many $n_j$ and $0<|k_j|<q_{N+n_j-1}$ such that step $n_j$ belongs to Case \textbf{2} with
$T^{k_j}I_{n_j-1,1}\bigcap I_{n_j-1,2}\neq \emptyset,$  then it will contradict \eqref{uuu}.

Then by \eqref{lmg'-2}, for any $n>n^*(\gamma)$,
\[
\left\vert \frac{d (c_{n,1} - c_{n,2})}{d E} \right\vert \leq C\lambda^{q_{N+n^*-2}}:=C^*(\gamma)
\]
holds uniformly for \( E' \in B(E, \lambda^{-q^{\frac{1}{800}}_{N+n-2}}) \) and \( n \geq n^* \). Thus by \eqref{ga1} for any $E'\in\mathcal{G}_{\gamma,\tau}\cap  B(E, \lambda^{-q^{\frac{1}{800}}_{N+n-2}})$, we have
\[
\left\vert \mathcal{C}(E) - \mathcal{C}(E') \right\vert \leq \left\vert c_n(E) - c_n(E') \right\vert + \lambda^{-\frac{1}{2}r_{n-1}} \leq C^*\left\vert E - E' \right\vert + \lambda^{-\frac{1}{2}q_{N+n-2}},
\]
which implies
\[
\left\vert \mathcal{C}(E) - \mathcal{C}(E') \right\vert \leq 2C^*(\gamma)\left\vert E - E' \right\vert
\]
for large $n\gg n^*$ and any \(E'\in\mathcal{G}_{\gamma,\tau}\) satisfying \( \lambda^{-q^{\frac{1}{800}}_{N+n-1}} \leq \left\vert E - E' \right\vert \leq C \lambda^{-q^{{\frac{1}{800}}}_{N+n-2}}\). Thus there exists a  small neighbor of $E$ denoted by $B_{E}$ such that
\begin{equation}\label{bebe}
\left\vert \mathcal{C}(E) - \mathcal{C}(E') \right\vert \leq 2C^*(\gamma)\left\vert E - E' \right\vert,~\forall~E'\in~B_{E}\bigcap \mathcal{G}_{\gamma,\tau}.
\end{equation}

Since $\bigcup\limits_{E\in \mathcal{G}_{\gamma,\tau}}\left(B_{E}\bigcap \mathcal{G}_{\gamma,\tau}\right)=\mathcal{G}_{\gamma,\tau},$ \eqref{bebe} implies what we desire.
\end{proof}

\begin{Lemma}\label{ga4}
We have that
$\text{Leb} \{ N(\Sigma_2) \} = 0.
$
\end{Lemma}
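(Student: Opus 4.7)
The plan is to use Theorem \ref{represent} as the bridge between the spectral-theoretic object $N$ and the dynamical object $\mathcal{C}$, and then to estimate the measure of $\mathcal{C}(\Sigma_2)$ directly by a Borel--Cantelli-type argument on $\R/\Z$.

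First, since Theorem \ref{represent} gives $N\equiv\mathcal{C}$ on $\lambda\Sigma^{\lambda}$ and $\Sigma_2\subset \lambda\Sigma^{\lambda}$, I have $N(\Sigma_2)=\mathcal{C}(\Sigma_2)$, so it suffices to show $|\mathcal{C}(\Sigma_2)|=0$. Next, I would unwind the definitions: $E\in\Sigma_2$ means $E\notin \mathcal{G}_{\gamma,\tau}$ for every $\gamma>0$, i.e., for each $\gamma>0$ there exists some $k\in\Z\setminus\{0\}$ with $\inf_{j\in\Z}|\mathcal{C}(E)-k\alpha-j|<\gamma/|k|^{\tau}$. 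Introducing the ``Liouville-type'' set
\[
W_{\tau}:=\bigcap_{\gamma>0}\bigcup_{k\in\Z\setminus\{0\}}\left\{y\in\R/\Z:\ \inf_{j\in\Z}|y-k\alpha-j|<\frac{\gamma}{|k|^{\tau}}\right\},
\]
this characterization shows $\mathcal{C}(\Sigma_2)\subset W_{\tau}$.

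Finally, for each fixed $\gamma>0$ the inner union has Lebesgue measure bounded by $\sum_{k\neq 0}\tfrac{2\gamma}{|k|^{\tau}}=4\gamma\zeta(\tau)$, which is finite precisely because $\tau>1$. Sending $\gamma\to 0$ yields $|W_{\tau}|=0$, hence $|N(\Sigma_2)|\leq |W_{\tau}|=0$. The only substantive input is Theorem \ref{represent} (together with the assumption $\tau>1$ in the definition of $\mathcal{G}_{\gamma,\tau}$); once the representation formula is available, the conclusion is just a short union-bound estimate, so I do not anticipate any real obstacle in this final step. Combined with Lemma \ref{ga3}, this lemma will then give absolute continuity of $N$ by decomposing $\lambda\Sigma^{\lambda}$ as the countable union $\bigcup_{m\geq 1}\mathcal{G}_{1/m,\tau}\cup \Sigma_2$.
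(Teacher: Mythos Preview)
Your proof is correct and is exactly the ``simple Borel--Cantelli argument'' the paper invokes: use Theorem \ref{represent} to replace $N$ by $\mathcal{C}$, observe that $\mathcal{C}(\Sigma_2)$ lands in the Liouville-type set $W_{\tau}$, and bound $|W_{\tau}|$ by $4\gamma\zeta(\tau)\to 0$ using $\tau>1$. There is nothing to add.
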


\begin{proof}
It follows from Theorem \ref{represent} and a simple Borel-Catelli argument.
\end{proof}

Finally for a zero Lebesgue measure set $e$, let
$$
e_n=e\cap\mathcal{G}_{\frac{1}{n},\tau}, \ \ e_c=e\cap\Sigma_2.
$$
By Lemma \ref{ga3} and Kirszbtann Theorem \cite{Kir},  one can find some Lipschitz function $C_n:\R\rightarrow\R$  with
$$
C_n\big|_{\mathcal{G}_{\frac{1}{n},\tau}}=\mathcal{C}\big|_{\mathcal{G}_{\frac{1}{n},\tau}}.
$$
Thus, $\text{Leb} \{N(e_n)\}=\text{Leb} \{\mathcal{C}(e_n)\}=\text{Leb} \{C_n(e_n)\}=0$. By Lemma \ref{ga4}, we have
$$
\text{Leb} \{N(e)\}\leq \sum_{n}\text{Leb} \{N(e_n)\}+\text{Leb} \{N(e^c)\}=0,
$$
which implies the absolute continuity of the IDS.

\section{Proof of Theorem \ref{homogeneous}}

First we need the following result.
\begin{theorem}For $k, k' \in K(\lambda)$ with $k \neq k'$, it holds that
    \begin{equation}\label{dgkij}
    \text{dist}(G^{\lambda}_{k}, G^{\lambda}_{k'}) \geq c|k-k'|^{-C}.
    \end{equation}
\end{theorem}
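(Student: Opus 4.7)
The plan is to translate the distance between $G_k^\lambda$ and $G_{k'}^\lambda$ into a difference of IDS values at their endpoints, and then combine the Diophantine property of $\alpha$ with an approximate H\"older estimate for $N(\cdot,\lambda v)$ to extract a polynomial-in-$|k-k'|$ lower bound.

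By Theorems~\ref{Th4}, \ref{Th5}, and~\ref{15}, at any endpoint $t_k^X(\lambda)$ of the gap $G_k^\lambda$ one has $N(\lambda t_k^X,\lambda v)\equiv -k\alpha \pmod{1}$. Hence for two such endpoints $E_1\in\{\lambda t_k^+,\lambda t_k^-\}$ and $E_2\in\{\lambda t_{k'}^+,\lambda t_{k'}^-\}$, the monotonicity of $N$ in $E$ together with the Diophantine condition~\eqref{dioph} gives
\[
|N(E_1,\lambda v)-N(E_2,\lambda v)|\ge \inf_{j\in\Z}|(k-k')\alpha-j|\ge c|k-k'|^{-\tau}.
\]

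For the matching upper bound I invoke Theorem~\ref{keyl}: $N(E,\lambda v)=\text{Leb}\{x:\lambda v(x)<E\}+O(\lambda^{-c})$. Since $v$ is $C^2$ cosine-type with non-degenerate extrema, the map $E\mapsto\text{Leb}\{x:\lambda v(x)<E\}$ is bi-Lipschitz away from the two extremal values of $\lambda v$ and globally $\tfrac12$-H\"older in $E$ (the bottleneck being the quadratic vanishing of $v'$ at the extrema). Consequently
\[
|N(E_1,\lambda v)-N(E_2,\lambda v)|\le C|E_1-E_2|^{1/2}+O(\lambda^{-c}).
\]
Comparing the two inequalities, in the regime $|k-k'|^{-\tau}\gg\lambda^{-c}$ one gets $|E_1-E_2|\ge c|k-k'|^{-2\tau}$, i.e.\ $\text{dist}(G_k^\lambda,G_{k'}^\lambda)\ge c\lambda^{-1}|k-k'|^{-2\tau}$.

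The main obstacle is the complementary regime $|k-k'|\gtrsim \lambda^{c/\tau}$, where the additive $O(\lambda^{-c})$ error in Theorem~\ref{keyl} dominates the Diophantine separation. To handle it I would either sharpen the error in Theorem~\ref{keyl} to $O(\lambda^{-c'})$ for any prescribed $c'>0$ by iterating the Hoffman-Wielandt argument~\eqref{pematrix} and exploiting the explicit $C^2$ cosine-type shape of $v$, or chase the Inductive Theorem at the scale $n$ with $q_{N+n-1}\sim|k-k'|^{1/(2000\tau)}$, using~\eqref{var-ep} at the endpoints together with Corollary~\ref{lmg'} and the Case~\textbf{2} derivative bound~\eqref{lm17-main1} to control how $c_{n,1}(t)-c_{n,2}(t)$ varies across the relevant energy window. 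Either route delivers $\text{dist}(G_k^\lambda,G_{k'}^\lambda)\ge c|k-k'|^{-C}$ with $C$ depending only on $\tau$.
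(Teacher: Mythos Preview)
Your overall strategy---compute the IDS at the gap endpoints, separate via the Diophantine condition, and invert a regularity estimate for $N$---is exactly the paper's. The difference is in which regularity input you use, and your choice creates the gap you then struggle to close.

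The paper does not go through Theorem~\ref{keyl} at all. Instead it simply cites \cite{LWY}, which proves that for $C^2$ cosine-type potentials with large coupling and Diophantine frequency the IDS is genuinely H\"older continuous: $|N(t_1)-N(t_2)|\le C|t_1-t_2|^{\alpha}$ with constants uniform for $\lambda\ge\lambda_0$. With that in hand the argument is two lines:
\[
\frac{c_2}{|k-k'|^{c_1}}\le \inf_{j\in\Z}|(k-k')\alpha-j|=|N(t_k^+)-N(t_{k'}^-)|\le C\,\text{dist}(G_k^\lambda,G_{k'}^\lambda)^{\alpha},
\]
and one solves for the distance. There is no additive error term and hence no ``complementary regime.''

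Your route via Theorem~\ref{keyl} cannot be repaired by Fix~1: improving the exponent in $O(\lambda^{-c})$ still leaves an additive error that, for any fixed $\lambda$, is overwhelmed once $|k-k'|^{-\tau}$ drops below it, so you never obtain constants uniform in $k,k'$ (which is what the homogeneity proof in Section~7 needs). Fix~2 is in principle viable, but following it through is essentially reproving the H\"older continuity of the IDS, i.e.\ redoing \cite{LWY}. The missing idea in your write-up is simply to invoke that result directly.
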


\begin{proof} For $k,k'\in K(\lambda),$ assume that $t_k^+<t_{k'}^-.$ By \eqref{duand}, $N(t_k^+)=k\alpha$ and $N(t_{k'}^-)=k'\alpha.$ By \cite{LWY}, $N(E)$ is H\"older continuous. Thus
$$\frac{c_2}{|k-k'|^{c_1}}<|k\alpha -k'\alpha|=|N(t_k^+)-N(t_{k'}^-)|\leq C|t_k^+-t_{k'}^-|^{\alpha}\leq C(\text{dist}(G^{\lambda}_{k}, G^{\lambda}_{k'}))^{\alpha}.$$
Hence $\text{dist}(G^{\lambda}_{k}, G^{\lambda}_{k'})\geq c|k-k'|^{-C}.$
\end{proof}
Next, we proceed to prove the homogeneity of the spectrum. Let
\begin{equation}\label{dyghhh}
0 < \epsilon \leq \epsilon_0(\lambda_0, v, \alpha) \ll \min\{c, C^{-1}\},
\end{equation}
with \( C > 1 > c > 0 \) defined in Theorem \ref{15}, and  $E$ be a spectral point and
\[
k_{\epsilon} := \min \{|k| \in \mathbb{Z}_+ : \Sigma_{\alpha,\lambda v} \cap (E-\epsilon, E+\epsilon) \neq \emptyset\}.
\]

\textbf{Case I:} If \( k_{\epsilon} \) satisfies
$
2C\lambda^{-\frac{1}{2}ck_{\epsilon}} \leq \frac{1}{10} \epsilon,
$
then by the definition of \( k_{\epsilon} \), our assumption and the upper bound of \( |G^{\lambda}_{k}| \) from Theorem \ref{15}, we have
\begin{equation}\label{tjy}
\left\vert (E-\epsilon, E+\epsilon) \cap (\mathbb{R} \setminus \Sigma_{\alpha,\lambda v}) \right\vert \leq \sum_{j = k_{\epsilon}}^{+\infty} |\lambda G^{\lambda}_{j}| \leq \sum_{j = k_{\epsilon}}^{+\infty} C \lambda^{-c|j|} \leq 2C\lambda^{-\frac{1}{2}ck_{\epsilon}} \leq \frac{1}{10} \epsilon.
\end{equation}
Clearly, \eqref{tjy} implies that
\begin{equation}\label{tjydjl}
\left\vert (E-\epsilon, E+\epsilon) \cap \Sigma_{\alpha,\lambda v} \right\vert \geq 2\epsilon - \frac{1}{10} \epsilon > \frac{1}{100} \epsilon.
\end{equation}

\textbf{Case II:} If \( k_{\epsilon} \) satisfies
\[
2C\lambda^{-\frac{1}{2}ck_{\epsilon}} \geq \frac{1}{10} \epsilon,
\]
then for large \( \lambda \geq \lambda_0(\alpha, v) \) and small \( \epsilon \leq \epsilon_0(\lambda_0, c, C) \) defined in \eqref{dyghhh}, we have
$$
k_{\epsilon} \leq 4c^{-1} \frac{\ln \epsilon^{-1}}{\ln \lambda}.
$$
Note that \( E \notin \lambda G^{\lambda}_{k_{\epsilon}} \). Without loss of generality, assume that
\begin{equation}\label{wolg0}
(E, E+\epsilon) \cap \lambda G^{\lambda}_{k_{\epsilon}} \neq \emptyset,
\end{equation}
which implies
\begin{equation}\label{wolg}
\lambda t^-_{k_{\epsilon}}(\lambda) > E,
\end{equation}
where \( \lambda t_{k_{\epsilon}}^-(\lambda) \) is the left endpoint of \( \lambda G^{\lambda}_{k_{\epsilon}}\).

We denote \( G = (E-\epsilon, \lambda t_{k_{\epsilon}}^-(\lambda)) \). By assumptions \eqref{wolg0} and \eqref{wolg}, we have
\[
E-\epsilon \leq E < \lambda t_{k_{\epsilon}}^- < E+\epsilon,
\]
hence
\[
\epsilon \leq |E - (E-\epsilon)| \leq |\lambda t_{k_{\epsilon}}^- - (E-\epsilon)| = |\lambda G| \leq |E+\epsilon - (E-\epsilon)| \leq 2\epsilon.
\]
We define
\[
k^*_{\epsilon} = \min \{|k| > k_{\epsilon} : G^{\lambda}_{k} \cap G \neq \emptyset\}.
\]
By \eqref{dgkij},
we have
\begin{equation}\label{day}
k^*_{\epsilon} \geq \epsilon^{-c}.
\end{equation}

By \eqref{day} and the upper bound of \( |G^{\lambda}_{k}| \) from Theorem \ref{15}, we have
\[
\left\vert G \cap (\mathbb{R} \setminus \Sigma_{\alpha,\lambda v}) \right\vert \leq C\lambda^{-\frac{1}{10000}\epsilon^{-c}} < C e^{-\frac{1}{10000}\epsilon^{-c}} \leq \frac{\epsilon}{10}.
\]
Therefore, from the fact that \( (E, E+\epsilon) \cup G = (E-\epsilon, E+\epsilon) \), we have
$$
\left\vert (E-\epsilon, E+\epsilon) \cap (\mathbb{R} \setminus \Sigma_{\alpha,\lambda v}) \right\vert < |(E, E+\epsilon)| + \left\vert G \cap (\mathbb{R} \setminus \Sigma_{\alpha,\lambda v}) \right\vert < \epsilon + \frac{\epsilon}{10}.
$$
Then,
\begin{equation}\label{jl2}
\left\vert (E-\epsilon, E+\epsilon) \cap \Sigma_{\alpha,\lambda v} \right\vert \geq 2\epsilon - \epsilon - \frac{\epsilon}{10} > \frac{\epsilon}{100}.
\end{equation}
Hence, by \eqref{tjydjl} and \eqref{jl2}, for any \( 0 < \epsilon \leq \min\{{\rm diam}~\Sigma_{\alpha,\lambda v},\  \epsilon_0\} \) (${\rm diam}~(S):=\sup\limits_{p,q\in S}|p-q|$), we have
\[
\left\vert (E-\epsilon, E+\epsilon) \cap \Sigma_{\alpha,\lambda v} \right\vert \geq \min\left\{\frac{\epsilon_0}{200 \cdot \mathrm{diam}(\Sigma_{\alpha,\lambda v})}, \frac{1}{100}\right\} \cdot \epsilon,
\]
which shows that \( \Sigma_{\alpha,\lambda v} \) is
\[
\min\left\{\frac{\epsilon_0}{200 \cdot \mathrm{diam}(\Sigma_{\alpha,\lambda v})}, \frac{1}{100}\right\}\text{-homogeneous.}
\]

\section{Proof of Theorem \ref{15}}
This section is devoted to the proof of Theorem \ref{15}.
\begin{Lemma}\label{gapkehua} $t\notin \Sigma^{\lambda}$ if and only if there exists some step which belongs to Case \textbf{3}.
\end{Lemma}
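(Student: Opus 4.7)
The plan is to prove the two implications separately, using the Induction Theorem together with Lemma \ref{lem20} and Proposition \ref{johnson}.

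For the direction ``some step in Case \textbf{3} implies $t\notin\Sigma^\lambda$'', I would simply invoke equation \eqref{Case3lem} of Theorem \ref{theorem12}, which states precisely that whenever Case \textbf{3} occurs at some step $n$, one has $t\notin\Sigma^\lambda$. (Alternatively, as pointed out in Remark \ref{appearance-gap}, the cocycle at such a $t$ is uniformly hyperbolic by \cite{yoc, z1}, and then $t\notin\Sigma^\lambda$ follows from Proposition \ref{johnson}.) This direction requires no new work.

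For the converse, I would argue by contrapositive: assuming that no step belongs to Case \textbf{3}, I will show $t\in\Sigma^\lambda$. Under this assumption, for every $n\ge 1$ the step $n$ falls in Case \textbf{1} or in Case \textbf{2} (but not its subcase Case \textbf{3}). If step $n$ is in Case \textbf{1}, then by the conclusion of that case $g_{n+1}$ has a genuine zero in each $I_{n,j}$, so $\min_{x\in I_n}|g_{n+1}(x,t)\,(\mathrm{mod}\,\pi)|=0$. If step $n$ is in Case \textbf{2} but not Case \textbf{3}, then by the very definition of Case \textbf{3} the set $\tilde{I}_{n,j}$ is non-empty, which means there exists $x\in I_n$ with
\[
|g_{n+1}(x,t)\,(\mathrm{mod}\,\pi)|\le \lambda^{-r_n^{1/700}}.
\]
In either case, since $r_n\ge q_{N+n-1}-1\to\infty$, we conclude
\[
\liminf_{n\to\infty}\min_{x\in I_n}|g_{n+1}(x,t)\,(\mathrm{mod}\,\pi)|=0.
\]
Equivalently, $\liminf_{j\to\infty}\min_{x\in I_j}|g_j(x,t)\,(\mathrm{mod}\,\pi)|=0$. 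By the contrapositive of Lemma \ref{lem20}, this forces $t\in\Sigma^\lambda$, completing the converse.

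I do not anticipate a real obstacle here: the lemma is essentially a bookkeeping statement that the three Cases in the Induction Theorem cover all possibilities, with only Case \textbf{3} producing a uniform positive lower bound on $|g_n \bmod \pi|$. The only minor point to check carefully is that the small-angle bound $\lambda^{-r_n^{1/700}}$ from the non-empty $\tilde{I}_{n,j}$ in Case \textbf{2} actually tends to zero, which follows from $r_n\to\infty$ as noted above. Everything else is a direct citation of results already established in the excerpt.
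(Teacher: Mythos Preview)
Your proposal is correct and follows essentially the same approach as the paper: one direction is \eqref{Case3lem}, and the converse is the contrapositive via Lemma \ref{lem20}, using that outside Case \textbf{3} the minimum of $|g_{n+1}\bmod\pi|$ is bounded by a quantity tending to zero. The only cosmetic difference is that the paper quotes the threshold $\lambda^{-r_n^{1/50}}$ from the Case \textbf{3} definition directly, whereas you invoke the non-emptiness of $\tilde I_{n,j}$ and the bound $\lambda^{-r_n^{1/700}}$; both work equally well.
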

\begin{proof}
The sufficiency follows from \eqref{Case3lem}. For the necessity, if $t\notin \Sigma^{\lambda}$ and the iteration never comes to Case \textbf{3}, then from the definition of Case \textbf{3}, we have
 $
 \min\limits_{x\in I_n}|g_{n+1}(x,t)(\rm{mod}~\pi)|\leq \lambda^{-r_n^{\frac{1}{50}}}.
 $
 It follows that
 $
 \liminf\limits_{n\rightarrow +\infty}\min\limits_{x\in I_n}|g_{n+1}(x,t)(\rm{mod}~\pi)|=0.
 $
 By Lemma \ref{lem20}, $t\in \Sigma^{\lambda}.$ This leads to a contradiction.
\end{proof}

Let $G^*=(a,b)$ be a spectral gap and $t\in (a,b)$, by Lemma \ref{gapkehua}, there is $n\in\N$ such that in the induction  Case $\textbf{3}_n$ occurs.

\begin{Lemma}\label{lemma9} Let $t\in G^*$ be as above, then the following holds true.
\begin{enumerate}
\item[i:] There is $k\in \Z$ such that $n\geq s(k)$ with $T^kI_{n,1}(t)\bigcap I_{n,2}(t)\neq \emptyset$ and  \begin{equation}\label{csk}\inf\limits_{j\in \Z}\vert c_{s(k),1}(t)+k\alpha-c_{s(k),2}(t)-j\vert\leq C\lambda^{-r^{\frac{1}{100}}_{s(k)-1}}.\end{equation}
\item[ii:]
We have that
\begin{equation}\label{gapest}c\lambda^{-10000|k|}\leq |a-b|\leq C\lambda^{-\frac{1}{10000}|k|}.\end{equation}
Moreover for $Q\in \{a,b\},$ we have
\begin{equation}\label{def-K}\inf\limits_{j\in \Z}|c_{s(k)+p,1}(Q)+k\alpha-c_{s(k)+p,2}(Q)-j|\leq C\lambda^{-r^{\frac{1}{100}}_{s(k)+p-1}}, \ p\geq 0.\end{equation}
Furthermore there exist two sequences denoted by $\{a_p\}$ and $\{b_p\}$ such that for all~$p\geq 0$ it holds that $$|a_p-a|+|b_p-b|\leq C\lambda^{-r^{\frac{1}{100}}_{s(k)+p-1}}$$ and $g_{s(k)+p+1,1}(Q,x)$ is tangent to $y = 0~(\text{\rm mod}~\pi).$
\end{enumerate}
\end{Lemma}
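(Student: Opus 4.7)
The plan is to first identify the resonance integer $k$ from the Case 3 step, then propagate the resonance backwards to level $s(k)$ using the inductive structure, and finally analyze the gap boundary via tangencies of the angle functions $g_{s(k)+p+1}$.

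For (i), Lemma \ref{gapkehua} produces a step $n$ at which Case $\textbf{3}$ occurs. Since Case $\textbf{3}$ is by definition a subcase of Case $\textbf{2}$, the inductive theorem delivers a unique integer $k$ with $0 \leq |k| < q_{N+n-1}$ such that $T^k I_{n,1}(t) \cap I_{n,2}(t) \neq \emptyset$. The definition of $s(k)$ then forces $|k| \leq q_{N+s(k)-1}$, hence $n \geq s(k)$. Corollary \ref{gwx1} ensures that every intermediate step $s(k) \leq j \leq n-1$ lies in Case $\textbf{2}$; the uniqueness of the resonance integer in Case $\textbf{2}$ together with the size constraint $|k| < q_{N+j-1}$ forces the intermediate resonance to be the same $k$. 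Telescoping the Case $\textbf{2}$ critical-point estimate \eqref{ga2} from $l = s(k)+1$ up to $l = n$ yields $|c_{n,j}(t) - c_{s(k),j}(t)| \leq \sum_{l} \lambda^{-r_{l-1}/2} \ll \lambda^{-r_{s(k)-1}^{1/100}}$. Plugging this into the Case $\textbf{3}$ bound \eqref{cn1cn2} at step $n$ produces the required inequality \eqref{csk}.

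For (ii), the gap size estimate \eqref{gapest} is extracted from the $t$-regularity of $g_{s(k)+1}$. The upper bound $|b-a| \leq C\lambda^{-|k|/10000}$ uses the monotonicity $\partial_t g_{s(k)+1} > 1/10$ from \eqref{lm17-main1}: once $t$ moves more than $C\lambda^{-|k|/10000}$ away from the Case $\textbf{3}$ window, $g_{s(k)+1}$ must have transverse zeros in $I_{s(k)}$, placing step $s(k)+1$ in Case $\textbf{2}$ instead of Case $\textbf{3}$ and hence $t$ back in $\Sigma^\lambda$. The lower bound $|b-a| \geq c\lambda^{-10000|k|}$ uses the upper bound $\partial_t g_{s(k)+1} < C\lambda^{5|k|}$ from the same display: once Case $\textbf{3}$ holds at some $t_0$, $g_{s(k)+1}$ cannot cross its Case $\textbf{3}$ threshold until $t$ has moved at least $c\lambda^{-10000|k|}$ away. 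To pass to the endpoints $Q \in \{a,b\}$, combine the uniform convergence $c_{n,j}(Q) \to c_{s(k),j}(Q)$ (via telescoping \eqref{ga2}) with the limiting resonance condition to derive \eqref{def-K} at every level $p \geq 0$. The sequences $\{a_p\}, \{b_p\}$ are constructed by defining $a_p$ (resp.\ $b_p$) as the parameter near $a$ (resp.\ $b$) where $g_{s(k)+p+1,1}(a_p, \cdot)$ is tangent to $y = 0 \pmod{\pi}$; existence comes from the Implicit Function Theorem applied to $g = 0$, $\partial_x g = 0$ using the nondegeneracy $\partial_t g > 1/10$ from \eqref{lm17-main1} and $|\partial_x^2 g| \neq 0$ at the tangent point, while the rate $|a_p - a| \leq C\lambda^{-r_{s(k)+p-1}^{1/100}}$ is inherited from the $C^2$-closeness \eqref{gn-gn+1} between $g_{s(k)+p}$ and $g_{s(k)+p+1}$ (both being Case $\textbf{2}$ with the same $k$).

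The main obstacle will be establishing that all intermediate steps between $s(k)$ and $n$ share the same resonance integer $k$, rather than potentially using a smaller $k'$ with $|k'|<|k|$. Corollary \ref{gwx1} provides Case $\textbf{2}$ but not the identification of the resonance integer; one must invoke the \emph{uniqueness} clause in Case $\textbf{2}$ of Theorem \ref{theorem12} and verify inductively that $T^k I_{j,1}(t) \cap I_{j,2}(t) \neq \emptyset$ persists for each $j \in [s(k), n-1]$, using \eqref{cc} and \eqref{ga2} to control the drift of critical points. A secondary technical issue is ensuring that the tangency of $g_{s(k)+p+1,1}$ at the endpoints is a genuine second-order tangency rather than a higher-order contact, which follows by comparing the transversal crossing rate of $g_{s(k)+1}$ in $t$ (from \eqref{lm17-main1}) with the small $C^2$ perturbations between successive $g$'s in \eqref{gn-gn+1}.
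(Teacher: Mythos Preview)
Your treatment of (i) is essentially the paper's argument: extract $k$ from the Case~\textbf{3} data via \eqref{cn1cn2}, use the definition of $s(k)$ to get $n\ge s(k)$, invoke Corollary~\ref{gwx1} to force Case~\textbf{2} on all intermediate steps, and telescope \eqref{ga2} back to level $s(k)$.

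There is a genuine gap in your upper bound for \eqref{gapest}. You write that once $t$ leaves a $C\lambda^{-|k|/10000}$-window, $g_{s(k)+1}$ acquires transverse zeros, ``placing step $s(k)+1$ in Case~\textbf{2} instead of Case~\textbf{3} and hence $t$ back in $\Sigma^\lambda$.'' But Lemma~\ref{gapkehua} says $t\notin\Sigma^\lambda$ iff \emph{some} step is Case~\textbf{3}; ruling out Case~\textbf{3} at the single level $s(k)$ does not place $t$ in the spectrum, since Case~\textbf{3} may well occur at a deeper step $s(k)+p$. Your argument therefore does not produce spectral points bounding $G^*$, and the upper bound on $|a-b|$ is unproved.

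The paper handles this by an explicit iterative construction of the endpoints. Starting from $t$ it builds $t_0$ with $g_{s(k)+1}(\tilde c_{s(k)+1,1}(t_0),t_0)=\pi$, then $t_1$ with $g_{s(k)+2}$ tangent to $\pi$, and so on, using \eqref{lm17-main1} and \eqref{gn-gn+1} to control $|t_p-t_{p+1}|\le\lambda^{-r_{s(k)+p-1}/9}$. The limit $t^*_+$ then satisfies $\min_x|g_{s(k)+p+1}(x,t^*_+)\,(\mathrm{mod}\,\pi)|\to 0$, so Lemma~\ref{lem20} places $t^*_+\in\Sigma^\lambda$; a separate claim shows $(t^*_+-\delta,t^*_+)\subset\mathbb{R}\setminus\Sigma^\lambda$ for small $\delta$, identifying $t^*_+$ as a genuine gap endpoint. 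The same construction on the $0$-side gives $t^*_-$, and the paper then argues (nontrivially) that $(t^*_-,t^*_+)$ coincides with $G^*$. The gap-size bounds and \eqref{def-K} then fall out of this construction. Your IFT sketch of the tangency sequences is in the right spirit, but you never connect those tangencies to membership in $\Sigma^\lambda$, which is the heart of the matter.
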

\begin{proof}
By \eqref{cn1cn2}, at step $n$, there is $k\in\Z$ with $0<|k|<q_{N+n-1}$ and $T^kI_{n,1}(t)\bigcap I_{n,2}(t)\neq \emptyset$, such that \begin{equation}\label{i11}\inf\limits_{j\in \Z}\vert c_{n,1}(t)+k\alpha-c_{n,2}(t)-j\vert\leq C\lambda^{-r_{n-1}^{\frac{1}{55}}}.\end{equation}
By the definition of $s(k)$, we have
$
q_{N+s(k)-2}\leq |k|\leq q_{N+s(k)-1}.
$
Hence $n\geq s(k).$

Note that by \eqref{ga1} and \eqref{ga2}, we always have
$$
|c_{j,1}(t)-c_{j+1,1}(t)|\leq C\lambda^{-\frac{1}{2}r_{j-1}},~j\leq n-1.
$$
Hence the invalidity of \eqref{csk} will lead to
$$
\inf\limits_{j\in \Z}\vert c_{n,1}(t)+k\alpha-c_{n,2}(t)-j\vert>\lambda^{-r_{s(k)-1}^{\frac{1}{100}}}-C\lambda^{-\frac{1}{2}r_{s(k)-1}}>\frac{1}{2}\lambda^{-r_{s(k)-1}^{\frac{1}{100}}}\gg C\lambda^{-r_{n-1}^{\frac{1}{55}}},
$$ which contradicts \eqref{i11}. Thus we completes the proof of (i).

We turn to the proof of (ii). \eqref{i11} implies that Case $\textbf{2}_{s(k)}$ occurs and $$\inf\limits_{j\in \Z}\vert c_{s(k),1}(t)+k\alpha-c_{s(k),2}(t)-j\vert\leq C\lambda^{-r_{n-1}^{\frac{1}{55}}}+\sum\limits_{l=s(k)+1}^{n-1}\sum\limits_{j=1}^2|c_{l,j}-c_{l-1,j}|<C\lambda^{-r_{s(k)-1}^{\frac{1}{56}}}.$$ Then, by \eqref{partialx} of Corollary \ref{gwx1}, we have
$$
\{\tilde{c}_{s(k)+1,1}(t),\tilde{c}'_{s(k)+1,1}(t)\}=\{x\in I_{s(k)+1,1} \vert \partial_x g_{s(k)+1,1}(x,t)=0\},
$$
where $\tilde{c}_{s(k)+1,1}(t)$ and $\tilde{c}'_{s(k)+1,1}(t)$ correspond to the maximum and minimum value of $g_{s(k)+1,1}(\cdot,t)$, respectively. It follows by \eqref{range-g-n} and \eqref{range-g-n-lower-bound} that
\beq\label{gwx2}
\pi - C\lambda^{-\frac{1}{100}|k|} \le g_{s(k)+1}(\tilde{c}_{s(k)+1,1}(t),t) - g_{s(k)+1}(\tilde{c}'_{s(k)+1,1}(t),t)\le \pi - c\lambda^{-100|k|}.
\eeq
Recall that \eqref{lm17-main1} shows
\begin{equation}\label{lm17-main1'}
\frac{1}{10} <  \frac{\partial g_{s(k)+1}(x,\bar{t})}{\partial t}  < C\lambda^{5|k|}, \ \ \forall \bar{t}\in Q_{s(k)}(t)\footnote{One can refer to Figure 4 to understand how \( g_{s(k)+1}(x,\bar{t}) \) moves on $Q_{s(k)}(t)$ with respect to \( \bar{t}\).},\ \ x\in I_{s(k)}(\bar{t}).
\end{equation}

For $1\leq \xi\leq 2,$ we define
$$\mathcal{H}_{\pi; \xi}:=\left\{\bar{t}\in Q_{s(k)}(t) \vert~\vert g_{s(k)+1}(c_{s(k)+1,1}(\bar{t}),\bar{t})-\pi\vert<\lambda^{-r^{\frac{1}{50\xi}}_{s(k)}}\right\},$$
$$\mathcal{H}_{0; \xi}:=\left\{\bar{t}\in Q_{s(k)}(t) \vert~\vert g_{s(k)+1}(c_{s(k)+1,1}(\bar{t}),\bar{t})\vert<\lambda^{-r^{\frac{1}{50\xi}}_{s(k)}}\right\}.$$
Note that by  \eqref{gwx2} and \eqref{ga110}, for $\bar{t}\in \mathcal{H}_{\pi;\xi},$ we have
$$
g_{s(k)+1}(\tilde{c}'_{s(k)+1,1}(\bar{t}),\bar{t})>c\lambda^{-100|k|}-\lambda^{-r^{\frac{1}{50\xi}}_{s(k)}}>c\lambda^{-200|k|}.
$$
Thus
$$\mathcal{H}_{\pi;\xi}\subset Q_{s(k)}(t)\backslash\mathcal{H}_{0;\xi},~\xi\in [1,2].
$$
Moreover, by the monotonicity of $g_{s(k)+1}(x,\bar{t})$ with respect to $\bar{t}$ on $Q_{s(k)}(t)$ (see \eqref{lm17-main1'}), we have
$
\inf\mathcal{H}_{\pi;\xi}>\sup\mathcal{H}_{0;\xi},~\xi\in [1,2].
$
This implies $ [\sup\mathcal{H}_{0;\xi},\inf\mathcal{H}_{\pi;\xi}]\neq \emptyset.$

On the other hand, for $\bar{t}\in [\sup\mathcal{H}_{0;\xi},\inf\mathcal{H}_{\pi;\xi}],$ by the definition of $\mathcal{H}_{\pi;\xi},$ $\mathcal{H}_{0;\xi}$ and the fact that $\xi\geq 1$, we have
$$
\min\limits_{x\in I_{s(k)}(\bar{t})}|g_{s(k)+1,1}(x,\bar{t})(\rm{mod}~\pi)|\geq c\lambda^{-r^{\frac{1}{50b}}_{s(k)}}>c\lambda^{-r_{s(k)}^{\frac{1}{50}}},
$$ which implies for $\bar{t},$ Case $\textbf{3}_{s(k)}$ occurs. Hence \eqref{Case3lem} implies \begin{equation}\label{bbb}[t'_{-;\xi},t'_{+;\xi}]:=[\sup\mathcal{H}_{0;\xi},\inf\mathcal{H}_{\pi;\xi}]\subset \R\backslash\Sigma^{\lambda},~\xi\in [1,2].\end{equation}
By the continuity of $g_{s(k)+1}(c_{s(k)+1,1}(\bar{t}),\bar{t})$ with respect to $\bar{t}$, we have for $1\leq \xi\leq 2,$
\begin{equation}\label{gapmidd}
g_{s(k)+1,1}(c_{s(k)+1,1}(t'_{-;\xi}),t'_{-;\xi})=\lambda^{-r^{\frac{1}{50\xi}}_{s(k)}},
\end{equation}
\begin{equation}\label{gapmidd2}
g_{s(k)+1,1}(c_{s(k)+1,1}(t'_{+\xi}),t'_{+;\xi})= \pi-\lambda^{-r^{\frac{1}{50\xi}}_{s(k)}}.
\end{equation}

The following lemma is key to the proof of (ii).

\begin{Lemma}\label{G^*} There exist $t^*_+\in \Sigma^{\lambda}\bigcap \mathcal{H}_{\pi,\xi}$ and $t^*_-\in \Sigma^{\lambda}\bigcap \mathcal{H}_{0,\xi}$ such that
$
(t^*_-,t^*_+)=G^*.
$
\end{Lemma}
\begin{proof}
  We set
\beq\label{gwx4}
t_0:=\{\bar{t}\in Q_{s(k)}(t) \vert g_{s(k)+1}(\tilde{c}_{s(k)+1,1}(\bar{t}),\bar{t})-\pi=0\}.
\eeq
This means for $t=t_0$,  $g_{s(k)+1}$is tangent to $y=\pi.$ Hence \begin{equation}\label{xiangq} \tilde{c}_{s(k)+1,1}(\bar{t})=c_{s(k)+1,1}(\bar{t}).\end{equation}
By \eqref{lm17-main1'}, \eqref{gwx2} and \eqref{gwx4}, we have
\beq\label{gwx3}
|t-t_0|<C\lambda^{-r^{\frac{1}{50\xi}}_{s(k)}}.
\eeq

By \eqref{Calpha}, \eqref{gwx3}, Corollary \ref{lmg'} and \eqref{i11}, we have
\begin{align}\label{gwx5}
\nonumber &\inf\limits_{j\in \Z} \vert c_{s(k),1}(t_0)+ k\alpha- c_{s(k),2}(t_0)-j\vert\\
\nonumber \leq& \inf\limits_{j\in \Z}\vert c_{s(k),1}(t)+k\alpha-c_{s(k),2}(t)-j\vert+ C\lambda^{Cq_{N+s(k)-2}}|t-t_0|\\
\leq &\lambda^{-r_{s(k)-1}^{\frac{1}{55}}}+\lambda^{Cq_{N+s(k)-2}}\cdot \lambda^{-r^{\frac{1}{50\xi}}_{s(k)}}\leq \lambda^{-q^{\frac{1}{2}}_{N+s(k)-2}}\leq q^{-2000\tau}_{N+s(k)}\leq \frac{1}{2}|I_{s(k)+1}|.
\end{align}

On the other hand, note that \eqref{gwx4} implies $\tilde{I}_{s(k),1}$ is non-empty, thus by \eqref{cc} and \eqref{xiangq}, we have
\beq\label{gwx6}
\inf\limits_{j\in \Z} \vert c_{s(k)+1,1}(t_0)+ k\alpha- c_{s(k)+1,2}(t_0)-j\vert<\lambda^{-\frac{1}{10}r_{s(k)}}\ll |I_{s(k)+1}|.
\eeq
\eqref{gwx5} and \eqref{gwx6} imply that, for $t_0$ both Case $\textbf{2}_{s(k)}$ and Case $\textbf{2}_{s(k)+1}$ occur. By \eqref{gn-gn+1}, we have
$$
\|g_{s(k)+1}(\cdot,t_0)-g_{s(k)+2}(\cdot,t_0)\|_{C^2(I_{s(k)}(t_0))}\leq \lambda^{-\frac{3}{2}r_{s(k)}}.
$$
Combining the above with \eqref{xiangq}, it holds that
$$
|g_{s(k)+2}(\tilde{c}_{s(k)+2,1}(t_0),t_0)-\pi|=|g_{s(k)+2}(c_{s(k)+2,1}(t_0),t_0)-\pi|\leq \lambda^{-\frac{1}{9}r_{s(k)}}.
$$

Hence by the similar argument as above, we can find some $t_1$ such that
\beq\label{gwx7}
|t_0-t_1|<C\lambda^{-\frac{1}{9}r_{s(k)}},
\eeq
\beq\label{gwx8}
|g_{s(k)+2}({c}_{s(k)+2,1}(t_1),t_1)-\pi|=|g_{s(k)+2}(\tilde{c}_{s(k)+2,1}(t_1),t_1)-\pi|=0,
\eeq
\beq\label{877}
|g_{s(k)+1}(\tilde{c}_{s(k)+1,1}(t_1),t_1)-\pi|\leq \lambda^{5|k|}|t_0-t_1|\leq \lambda^{5|k|} \lambda^{-\frac{1}{9}r_{s(k)}}\leq \lambda^{-r_{s(k)}^{\frac{99}{100}}}.
\eeq
By \eqref{gwx7}, \eqref{gwx5} and Corollary \ref{lmg'}, we have
\beq\label{gwx5'}
\inf\limits_{j\in \Z} \vert c_{s(k),1}(t_1)+ k\alpha- c_{s(k),2}(t_1)-j\vert\ll |I_{s(k)}|.
\eeq
Note that \eqref{gwx8} implies $\tilde{I}_{s(k)+1,1}$ is non-empty. Thus by \eqref{cc}, \eqref{maxmax}, \eqref{lm17-main1'} and \eqref{877},
\beq\label{gwx5''}
\inf\limits_{j\in \Z} \vert c_{s(k)+1,1}(t_1)+ k\alpha- c_{s(k)+1,2}(t_1)-j\vert<\lambda^{-r_{s(k)}^{\frac{1}{99}}}\ll |I_{s(k)+1}|.
\eeq
Moreover, by \eqref{cc} and
\eqref{gwx8}, $\tilde{I}_{s(k)+1,1}$ is non-empty. Then \eqref{cc}, \eqref{maxmax}, \eqref{lm17-main1'} and \eqref{gwx8} imply that
\beq\label{gwx5'''}
\inf\limits_{j\in \Z} \vert c_{s(k)+2,1}(t_1)+ k\alpha- c_{s(k)+2,2}(t_1)-j\vert<\lambda^{-r_{s(k)+1}^{\frac{1}{100}}}\ll |I_{s(k)+2}|.
\eeq
Thus by \eqref{gwx5'}-\eqref{gwx5'''},
for $t_1,$ Case $\textbf{2}_{s(k)}$, Case $\textbf{2}_{s(k)+1}$ and Case $\textbf{2}_{s(k)+2}$ occur.

Repeating the above process for $p$ times, we can obtain $t_{p+1}$ such that
\beq\label{contp}
|t_p-t_{p+1}|<\lambda^{-\frac{1}{9}r_{s(k)+p-1}},
\eeq
\beq\label{congp}
|g_{s(k)+p+2}(\tilde{c}_{s(k)+p+2,1}(t_{p+1}),t_{p+1})-\pi|=0,
\eeq
 \beq\label{congp+1}
 |g_{s(k)+p+1}(\tilde{c}_{s(k)+p+1,1}(t_{p+1}),t_{p+1})-\pi|\leq \lambda^{-r^{\frac{99}{100}}_{s(k)+p-1}}.
 \eeq
Again by \eqref{cc} and a similar argument as above, for $t_{p+1}$, all the $(s(k)+j)$-th  steps with $0\leq j\leq p+1$ belong to Case \textbf{2}. \footnote{Note that  Case \textbf{2} is defined by the resonant condition $|c_{n,2}-c_{n,1}-k\alpha|\le |I_n|^{-C}$, while $|t_i-t_{i+1}|\le C^{-i}\cdot\lambda^{-|I_n|^{c}}$ with $0<c<1<C$, which guarantees that for $t_{p+1}$, the $(s(k)+j)$-th step belongs to Case {\bf 2} for $j=1,2,\cdots,p$.}

Notice that \eqref{contp} allows us to define
$
t^*_+=\lim\limits_{p\rightarrow +\infty} t_p.
$
By \eqref{lm17-main1'}, Corollary \ref{lmg'}, \eqref{contp} and \eqref{congp+1}, we have
\begin{equation}\label{g*}~|g_{s(k)+p+1}(\tilde{c}_{s(k)+p+1,1}(t^*_+),t^*_+)-\pi|\leq C\lambda^{-r^{\frac{99}{100}}_{s(k)+p-1}},~p\geq 1.\end{equation}
By \eqref{g*} and Lemma \ref{lem20}, $t^*_+\in \Sigma^{\lambda}\bigcap \mathcal{H}_{\pi;b}$.

The following claim is key to the proof of Lemma \ref{G^*}.

{\bf Claim.} \quad $t^*_+$ is the right endpoint of a spectral gap.
\vskip 0.2cm
{\it Proof of Claim.}\quad In fact, for any
$
0<\delta<\lambda^{-r^{\frac{1}{75}}_{s(k)}},
$
there is  $L\in \N$ such that
 $$
 \lambda^{-r^{\frac{1}{200}}_{L+s(k)-2}}>\delta>\lambda^{-r^{\frac{1}{200}}_{L+s(k)-1}}.
 $$
Next we prove that for $t^*_+-\delta,$ Case $\textbf{3}_{K}$ occurs with some $K\leq L+s(k).$ In fact, by \eqref{g*}, we have
$$
\min\limits_{x\in I_{s(k)+L-1}(t_+^*)}|g_{s(k)+L}(x,t^*_+)-\pi|\leq C\lambda^{-r^{\frac{1}{100}}_{L+s(k)-1}}.
$$
By  the monotonicity of $g_{s(k)+L}(\tilde{c}_{s(k)+L,1}(\bar{t}),\bar{t})$ with respect to $\bar{t}$ in \eqref{lm17-main1'} and \eqref{gwx2}, we have
\begin{align*}
&\min\limits_{x\in I_{s(k)+L-1}(t_+^*-\delta)}|g_{s(k)+L}(x,t^*_+-\delta)(\rm{mod}~\pi)|\\
=&|g_{s(k)+L}(\tilde{c}_{s(k)+L,1}(t_+^*-\delta),t^*_+-\delta)-\pi|\geq c\delta-C\lambda^{-r^{\frac{1}{100}}_{L+s(k)-1}}>c\lambda^{-r^{\frac{1}{200}}_{L+s(k)-1}}.
\end{align*}
If all step $s(k)\leq j< s(k)+L$ does not belong to Case \textbf{3}, then the $(s(k)+L)$-th step belongs to Case \textbf{3}. Therefore,
\begin{equation}\label{youd}
(t^*_+-\lambda^{-r^{\frac{1}{75}}_{s(k)}},t^*_+)\in \R\backslash \Sigma^\lambda; \ \ t^*_+\in \Sigma^{\lambda}.
\end{equation}
\qed

Similarly we can define $t^*_-$ (by replacing $\pi$ in (\ref{gwx4}) by $0$) and show that
\begin{equation}\label{zuod}
(t^*_-,t^*_-+\lambda^{-r^{\frac{1}{75}}_{s(k)}})\in \R\backslash\Sigma^\lambda\quad and \quad  t^*_-\in \Sigma^{\lambda}.
\end{equation}

On the other hand, by \eqref{lm17-main1'}, we have
$$\pi-\lambda^{-r^{\frac{1}{76}}_{s(k)}}\leq g_{s(k)+1}(\tilde{c}_{s(k)+1,1}(t_+^*-\lambda^{-r^{\frac{1}{75}}_{s(k)}}),t^*_+-\lambda^{-r^{\frac{1}{75}}_{s(k)}})\leq \pi-\lambda^{-r^{\frac{1}{74}}_{s(k)}}.$$
By the above inequality, \eqref{gapmidd} and \eqref{lm17-main1'}, we have
\begin{equation}\label{youd1}t'_{+,2}\leq t_+^*-\lambda^{-r^{\frac{1}{75}}_{s(k)}}\leq t'_{+,1}.\end{equation}
Similarly,
\begin{equation}\label{zuod1}t'_{-,1}\leq t_-^*+\lambda^{-r^{\frac{1}{75}}_{s(k)}}\leq t'_{-,2}.\end{equation}

Combining \eqref{bbb},~\eqref{youd}, \eqref{zuod}, \eqref{youd1} and \eqref{zuod1}, we have
\begin{equation}\label{bbb1}
(t^*_-,t^*_-+\lambda^{-r^{\frac{1}{75}}_{s(k)}})\cup [t'_{-;1},t'_{+;1}] \cup (t^*_+-\lambda^{-r^{\frac{1}{75}}_{s(k)}},t^*_+)=(t^*_-,t^*_+).
\end{equation}
Thus $(t^*_-,t^*_+)$ must be some spectral gap.

Next we show $(t^*_-,t^*_+)=(a,b)=G^*$. It is enough to show that $t\in (a,b)$ implies $t\in [t^*_-,t^*_+].$ For $t\in (a,b),$ by Lemma \ref{gapkehua}, there exists $P^*\geq s(k)$ such that Case $\textbf{3}_{P^*+1}$ occurs  with \begin{equation}\label{maodun1}\inf\limits_{j\in \Z}|c_{P^*,1}+k\alpha-c_{P^*,2}-j|\leq \lambda^{-r^{\frac{1}{55}}_{P^*-1}}.\end{equation}

If $P^*=s(k),$ then for $t,$ Case $\textbf{3}_{s(k)+1}$ occurs. Hence
\begin{equation}\label{1212}\inf_{x\in I_{s(k),1}}|g_{s(k)+1}(x,t)|>\lambda^{-r^{\frac{1}{50}}_{s(k)}}.\end{equation} Note in this case, $\{\tilde{c}_{s(k),1},\tilde{c}'_{s(k),1}\}\subset I_{s(k),1}.$ By \eqref{gwx2}, \eqref{bbb1} and \eqref{1212}, we obtain
$$t\in [t'_{-;1},t'_{+;1}]\subset [t^*_-, t^*_+].$$

Next we consider the case $P^*\geq s(k)+1.$ By \eqref{case32}, all $l-$th $(s(k)\leq l\leq P^*)$ steps belong to Case \textbf{2} and (i) guarantees that
$$\inf\limits_{j\in \Z}|c_{s(k),1}(t)+k\alpha-c_{s(k),2}(t)-j|\leq C\lambda^{-r^{\frac{1}{100}}_{s(k)-1}}.$$
For $t^*_+,t^*_-,$ by our construction, all $l-$th $(s(k)\leq l\leq P^*)$ steps belong to Case \textbf{2} and for $X\in\{+,-\},$
$$\inf\limits_{j\in \Z}|c_{s(k),1}(t^*_{X})+k\alpha-c_{s(k),2}(t^*_{X})-j|\leq C\lambda^{-r^{\frac{1}{100}}_{s(k)-1}}.$$
Thus by \eqref{lmg'-2}, we have
$$|t-t^*_{-}|+|t-t^*_{+}|\leq C\lambda^{-r^{\frac{1}{101}}_{s(k)-1}}.$$
It follows that
$$
t\in (t^*_--\lambda^{-r^{\frac{1}{200}}_{s(k)-1}},t^*_-+\lambda^{-r^{\frac{1}{200}}_{s(k)-1}})\bigcup (t^*_+-\lambda^{-r^{\frac{1}{200}}_{s(k)-1}},t^*_++\lambda^{-r^{\frac{1}{200}}_{s(k)-1}}).
$$
Now assume that $t=t^*_++\delta'$ with $0<\delta'<\lambda^{-r^{\frac{1}{200}}_{s(k)-1}}.$ Let $L'\in \Z_+$ satisfy
$$
\lambda^{-r^{\frac{1}{200}}_{L'+s(k)-1}}<\delta'<\lambda^{-r^{\frac{1}{200}}_{L'+s(k)-2}}.
$$
By \eqref{lm17-main1'} and the construction of $t^*_+$, for $t^*_+$, all $l-$th $(s(k)+1\leq l\leq s(k)+L')$ step belong to Case \textbf{2} with $$g_{s(k)+L'}(\tilde{c}_{s(k)+L',1}(t^*_++\delta'),t^*_++\delta')>\pi+\lambda^{-r^{\frac{1}{199}}_{L'+s(k)-1}},$$
which, together with \eqref{maxmax1} and \eqref{cc}, implies
$$
\inf\limits_{j\in \Z}|c_{s(k)+L',1}+k\alpha-c_{s(k)+L',2}-j|>c\lambda^{-r^{\frac{1}{198}}_{L'+s(k)-1}}.
$$

Note that by  \eqref{ga1} and \eqref{ga2}, if the induction continues to $(L'+i+1)$-th step, we have
$$
|c_{s(k)+L'+i,j}-c_{s(k)+L'+i+1,j}|\leq C\lambda^{-\frac{1}{2}r_{s(k)+L'+i-1}},~j=1,2.
$$
Then for any subsequent step $P\geq L'+s(k)$, $P-$th step is not in Case \textbf{3}, that is, $$\inf\limits_{j\in \Z}|c_{P,1}+k\alpha-c_{P,2}-j|\geq \lambda^{-r^{\frac{1}{197}}_{L'+s(k)-1}}.$$ This leads to a contradiction with \eqref{maodun1} by taking $P=P^*$. Hence we obtain $t\leq t^*_+.$ Similarly, we have $t\geq t^*_-.$ Therefore, we have $t\in [t^*_-,t^*_+]$ as desired.
\end{proof}


We are in a position to finish the proof of (ii) of Lemma \ref{lemma9}.

For \eqref{gapest}, note that our construction guarantees that $t^*_+\in 2\mathcal{G}_1$ and $t^*_-\in 2\mathcal{G}_2.$ Then \eqref{lm17-main1'} and \eqref{gwx2} immediately imply \eqref{gapest}.

For (\ref{def-K}), by the above construction, for $t^*_X,$ step $s(k)+p,~p\in \N$ always belongs to Case \textbf{2} with
$$
\inf\limits_{j\in \Z}|c_{s(k)+p,1}(t^*_X)+k\alpha-c_{s(k)+p,2}(t^*_X)-j|\leq C\lambda^{-r^{\frac{1}{100}}_{s(k)+p-1}}.
$$
By \eqref{congp}, for $t_p,$ $g_{s(k)+p+1,1}(t_p)$ is always tangent to $y=\pi$ with
$$
|t_p-t^*_+|\leq C\lambda^{-r^{\frac{1}{100}}_{s(k)+p-1}}.
$$
Similarly we can construct another sequence \( t'_p \) that approaches the other endpoint $t^*_-$ with
$$
|t'_p-t^*_-|\leq C\lambda^{-r^{\frac{1}{100}}_{s(k)+p-1}}
$$
and $g_{s(k)+p+1,1}(t'_p)$ is tangent to $y=0.$  This completes all the proof of (ii).

\end{proof}

\textbf{Final proof of Theorem \ref{15}.}
Note that Lemma \ref{lemma9} allows us to define a set function $\Phi: \R\backslash\Sigma^{\lambda}\rightarrow \Z$ as
$\Phi(G^*)=k$ for an opening spectral gap $G^*$ and $k\in \Z$ such that \eqref{def-K} holds true. Clearly, $K(\lambda)=\text{Ran}(\Phi)$ and
$$\{{\rm all\ (openning)~gaps\ of}\  H_{\lambda v,\alpha,x}\}=\{G_k^{\lambda}=(t^k_-(\lambda), t^k_+(\lambda))| k\in K(\lambda)\subset\mathbb{Z}\}.$$

Note (1) and (2) directly follow from (i) and (ii) of Lemma \ref{lemma9}. The first part of (3) directly follows from \eqref{gwx2} and \eqref{lm17-main1'}. The second part of (3) follows from the fact \eqref{g_nmin1} in Theorem \ref{theorem12} by taking $n=s(k)$.

For (4), one notes that for any fixed \( k\in \Z\), we can choose \( q_N \)  and \( \tilde{\lambda}>\lambda_0 \) sufficiently large  such that \( |k| \ll q_N\ll \tilde{\lambda}\). Then, since $v$ is of cosine-type, for $\lambda>\tilde{\lambda},$ there exists some \( t^*_k\in [\inf v-\frac{2}{\tilde{\lambda}},\sup v+\frac{2}{\tilde{\lambda}}] \) such that \( t^*_k - v(x-\alpha) \) has two zeros with a distance \( k\alpha(\!\!\!\!\mod 1) \), implying that  $c_{1,1}(t^*_k,\lambda)+k\alpha=c_{1,2}(t^*_k,\lambda).$ Then \eqref{gap-mid} implies $k\in K(\lambda)$ and then $t^*_k\in G_k^{\lambda}=(t^k_-,t^k_+)$ for any $\lambda>\tilde{\lambda}.$ For the remaining part, note for $\lambda>\tilde{\lambda},$ $s(k)=1.$ Hence $\{c_{1,1}(t^*_k),c_{1,2}(t^*_k)\}=\{c_{s(k),1}(t^*_k),c_{s(k),2}(t^*_k)\}=\{x\vert g_{s(k)}(x,t^*_k)=0\}=\{x\vert g_{1}(x,t^*_k)=0\}=\{x \vert t^*_k=v(x-\alpha)\},~j=1,2.$ Then \eqref{var-ep}, \eqref{pugapguj} and the cosine-type condition of $v$ with Remark \ref{rmk2} complete the proof of \eqref{xzengd}.
\section {\bf Proof of Theorem \ref{theorem12}.}\label{proofofinduction}

To obtain what we desire in Theorem \ref{theorem12}, we need to prove the followings inductively.
\begin{equation}\label{norm-lower-bound}
\|A_{r^\pm_n}(x,t)\| \geq \lambda^{(1-\epsilon)r_n},
\end{equation}

\begin{equation}\label{norm-derivative}
\left| \partial_t^i \partial_x^j \|A_{r^\pm_n}(x,t)\| \right| \leq \|A_{r^\pm_n}(x,t)\|^{1 + 10^{-1}}, \quad i, j = 0, 1, 2.
\end{equation}

Now we fix $t^*\in \R$ and $t\in Q_n(t^*)$ at the $n$-th step in the following proof.

\textbf{From $0$-th Step to $1$-th Step:}
The theorem holds for $n=0$ to $n=1$ if we assume that $N$ and $\lambda$ are sufficiently large. Let
$$
Q_0=Q_1,\ \ I_{0,j}=I_{1,j},\ \ c_{0,j}=c_{1,j},\ \ r_{0}=1,\ \ g_{0,j}=g_{1,j},\ \ j=1,2.
$$
We define the sequence $\{\lambda_n\}_{n\geq 0}^\infty$ by
$$
\log \lambda_n:=\log\lambda_{n-1}-q_{N+n-1}^{-1/10}\log\lambda_{n-1}
$$
with $\lambda_0=\lambda$. It is easy to see that for any $\varepsilon>0$, there exists a large integer $N$ such that $\lambda_n$ decreases to some $\lambda_\infty$ with $\lambda_\infty>\lambda^{1-\varepsilon/4}$.
It holds by Lemma \ref{lemma4} and the assumption on $v$ that $g_0$ is a $C^2$ cosine-type function.
Then it is not hard to verify that if $\lambda\gg 1$, then \eqref{norm-lower-bound}, \eqref{norm-derivative} and \eqref{maxmax}-\eqref{I-second-derivative} hold true if $\{c_{0,j},j=1,2\}$ locates far from the zeros of $\partial_x g_0(x,t)=0$, and \eqref{ga110}-\eqref{lm17-main1} hold true if $\{c_{0,j},j=1,2\}$ locates close to the zeros of $\partial_x g_0(x,t)=0$. For Case \textbf{3}, note
\begin{equation}\label{step11}
\min\limits_{x\in I_{0}}|g_{1}(x,t)(\text{mod}~\pi)|=\min\limits_{x\in I_{0}}|\arctan(t-v)+o(\lambda^{-1})(\text{mod}~\pi)|>\lambda^{-\frac{1}{3}r^{\frac{1}{50}}_{0}}\geq \lambda^{-\frac{1}{3}}
\end{equation}
and \begin{equation}\label{step11*}\min\limits_{x\in I_{0}}|g_{1}(x,t)(\text{mod}~\pi)|=\min\limits_{x\in I_{0}}|\arctan(t-v)+o(\lambda^{-1})(\text{mod}~\pi)|\leq C\lambda^{-1}\end{equation}
 $$\text{for}~t\in \mathcal{I}(=[-\frac{2}{\lambda}+\inf v, \sup v +\frac{2}{\lambda}]).$$ Then \eqref{step11}, \eqref{step11*} and the fact $\Sigma^{\lambda}\subset \mathcal{I}$ imply $t\notin \mathcal{I}.$ Then \eqref{Case3lem} holds true. \eqref{cn1cn2} and \eqref{gn-gn+1} are clear to hold true by the definition of $c_{0,j}$ and $g_{0,j},~j=1,2.$

Now we assume that the above estimates in the theorem hold for the $n$-th step ($n\geq 0$).
We aim to prove them for the $(n+1)$-th step.
Let $x\in I_{n+1}$.
We can rewrite $A_{r^+_{n+1}}(x,t)$ as
\begin{equation}\label{product3}
A_{r^+_{n+1}}(x,t)=A_{t_{s}-t_{s-1}}(T^{ t_{s-1}}x,t)\cdots A_{ t_{j}-t_{j-1}}(T^{ t_{j-1}}x,t)\cdots A_{t_{1}-t_0}(x,t),
\end{equation}
where $t_{j+1}-t_{j}=r^+_{n}(T^{t_j}x,t)$ ($0\leq j\leq s-1$), $t_0=0$ and
$r^+_{n+1}=t_{s}$. We are going to estimate the norm and the angle of $A_{r^+_{n+1}}(x,t)$ from the assumptions on each $A_{ t_{j}-t_{j-1}}(T^{ t_{j-1}}x,t)$.

To study (\ref{product3}), we first give estimates on the product of two matrices
\begin{equation}\label{product2} A_{n_1+n_2}(x,t)=A_{n_2}(T^{n_1}x,t)\cdot A_{n_1}(x,t).\end{equation}
Let $\theta=\frac\pi2-(s_{n_2}(T^{n_1}x,t)-u_{n_1}(T^{n_1}x,t))$, $l_{n_1+n_2}=\|A_{n_1+n_2}(x,t)\|$, $l_{n_1}=\|A_{n_1}(x,t)\|$ and $l_{n_2}=\|A_{n_2}(T^{n_1}x,t)\|$.

Then we the following four lemmas, for which  the proof  are given in the Appendix A.

\begin{Lemma}\label{arctan} Let $0<\eta\leq  10^{-2}$.
Then it holds that
$$l_{n_1+n_2}\geq \left\{
    \begin{array}{ll}
l_{n_2} l_{n_1}  |\cos\theta|, & \hbox{$|\theta-\frac\pi2|>\max \{l_{n_1}^{-\eta},l_{n_2}^{-\eta}\}$}; \\
\max\{l_{n_2}l_{n_1}^{-1},l_{n_2}^{-1}l_{n_1}\}, & \hbox{general case}.
                                                                             \end{array}
                                                                           \right.$$
Moreover,
$$s_{n_1+n_2}(x,t)=-\frac12{\rm arccot\ } f(l_{n_1}, l_{n_2}, \theta)+s_{n_1}(x,t),$$
$$u_{n_1+n_2}(T^{n_1+n_2}x,t)=-\frac12\arctan f(l_{n_2},l_{n_1},\theta)+u_{n_2}(T^{n_1+n_2}x,t),$$
where \begin{equation}\label{product-two-matrix}
f(l_{n_1}, l_{n_2}, \theta)
=-\frac{l_{n_1}^2-l_{n_1}^{-2}l_{n_2}^{-4}}{2(1-l_{n_2}^{-4})}\cot\theta -\frac{l_{n_1}^2l_{n_2}^{-4}-l_{n_1}^{-2}}{2(1-l_{n_2}^{-4})}\tan\theta.
\end{equation}
\end{Lemma}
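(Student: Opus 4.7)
The plan is to expand both factors via the polar decomposition introduced before Proposition \ref{lemma4}, namely $A_{n_j}=R_{u_{n_j}}D_{l_{n_j}}R_{\frac{\pi}{2}-s_{n_j}}$ with $D_l=\operatorname{diag}(l,l^{-1})$, and then to collapse the two middle rotations via $R_{\frac\pi2-s_{n_2}}R_{u_{n_1}}=R_\theta$ to obtain
\[
A_{n_1+n_2}=R_{u_{n_2}}\,M_0\,R_{\frac\pi2-s_{n_1}},\qquad M_0:=D_{l_{n_2}}R_\theta D_{l_{n_1}}.
\]
Because $R_{u_{n_2}}$ and $R_{\frac\pi2-s_{n_1}}$ are rotations, $\|A_{n_1+n_2}\|=\|M_0\|$ and the singular directions of $A_{n_1+n_2}$ differ from those of $M_0$ only by these two rotations. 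Every quantity in the statement can therefore be read off from the explicit matrix
\[
M_0=\begin{pmatrix} l_{n_1}l_{n_2}\cos\theta & -l_{n_1}^{-1}l_{n_2}\sin\theta\\ l_{n_1}l_{n_2}^{-1}\sin\theta & l_{n_1}^{-1}l_{n_2}^{-1}\cos\theta\end{pmatrix}\in SL(2,\R).
\]

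For the two norm lower bounds I evaluate $M_0$ (respectively $M_0^T$) on the first standard basis vector. The first column of $M_0$ has length $\sqrt{l_{n_1}^2l_{n_2}^2\cos^2\theta+l_{n_1}^2l_{n_2}^{-2}\sin^2\theta}\ge l_{n_1}l_{n_2}|\cos\theta|$, which delivers the first inequality as soon as $|\cos\theta|$ (equivalently $|\theta-\pi/2|$) exceeds $\max\{l_{n_1}^{-\eta},l_{n_2}^{-\eta}\}$. For the unconditional ``general case'' bound the same column length is at least $l_{n_1}l_{n_2}^{-1}$, using $l_{n_2}^4\cos^2\theta+\sin^2\theta\ge 1$; symmetrically, the first column of $M_0^T$ has length at least $l_{n_2}l_{n_1}^{-1}$. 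Together they give $\|M_0\|\ge\max\{l_{n_2}l_{n_1}^{-1},l_{n_2}^{-1}l_{n_1}\}$.

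For the angle identities I perform the polar decomposition of $M_0$ itself, writing $M_0=R_\alpha D_{\|M_0\|}R_{\frac\pi2-\beta}$ with $\alpha=u(M_0)$, $\beta=s(M_0)$. Substituting back and comparing with the canonical form (working modulo $\pi$, since $R_{\theta+\pi}=-R_\theta$ gives the same projective datum) produces
\[
s_{n_1+n_2}-s_{n_1}\equiv \beta-\tfrac\pi2,\qquad u_{n_1+n_2}-u_{n_2}\equiv \alpha\pmod\pi.
\]
The angle $\beta$ is then recovered from the standard eigenvector formula $\cot(2\beta)=(a-c)/(2b)$, where $a,b,c$ are the $(1,1)$, $(1,2)$ and $(2,2)$ entries of $M_0^TM_0$, given by
\begin{align*}
a&=l_{n_1}^2(l_{n_2}^2\cos^2\theta+l_{n_2}^{-2}\sin^2\theta),\\
b&=-\sin\theta\cos\theta\,(l_{n_2}^2-l_{n_2}^{-2}),\\
c&=l_{n_1}^{-2}(l_{n_2}^2\sin^2\theta+l_{n_2}^{-2}\cos^2\theta).
\end{align*}
Splitting $(a-c)/(2b)$ into its $\cot\theta$ and $\tan\theta$ pieces and multiplying numerator and denominator by $l_{n_2}^{-2}$ so that the denominator becomes $1-l_{n_2}^{-4}$ reproduces the two-term rational function $f(l_{n_1},l_{n_2},\theta)$ of the lemma exactly. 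Inverting by $-\tfrac12\operatorname{arccot}$ then gives the displayed identity for $s_{n_1+n_2}$; the identity for $u_{n_1+n_2}$ is obtained by the same argument applied to $A_{n_1+n_2}^{-1}$, where conjugation by $R_{\pi/2}$ shows $M_0^{-1}$ is similar to $D_{l_{n_1}}R_{-\theta}D_{l_{n_2}}$, simultaneously swapping $l_{n_1}\leftrightarrow l_{n_2}$ in the argument of $f$ and shifting $s$ by $\pi/2$, which converts $\operatorname{arccot}$ into $\arctan$ via $\operatorname{arccot}(x)=\pi/2-\arctan(x)$.

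The hard part will be the sign and branch bookkeeping in the final step: checking that the $-\tfrac12$ factor, the signs of the $\cot\theta$ and $\tan\theta$ coefficients, and the chosen branches of $\operatorname{arccot}$ and $\arctan$ align so that the eigenvector computation of $\beta\in\R/\pi\Z$ lifts to the two displayed projective identities, and then verifying that the algebraic simplification of $(a-c)/(2b)$ really telescopes into the precise rational function $f$ after the $l_{n_2}^{-2}$ rescaling. The norm estimates and the polar-decomposition manipulations are routine once the explicit expression for $M_0$ is in hand.
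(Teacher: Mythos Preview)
Your approach is correct and essentially coincides with the paper's. Both reduce everything to the explicit middle block $M_0=D_{l_{n_2}}R_\theta D_{l_{n_1}}$ obtained by collapsing the two inner rotations in the polar decompositions. For the norm, the paper records the trace identity $l_{n_1+n_2}^2+l_{n_1+n_2}^{-2}=\operatorname{tr}(M_0^TM_0)$ (their equation \eqref{l-n-1+n-2}) and reads off the bounds, whereas you bound the operator norm below by column norms of $M_0$ and $M_0^T$; these are equivalent one-line arguments. For the angle identities the paper simply says ``a straightforward computation,'' while you actually carry it out: the eigenvector formula $\cot(2\beta)=(a-c)/(2b)$ for $M_0^TM_0$, together with your $l_{n_2}^{-2}$ rescaling, is exactly what recovers $f(l_{n_1},l_{n_2},\theta)$, and the $R_{\pi/2}$-conjugation trick for $M_0^{-1}$ correctly produces the $l_{n_1}\leftrightarrow l_{n_2}$ swap and the $\operatorname{arccot}\to\arctan$ shift. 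The branch/sign bookkeeping you flag as the delicate point is real but purely mechanical once one fixes the convention that $s(\cdot)$ is the contracting eigendirection of $M_0^TM_0$; there is no hidden obstruction.
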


\begin{Lemma}\label{norm_derivative}
Let $0<\eta\leq 10^{-2},\ 0<\mu\leq  10^{-2}$.
Assume that $|\partial_x^{m_1}\partial_t^{m_2} l_{n_i}|\leq |l_{n_i}|^{1+(m_1+m_2)\eta}$ and $|\partial_x^{m_1}\partial_t^{m_2}\theta|\leq \min\{l_{n_1}^{\eta},l_{n_2}^{\eta}\}$ for $i=1,2$ and $m_1+m_2=1,2$.
Then it holds that
$$\Big|\partial_x^{m_1}\partial_t^{m_2}l_{n_1+n_2}\Big|\leq \left\{
    \begin{array}{ll}
l_{n_1+n_2}^{1+(m_1+m_2)\eta}, & \hbox{$|\theta-\frac\pi2|>\max \{l_{n_1}^{-\eta},l_{n_2}^{-\eta}\}$} \\
l_{n_1+n_2}^{1+(m_1+m_2)\eta+2(m_1+m_2)\eta\mu}, & \hbox{$l_{n_1}^\mu\geq l_{n_2}$ or $l_{n_2}^\mu\geq l_{n_1}$}
                                                                             \end{array}.
                                                                           \right.$$
\end{Lemma}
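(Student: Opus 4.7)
The plan is to exploit the closed-form structure of $l_{n_1+n_2}$ that is already packaged in Lemma \ref{arctan}: writing out the polar decompositions of $A_{n_1}$ and $A_{n_2}$, one gets an exact rational identity expressing $l_{n_1+n_2}^2+l_{n_1+n_2}^{-2}$ as a polynomial in $l_{n_1}^{\pm 2}, l_{n_2}^{\pm 2}, \cos\theta, \sin\theta$, whose leading term in the non-resonant regime is $l_{n_1}^2 l_{n_2}^2\cos^2\theta$. Differentiating this identity with respect to $x$ and $t$ using the chain rule reduces the problem to summing terms of the form $(\partial^{j}_{x,t} l_{n_i})\cdot(\text{rational factor})$ and $(\partial^{j}_{x,t}\theta)\cdot(\text{rational factor})$, each of which is controlled by the hypothesised bounds on $\partial^{j}_{x,t} l_{n_i}$ and $\partial^{j}_{x,t}\theta$.

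For the first case $|\theta-\tfrac{\pi}{2}|>\max\{l_{n_1}^{-\eta},l_{n_2}^{-\eta}\}$, Lemma \ref{arctan} combined with the exact formula gives $l_{n_1+n_2}\asymp l_{n_1}l_{n_2}|\cos\theta|$. Taking the logarithmic derivative one gets schematically
$$
\frac{\partial l_{n_1+n_2}}{l_{n_1+n_2}}=\frac{\partial l_{n_1}}{l_{n_1}}+\frac{\partial l_{n_2}}{l_{n_2}}-(\tan\theta)\,\partial\theta+O\!\bigl(\min\{l_{n_1},l_{n_2}\}^{-2}\bigr).
$$
Each summand on the right is at most $l_{n_i}^{-1+\eta}$ by hypothesis, while $|\tan\theta|\le l_{n_i}^{\eta}$ in this regime. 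Multiplying through by $l_{n_1+n_2}$ yields $|\partial l_{n_1+n_2}|\le l_{n_1+n_2}^{1+\eta}$; applying the Leibniz rule once more and the same bounds gives the second-order version with exponent $1+2\eta$, which is the first bullet.

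For the second case, assume $l_{n_2}^\mu\ge l_{n_1}$ (the other is symmetric). Here $|\cos\theta|$ may be arbitrarily small, so the non-resonant estimate fails. Instead one uses the universal upper bound $l_{n_1+n_2}\le 2 l_{n_1}l_{n_2}$ (sub-multiplicativity of the norm) and the lower bound $l_{n_1+n_2}\ge l_{n_2}l_{n_1}^{-1}\ge l_{n_2}^{1-\mu}$ from Lemma \ref{arctan}. Differentiating the exact identity termwise and crudely bounding each term produces $|\partial^{m_1}_x\partial^{m_2}_t l_{n_1+n_2}|\le C(l_{n_1}l_{n_2})^{1+(m_1+m_2)\eta}$. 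Writing $l_{n_1}l_{n_2}\le l_{n_2}^{1+\mu}\le l_{n_1+n_2}^{(1+\mu)/(1-\mu)}$ and using $(1+\mu)/(1-\mu)\le 1+2\mu/(1-\mu)$ for small $\mu$ converts this into $l_{n_1+n_2}^{1+(m_1+m_2)\eta+2(m_1+m_2)\eta\mu}$, which is the second bullet.

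The main technical obstacle is Case 2: because the product can undergo drastic cancellation, $l_{n_1+n_2}$ is allowed to be exponentially smaller than $l_{n_1}l_{n_2}$, so the naive derivative-of-the-product bound is \emph{far} larger than $l_{n_1+n_2}^{1+(m_1+m_2)\eta}$. Recovering the correct exponent requires carefully exploiting the comparability $l_{n_2}^\mu\ge l_{n_1}$ (or its symmetric counterpart) to bound $l_{n_1}, l_{n_2}$ individually in terms of $l_{n_1+n_2}$ with explicit loss $\mu$, and verifying that the resulting exponent still matches. The bookkeeping for the second-order derivatives, where the Leibniz rule produces cross-terms mixing derivatives of $l_{n_i}$ and $\theta$, is the most delicate part and is where the factor $2(m_1+m_2)\eta\mu$ becomes essentially sharp.
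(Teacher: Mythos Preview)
Your treatment of the non-resonant case $|\theta-\tfrac\pi2|>\max\{l_{n_1}^{-\eta},l_{n_2}^{-\eta}\}$ is essentially the paper's argument: differentiate the identity \eqref{l-n-1+n-2}, keep only the main term $l_{n_1}l_{n_2}\cos\theta$, and use $|\tan\theta|\le\min\{l_{n_1}^\eta,l_{n_2}^\eta\}$ to control the $\theta$-contribution. (The sentence ``each summand on the right is at most $l_{n_i}^{-1+\eta}$'' is a slip---the log-derivative terms are of size $l_{n_i}^{\eta}$---but the conclusion you draw from it is correct.)

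The second case, however, contains a genuine gap. Your ``crude'' bound $|\partial^{m}l_{n_1+n_2}|\le C(l_{n_1}l_{n_2})^{1+m\eta}$ does not follow from differentiating \eqref{l-n-1+n-2} termwise: to isolate $\partial l_{n_1+n_2}$ one must divide by $2l_{n_1+n_2}$, and in the resonant regime $l_{n_1+n_2}$ can be as small as $l_{n_2}l_{n_1}^{-1}$, so the quotient $|\partial(\mathrm{RHS})|/l_{n_1+n_2}$ is of order $(l_{n_1}l_{n_2})^{2+\eta}/l_{n_1+n_2}$, which can exceed $(l_{n_1}l_{n_2})^{1+\eta}$ by a factor $l_{n_1}^{2}$. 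Even granting your crude bound, the exponent arithmetic is off: writing $l_{n_1}l_{n_2}\le l_{n_1+n_2}^{(1+\mu)/(1-\mu)}$ and raising to the power $1+m\eta$ gives exponent $(1+m\eta)(1+2\mu+O(\mu^2))=1+2\mu+m\eta(1+2\mu)+O(\mu^2)$, not $1+m\eta(1+2\mu)$. The extra $2\mu$ comes from scaling the ``$1$'' in $1+m\eta$, and it does not go away.

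The paper avoids this by \emph{not} discarding the structure of the right-hand side. When $l_{n_1}^{\mu}\ge l_{n_2}$ it keeps both $l_{n_1}^{2}$-terms and approximates $l_{n_1+n_2}$ by $\sqrt{l_{n_1}^{2}l_{n_2}^{2}\cos^{2}\theta+l_{n_1}^{2}l_{n_2}^{-2}\sin^{2}\theta}=l_{n_1}\,g$ with $g=\sqrt{l_{n_2}^{2}\cos^{2}\theta+l_{n_2}^{-2}\sin^{2}\theta}$, an approximation that is valid in $C^{2}$. The log-derivative of $l_{n_1}g$ is $\partial l_{n_1}/l_{n_1}+\partial g/g$, and the point is that $|\partial g/g|$ is controlled by powers of $l_{n_2}^{\eta}$ (not $l_{n_1}l_{n_2}$), so the final bound loses only the factor $l_{n_1}^{m\eta}\le l_{n_1+n_2}^{m\eta/(1-\mu)}$, yielding the stated exponent $1+m\eta(1+2\mu)$. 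The essential idea you are missing is to factor out the large norm $l_{n_1}$ first, so that the remaining factor $g$ already has the right size $l_{n_1+n_2}/l_{n_1}$ and its derivatives do not see the potentially catastrophic cancellation.
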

\vskip 0.4cm

Without loss of generality, we only consider $s_{n_1+n_2}$. Similar properties hold for $u_{n_1+n_2}$.
Then we need to figure out the properties of
\begin{equation}\label{phi-theta}
 \tilde\phi(x,t):=-\frac12{\rm arccot\ }f(l_{n_1}, l_{n_2}, \theta).
\end{equation}

\begin{Lemma}\label{nonresonance_phi}
Assume that $|\partial_x^{m_1}\partial_t^{m_2} l_{n_i}|\leq |l_{n_i}|^{1+(m_1+m_2)\eta}$ and $|\partial_x^{m_1}\partial_t^{m_2} \theta|\leq \min\{l_{n_1}^{\eta},l_{n_2}^{\eta}\}$  for $i=1,2$ and $m_1+m_2=1,2$.
If $l_{n_2}<l_{n_1}^{\mu}$ or $|\theta-\frac\pi2|>\max \{l_{n_1}^{-\eta},l_{n_2}^{-\eta}\}$, then
$$\Big|\partial_x^{m_1}\partial_t^{m_2}\tilde \phi(x,t)\Big|\leq l_{n_1}^{-2+2(m_1+m_2)\eta},\quad m_1+m_2=0,1,2.$$
\end{Lemma}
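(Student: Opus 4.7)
The plan is to exploit the fact that, under either half of the hypothesis, $|f(l_{n_1},l_{n_2},\theta)|$ is of order $l_{n_1}^{2}$ up to lower-order corrections, so that $\tilde\phi=-\tfrac12\arccot(f)$ is correspondingly small. Since $\tfrac{d}{df}\arccot(f)=-1/(1+f^{2})$, each differentiation contributes a factor of order $|f|^{-2}$ paired with a derivative of $f$. The argument therefore reduces to two estimates: (i) a robust lower bound $|f|\gtrsim l_{n_1}^{2-O(\eta+\mu)}$, and (ii) an upper bound $|\partial_x^{m_1}\partial_t^{m_2}f|\lesssim l_{n_1}^{2+O(\eta+\mu)}$ for $m_1+m_2\le 2$. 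Chaining them together via the chain and quotient rules produces $|\partial^{m}\tilde\phi|\lesssim l_{n_1}^{-2+O(\eta+\mu)}$, which matches the stated exponent $-2+2(m_1+m_2)\eta$ once $\mu$ is chosen sufficiently small relative to $\eta$.

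For (i), I would combine the two terms in \eqref{product-two-matrix} over the common denominator $\sin\theta\cos\theta$, obtaining
\[
f \;=\; -\frac{l_{n_1}^{2}}{2(1-l_{n_2}^{-4})}\cdot\frac{\cos^{2}\theta + l_{n_2}^{-4}\sin^{2}\theta}{\sin\theta\cos\theta}\cdot\bigl(1+O(l_{n_1}^{-4})\bigr).
\]
In the angular regime $|\theta-\tfrac{\pi}{2}|>\max\{l_{n_1}^{-\eta},l_{n_2}^{-\eta}\}$ one has $|\cos\theta|\gtrsim l_{n_1}^{-\eta}$, hence the numerator is at least $l_{n_1}^{-2\eta}$ while $|\sin\theta\cos\theta|\le\tfrac12$, giving $|f|\gtrsim l_{n_1}^{2-2\eta}$. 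In the norm regime $l_{n_2}<l_{n_1}^{\mu}$, the AM--GM inequality $\cos^{2}\theta+l_{n_2}^{-4}\sin^{2}\theta\ge 2l_{n_2}^{-2}|\sin\theta\cos\theta|$ bounds the ratio from below by $2l_{n_2}^{-2}$, yielding $|f|\gtrsim l_{n_1}^{2}l_{n_2}^{-2}\ge l_{n_1}^{2(1-\mu)}$. Either estimate produces $|\tilde\phi|\lesssim |f|^{-1}\lesssim l_{n_1}^{-2+O(\eta+\mu)}$, settling the $m_1+m_2=0$ case.

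For (ii), expand $\partial^{a}f$ by the chain rule into contributions $\partial_{l_{n_1}}f\cdot\partial l_{n_1}$, $\partial_{l_{n_2}}f\cdot\partial l_{n_2}$, and $\partial_{\theta}f\cdot\partial\theta$. The first, using $\partial_{l_{n_1}}f\approx 2f/l_{n_1}$ inherited from the above factorisation, is bounded by $2|f|l_{n_1}^{\eta}$, and after division by $f^{2}$ gives $\lesssim l_{n_1}^{\eta}/|f|\lesssim l_{n_1}^{-2+O(\eta+\mu)}$; the second carries an extra $l_{n_2}^{-4}$ from differentiating $l_{n_2}^{-4}$ and is subdominant in both cases. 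The main technical obstacle is the $\theta$-contribution in the subregime $l_{n_2}<l_{n_1}^{\mu}$ with $\theta$ near $\pi/2$, where
\[
\partial_{\theta}f \;=\; \tfrac{l_{n_1}^{2}}{2}\csc^{2}\theta \;-\; \tfrac{l_{n_1}^{2}l_{n_2}^{-4}}{2}\sec^{2}\theta \;+\; O(l_{n_1}^{-2})
\]
contains individually large $\sec^{2}\theta$ terms. Nevertheless, at the balance scale $|\theta-\tfrac{\pi}{2}|\sim l_{n_2}^{-2}$ where $|f|$ is smallest, the suppression $l_{n_2}^{-4}\sec^{2}\theta\lesssim 1$ forces $|\partial_{\theta}f|\lesssim l_{n_1}^{2}$; for $|\theta-\tfrac{\pi}{2}|\ll l_{n_2}^{-2}$, although $\partial_{\theta}f$ itself is large, the ratio $|\partial_{\theta}f|/f^{2}$ remains $\lesssim l_{n_2}^{4}/l_{n_1}^{2}\le l_{n_1}^{-2+4\mu}$. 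Multiplying by $|\partial\theta|\le l_{n_1}^{\eta}$ completes the first-derivative bound, and the second derivative follows by the quotient rule at the cost of at most an additional factor $l_{n_1}^{O(\eta+\mu)}$.
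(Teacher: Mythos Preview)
Your proposal is correct and follows essentially the same route as the paper: both arguments hinge on the lower bound $|f|\gtrsim l_{n_1}^{2-O(\eta+\mu)}$ (obtained via $|\cot\theta|\gtrsim l_{n_1}^{-\eta}$ in the angular regime and via the same AM--GM inequality $A\cot\theta+B\tan\theta\ge 2\sqrt{AB}=l_{n_1}^{2}l_{n_2}^{-2}$ in the norm regime), followed by chain-rule bounds on $\partial f$ and the formulas $\partial\tilde\phi=\partial f/(2(1+f^{2}))$, $\partial^{2}\tilde\phi=-f(\partial f)^{2}/(1+f^{2})^{2}+\partial^{2}f/(2(1+f^{2}))$. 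Your treatment of the $\theta$-derivative near $\pi/2$ in the regime $l_{n_2}<l_{n_1}^{\mu}$ is more explicit than the paper's (which simply asserts the outcome of the computation), but the substance is identical.
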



\begin{Lemma}\label{shape-phi-n+1}
Assume $l_{n_2}^\mu> l_{n_1}$, $|\partial_x^{m_1}\partial_t^{m_2}\theta(x,t)|\leq l_{n_1}^{\eta}$, $x\in I_{n+1,2}$ \footnote{or $x\in I_{n+1,1}$} and $|\partial_x^{m_1}\partial_t^{m_2} l_{n_i}|\leq |l_{n_i}|^{1+(m_1+m_2)\eta}$ ($m_1+m_2=0,1,2$). 
Suppose that $\theta(x,t)$ satisfies (\ref{lm17-main}), (\ref{I-zero-derivative}),  (\ref{I-first-derivative}) and (\ref{I-second-derivative}).
Then $\tilde\phi(x,t)$ has the following properties.
\begin{enumerate}
\item[(i):] $\tilde\phi(x,t)$ increases from $0$ to $\pi$ as $\theta$ changes from $0$ to $\pi$. Moreover, if $|\frac{\pi}{2}-\theta|< l_{n_1}^{-\eta}$, then $\tilde\phi$ increases from $Cl_{n_1}^{-2+\eta}$ to  $\pi-Cl_{n_1}^{-2+\eta}$.
\item[(ii):] If  $|\frac{\pi}{2}-\theta|\geq l_{n_1}^{-1+\eta}$, then it holds that $|\partial^j_x\tilde\phi(x,t)|+|\partial^j_t\tilde\phi(x,t)|\leq  l_{n_1}^{-2+j\eta}\cdot|\frac{\pi}{2}-\theta|^{-(j+1)}$ $(j=0,1,2)$.  Moreover, if $|\frac{\pi}{2}-\theta|\geq l_{n_1}^{-\eta}$, then $|\partial_x^j \tilde \phi(x,t)|+|\partial_t^j \tilde \phi(x,t)|\leq l_{n_1}^{-2+5\eta}$ $(j=0,1,2)$.

\item[(iii):] If  $l_{n_1}^{-1-\eta}<|\frac{\pi}{2}-\theta|< l_{n_1}^{-1+\eta}$, then
$Cl_{n_1}^2\geq |\partial^2_X\tilde\phi(x,t)|\ge l_{n_1}^{1-6\eta}\gg 1,~X\in \{x,t\}.$
\item[(iv):] If $|\frac{\pi}{2}-\theta|\leq l_{n_1}^{-1-\eta}$, then $|\partial_X\tilde\phi(x,t)|\geq l_{n_1}^\eta\gg 1,~X\in\{x,t\}.$
\item[(v):] If $|\frac{\pi}{2}-\theta|\leq  l_{n_1}^{-2-\eta}$, then $|\partial_X\tilde\phi(x,t)|\geq l_{n_1}^{2-3\eta},~X\in \{x,t\}.$
At this time, $|\tilde \phi(x,t)-\frac\pi2|\leq c\pi$, where $0<c<\frac12$.
\item[(vi):] $|\partial_x^{m_1}\partial_t^{m_2} \tilde{\phi}(x,t)|\leq l_{n_1}^9$ with $m_1+m_2=2.$
\end{enumerate}
\end{Lemma}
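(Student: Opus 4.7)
My plan is to work directly from the explicit formula \eqref{product-two-matrix} for $f(l_{n_1},l_{n_2},\theta)$, treating $\tilde\phi = -\tfrac{1}{2}\arccot(f)$ as a composition and pushing the assumed $C^2$ bounds on $\theta$ and $l_{n_1}$ through the chain rule. The dominance hypothesis $l_{n_2}^{\mu} > l_{n_1}$ with small $\mu$ makes the coefficients in $f$ effectively $-\tfrac{1}{2}l_{n_1}^2\cot\theta$ and $\tfrac{1}{2}l_{n_1}^{-2}\tan\theta$, modulo an error $\mathcal{E}(\theta)$ which, together with all its derivatives, is bounded by a very negative power of $l_{n_2}$. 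Since $\partial_\theta f = \tfrac{l_{n_1}^2}{2\sin^2\theta} + \tfrac{l_{n_1}^{-2}}{2\cos^2\theta} + O(l_{n_2}^{-3}) > 0$, the function $f$ is monotone on each of $(0,\pi/2)$ and $(\pi/2,\pi)$ and ranges over all of $\mathbb{R}$ on each piece, which upon choosing a continuous lift of $\arccot$ yields (i), including the quantitative boundary behavior $|\tilde\phi| \lesssim l_{n_1}^{-2+\eta}$ and $|\tilde\phi - \pi| \lesssim l_{n_1}^{-2+\eta}$ whenever $|\theta - \tfrac{\pi}{2}| < l_{n_1}^{-\eta}$, coming directly from $|f| \gtrsim l_{n_1}^{2-\eta}$ in that regime.

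The regime-dependent estimates (ii)--(v) then reduce to the identity
\[
\partial_\theta\tilde\phi = \frac{\partial_\theta f}{2(1+f^2)},
\]
and its higher-order analogues, subjected to a case split in $\psi := \theta - \pi/2$. Writing $f \approx \tfrac{l_{n_1}^2}{2}\psi - \tfrac{l_{n_1}^{-2}}{2\psi}$: if $|\psi| \geq l_{n_1}^{-1+\eta}$ the first term dominates, $|f|$ is large, $(1+f^2)^{-1} \lesssim l_{n_1}^{-2+2\eta}|\psi|^{-2}$ and $|\partial_\theta f| \lesssim l_{n_1}^{2}|\psi|^{-2}$, giving the polynomial decay in (ii); if $|\psi| \leq l_{n_1}^{-1-\eta}$ the cotangent term is negligible, $|f| \leq l_{n_1}^{-\eta}$, so $(1+f^2)^{-1}$ is bounded below by a constant and $|\partial_\theta f| \geq l_{n_1}^2$, giving the large lower bounds (iv) and (v). The transitional regime $l_{n_1}^{-1-\eta} < |\psi| < l_{n_1}^{-1+\eta}$ in (iii) requires keeping both terms of $f$ and is fed into $\partial_\theta^2\tilde\phi = \tfrac{1}{2}\bigl[(1+f^2)^{-1}\partial_\theta^2 f - 2f(1+f^2)^{-2}(\partial_\theta f)^2\bigr]$. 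Having obtained bounds in $\theta$, I would pass to $(x,t)$ via the chain rule, using the assumed $C^2$ bounds on $\theta$ and $l_{n_1}$ together with the inductive lower bounds $|\partial_t \theta| \geq 1/10$ and $|\partial_x \theta| \geq q_{N+n-1}^{-2}$ supplied by \eqref{lm17-main} and \eqref{I-first-derivative}; these are precisely what lets the $\theta$-derivative lower bounds in (iv)--(v) be transferred to honest lower bounds on $\partial_X \tilde\phi$ for $X \in \{x,t\}$. The global $C^2$ bound (vi) is then the worst-case estimate among all regimes.

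The principal technical obstacle is the second-order lower bound in (iii). There the two summands of $\partial_\theta^2\tilde\phi$ can both be of order $l_{n_1}^2$, so merely combining absolute values is not enough and one must rule out destructive cancellation to establish $|\partial_X^2\tilde\phi| \geq l_{n_1}^{1-6\eta}$. I expect this to come from a leading-order expansion in $\psi$ showing that throughout the strip $l_{n_1}^{-1-\eta} < |\psi| < l_{n_1}^{-1+\eta}$ the dominant contribution to $\partial_\theta^2\tilde\phi$ has a definite sign, reflecting the underlying fact that the arccotangent of a nearly-linear function is genuinely bending at scale $l_{n_1}^{-2}$. Once this sign is established, the $(x,t)$ transfer and the treatment of the error term $\mathcal{E}$ are routine bookkeeping.
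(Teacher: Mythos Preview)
Your overall approach --- reducing to the approximation $f \approx -\tfrac{1}{2}(l_{n_1}^2\cot\theta - l_{n_1}^{-2}\tan\theta)$ via $l_{n_2}^\mu > l_{n_1}$, computing $\theta$-derivatives of $\tilde\phi$ through the chain rule, and splitting into regimes in $|\psi| = |\theta - \tfrac{\pi}{2}|$ --- is exactly what the paper does. Your identification of the cancellation obstacle in (iii) is also on target; the paper resolves it by showing that after discarding lower-order terms one has $\partial_X^2\tilde\phi \approx \dfrac{l_{n_1}^6\cot\theta}{1+l_{n_1}^8\cot^4\theta}(\partial_X\theta)^2$, which carries the sign of $\cot\theta$ and has magnitude $\gtrsim l_{n_1}^{1-O(\eta)}$ throughout the strip.

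There is, however, a concrete error in your treatment of (iv). You assert that for $|\psi| \leq l_{n_1}^{-1-\eta}$ the cotangent term of $f$ is negligible and $|f| \leq l_{n_1}^{-\eta}$. This is backwards: at the upper boundary $|\psi| = l_{n_1}^{-1-\eta}$ one has $l_{n_1}^2|\cot\theta| \approx l_{n_1}^{1-\eta}$ while $l_{n_1}^{-2}|\tan\theta| \approx l_{n_1}^{-1+\eta}$, so the cotangent term \emph{dominates} and $|f|$ is of order $l_{n_1}^{1-\eta}$, hence $(1+f^2)^{-1}$ is not uniformly bounded below on this regime and your direct argument fails. The paper handles (iv) by a further three-way split according to whether $l_{n_1}^2|\cot\theta|$ is $\ll 1$, comparable to $1$, or $\gg 1$; only in the first two sub-cases is $(1+f^2)^{-1}$ bounded below (giving $|\partial_X\tilde\phi| \gtrsim l_{n_1}^{2-\eta}$), while in the third sub-case one instead uses $|\partial_X\tilde\phi| \gtrsim l_{n_1}^{-2-\eta}\tan^2\theta \geq l_{n_1}^{\eta}$. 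The same mis-assessment of $|f|$ contaminates your sketch of (v): there in fact $|f| \gtrsim l_{n_1}^{\eta}$ is large, but the $\tan^2\theta$ growth in $\partial_\theta f$ compensates to yield $\partial_\theta\tilde\phi \approx l_{n_1}^2$ regardless.
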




\vskip 0.2cm

Now we go back to (\ref{product3}).
Since the step $n$ and step $n+1$ may be resonant or non-resonant, we divide the proof into four cases:
the case from Case $\bf 1_n$ to Case $\bf 1_{n+1}$,
the case from Case $\bf 1_n$ to Case $\bf 2_{n+1}$,
the case from Case $\bf 2_n$ to Case $\bf 2_{n+1}$,
and the case from Case $\bf 2_n$ to Case $\bf 1_{n+1}$.

\vskip 0.4cm

\noindent{\bf (1)\ The case from Case $\bf 1_n$ to Case $\bf 1_{n+1}$.}

\vskip 0.2cm

In this case, $I_n$ consists of two disjoint intervals $I_{n,i}$, $i=1, 2$ and  (\ref{I-zero-derivative}) - (\ref{I-second-derivative}) hold true for the $n$-th step by the inductive assumption.
We will prove that (\ref{I-zero-derivative}) - (\ref{I-second-derivative}) hold true by replacing $n$ by $n+1$.
Without loss of generality, we only consider $I_{n,1}$ and assume $\partial_x g_{n+1}(x,t)<0$ for $x\in I_{n,1}$. The other case follows similarly.

Let $x\in I_{n+1,1}$.
Note that by Lemma \ref{nonresonance_phi}, the angle between $A_{ t_2-t_1}(T^{ t_1}x,t)$ and $ A_{t_{1}}(x,t)$, that is, $s_{t_2-t_1}(T^{t_1}x,t)-u_{t_1}(T^{t_1}x,t)$, satisfies
$$\|s_{t_2-t_1}(T^{t_1}x,t)-u_{t_1}(T^{t_1}x,t)-g_{n+1}(T^{t_1}x,t)\|_{C^2}<\lambda^{-\frac32 r_{n}}.$$
Since $T^{t_1}x\in I_{n,1}$ and step $n+1$ belongs to Case $\textbf{1}_{n+1}$, we have
$T^{t_1}x\in I_{n,1}\backslash I_{n+1,1}$.
Then by \eqref{Calpha} it holds that
\begin{align*}
|s_{t_2-t_1}(T^{t_1}x,t)-u_{t_1}(T^{t_1}x,t)|\geq& |g_{n+1}(T^{t_1}x,t)|-\lambda^{-\frac32 r_{n}(t)}\geq cq^{-6000\tau}_{N+n}-\lambda^{-\frac32 r_{n}(t)}\\
\geq &cq^{-6000\tau}_{N+n}-\lambda^{-\frac32 q_{N+n-1}}\gg \lambda^{-\eta_n r_{n}},
\end{align*}
where $\eta_n:=q_{N+n}^{-1}$.
Note that in the above inequalities, we used the Diophantine condition.
We also remark that the Diophantine condition is needed to verify the condition $|\theta-\frac\pi2|\geq \max\{l_{n_1}^{-\eta},l_{n_2}^{-\eta}\}$ in Lemma \ref{arctan} - Lemma \ref{nonresonance_phi}.
Then by Lemma \ref{arctan} and Lemma \ref{norm_derivative}, we obtain
$$
\left\{\begin{array}{ll}
&\|A_{t_2}(x,t)\|\geq \lambda_{n}^{(1-C\eta_n)t_2},\\&\big|\partial_x^{m_1}\partial_t^{m_2}\|A_{t_2}(x,t)\|\big|\leq \|A_{t_2}(x,t)\|^{1+C(m_1+m_2)\eta_n},\quad m_1+m_2=1,2.\end{array}\right.$$
By Lemma \ref{nonresonance_phi}, we have
$$\|u_{t_2}(T^{t_2}x,t)-u_{t_2-t_1}(T^{t_2}x,t)\|_{C^2}<\lambda^{-\frac32 r_{n}},$$
which leads to
$$\|s_{t_3-t_2}(T^{t_2}x,t)-u_{t_2}(T^{t_2}x,t)-g_{n+1}(T^{t_2}x,t)\|_{C^2}<\lambda^{-\frac32 r_{n}}.$$
Hence  the estimates of $A_{t_3-t_2}(T^{t_2}x)\cdot A_{t_2}(x)$ is available.

Note that for $0<|k|<q_{N+n}$, $$\inf_{j\in \Z}|d_{n+1}-k\alpha-j|> |I_{n+1}|.$$
Then we have $T^kx\not\in I_{n+1}(t)$.
Hence repeated applications of above arguments lead to
$$\left\{
\begin{array}{ll}
&\|A_{ r_{n+1}^+}(x,t)\|>\lambda_n^{(1-C\eta_n)r_{n+1}^+}\geq \lambda_{n+1}^{r_{n+1}^+},
\\&\Big|\partial_x^{m_1}\partial_t^{m_2}\|A_{ r_{n+1}^{+}}(x,t)\|\Big|<\|A_{ r_{n+1}^+}(x,t)\|^{1+C(m_1+m_2)\eta_n},\quad m_1+m_2=1, 2,
\\&\|s_{r^+_{n+1}}-s_{r^+_{n}}\|_{C^2}<\lambda^{-\frac32r_{n}}.
\end{array}\right.
$$

Thus, \boxed{\eqref{norm-lower-bound}~\text{and}~\eqref{norm-derivative}} hold true.
Similarly, the same estimates hold for  $\|A_{ r_{n+1}^-}(x,t)\|$ and $u_{r^-_{n+1}}$. Hence
\begin{equation}\label{zhishux}\|g_{n+1}-g_{n+2}\|_{C^2}\leq \lambda^{-\frac32r_{n}},\end{equation}
which implies \boxed{\eqref{gn-gn+1}}.

Combining this with the assumption on \eqref{lm17-main} of $g_{n+1},$ we immediately obtain
$$\frac{1}{10}-\lambda^{-\frac32r_{n}} <  \frac{\partial g_{n+1}(x,t)}{\partial t}  < \lambda^{10q_{N+n-1}}+\lambda^{-\frac32r_{n}}.$$
This leads to \boxed{\eqref{lm17-main}} for $g_{n+2}$ as desired. \footnote{\eqref{zhishux} allows us to make an appropriate adjustments to \( g_0 \) to modify the lower bound and the upper bound, ensuring that \eqref{lm17-main} holds for $g_{n+2}$.}
Consequently, $g_{n+2}$ possesses only one zero in each $I_{n+1,j}$, denoted  by $c_{n+1,j},\ j=1, 2$.
Then the above equality implies $|c_{n+2,j}-c_{n+1,j}|<\lambda^{-\frac34r_n}$
and hence \boxed{\eqref{maxmax}} and \boxed{(\ref{I-zero-derivative})-(\ref{I-second-derivative})} hold true if we replace $n$ by $n+1$.


\vskip 0.4cm

\noindent{\bf (2)\ The case from Case $\textbf{1}_n$ to Case $\textbf{2}_{n+1}$.}
\vskip 0.2cm
For this case, we have  $\inf\limits_{j\in \Z}|d_n-i\alpha-j|> |I_{n}|$ for $0<|i|<q_{N+n-1}$ and there exists $0 \leq |k|<q_{N+n}$ such that
$$
\inf\limits_{j\in \Z}|d_{n+1}-k\alpha-j|< |I_{n+1}|.
$$
Without loss of generality, we assume $k\geq 0$, $\partial_xg_{n+1}(x,t)>0$ for $x\in I_{n,1}$ and $\partial_xg_{n+1}(x,t)<0$ for $x\in I_{n,2}$. We further assume that $x\in I_{n+1,2}$ and $T^kx\in I_{n+1,1}$. Notice that
$$
A_{r_{n+1}^+}(x,t)=A_{r_{n+1}^+-k}(T^kx,t)\cdot A_{k}(x,t),\ \ A_{r_{n+1}^-}(x,t)= A_{-r_{n}^-}(x,t)\cdot A_{r_{n+1}^--r_{n}^-}(T^{-r_{n}^-}x,t).
$$
Since $x\in I_{n+1,2}$ and $T^{r_{n+1}^{X}}x\in I_{n+1,2},~X\in~\{+,-\}$ with $|I_{n+1,2}|=2q^{-2000\tau}_{N+n},$
by the diophantine condition, we have
$$2q^{-2000\tau}_{N+n}=|I_{n+1,2}|>\inf\limits_{j\in \Z}|T^{r_{n+1}^{X}}x-x-j|=\inf\limits_{j\in \Z}|T^{r_{n+1}^{X}}\alpha-j|>\frac{\gamma}{(r_{n+1}^{X})^{\tau-1}}.$$
Hence (by taking large $N$) we obtain $$r_{n+1}^{X}>q^{2000}_{N+n},~X\in~\{+,-\},$$ which implies $\boxed{\eqref{ga110}}.$
Then we can obtain with a similar argument in the proof of  ``Case $\textbf{1}_n$ to Case $\textbf{1}_{n+1}$'' that
$$
\left\{\begin{array}{ll}&\|A_{r_{n+1}^+-k}(T^kx,t)\|>\lambda_n^{(1-C\eta_n)(r_{n+1}^+-k)},\quad \|A_{k}(x,t)\|>\lambda_n^{(1-C\eta_n)k};
\\&\Big|\partial_x^{m_1}\partial_t^{m_2}\|A_{ r_{n+1}^{+}-k}(T^kx,t)\|\Big|<\|A_{ r_{n+1}^+-k}(T^kx,t)\|^{1+C(m_1+m_2)\eta_n}; \\&\Big|\partial_x^{m_1}\partial_t^{m_2}\|A_{k}(x,t)\|\Big|<\|A_{ k}(x,t)\|^{1+C(m_1+m_2)\eta_n}, \quad m_1+m_2=1, 2;
\\&\|s_{r_{n+1}^+-k}(T^kx,t)-u_k(T^kx,t)-g_{n+1}(T^kx,t)\|_{C^2}, \quad \|u_{r_{n+1}^-}(x,t)-u_{r_{n}^-}(x,t)\|_{C^2}\leq \lambda^{-\frac32r_{n}}.
\end{array}\right.$$
By Lemma \ref{arctan} and Lemma \ref{norm_derivative}, we obtain
$$\left\{\begin{array}{ll}&\|A_{ r_{n+1}^+}(x,t)\|>\lambda_n^{(1-C\eta_n)r_{n+1}^+-3|k|}\geq \lambda_n^{(1-C\eta_n)q_{N+n}^{2000}-3q_{N+n}}\geq \lambda_{n+1}^{r_{n+1}^+},\\
&\Big|\partial_x^{m_1}\partial_t^{m_2}\|A_{ r_{n+1}^{+}}(x,t)\|\Big|<\|A_{ r_{n+1}^+}(x,t)\|^{1+2C(m_1+m_2)\eta_n},\quad m_1+m_2=1, 2.
\end{array}.\right.$$
Thus, \boxed{\eqref{norm-lower-bound}~\text{and}~\eqref{norm-derivative}} hold true.
Moreover, by Lemma \ref{arctan}, we have
\begin{equation}\label{defgn+2}
g_{n+2}(x,t)=\phi(x,t)+g_{n+1,2}(x,t),\quad x\in I_{n+1,2},\end{equation}
$$\phi(x,t)=-\frac12{\rm arccot\ }f\big(\|A_{r_{n+1}^+-k}(T^kx,t)\|,\|A_{k}(x,t)\|,\frac\pi2-g_{n+1,1}(T^kx,t)\big),$$ $$g_{n+1,1}=s_{r_{n+1}^+-k}-u_k,\quad g_{n+1,2}=s_k-u_{r_{n+1}^{-}}\ \ \text{and}\ \ \|g_{n+1,j}-g_{n+1}\|_{C^2}\leq \lambda^{-\frac32r_{n}}.$$
Note that
\begin{align*}
\|A_{r_{n+1}^+-k}(T^kx,t)\|^{2q_{N+n}^{-1}} &\geq \lambda^{(1-C\eta_n)(r_{n+1}^+-k)\cdot (2q_{N+n}^{-1})}\geq \lambda^{(1-C\eta_n)(q_{N+n}^{2000}-k)\cdot (2q_{N+n}^{-1})}\\
&\geq \lambda^{(1-C\eta_n)(q_{N+n}^{2000}-q_{N+n})\cdot (2q_{N+n}^{-1})}\gg \lambda^{|k|^{100}}\gg \|A_{k}(x,t)\|.
\end{align*}
Thus we can apply Lemma \ref{shape-phi-n+1} to study $g_{n+2}$.
For simplicity, we further assume $|d_{n+1}-k\alpha|\le 40^{-1}|I_{n+1}|$,
which implies
\begin{align}\nonumber
I_{n+1,1}+k\alpha&=[c_{n+1,2}-q_{N+n}^{-2000\tau} -d_{n+1}+k\alpha, c_{n+1,2}+q_{N+n}^{-2000\tau} -d_{n+1}+k\alpha]\\ &\hspace{2cm} \supset [c_{n+1,2}-20^{-1}q_{N+n}^{-2000\tau} , c_{n+1,2}+20^{-1}q_{N+n}^{-2000\tau} ].\label{distance-resonant}
\end{align}
Hence by (\ref{I-first-derivative}) (note that $c_{n+1,2}$ is the unique zero of $g_{n+1}$ in $I_{n,2}$),
\begin{equation}\label{range-resonance}
g_{n+1}(I_{n+1,1}+k\alpha)+\pi/2\supset [\pi/2-20^{-1}q_{N+n}^{-6000\tau},\ \pi/2+20^{-1}q_{N+n}^{-6000\tau}].
\end{equation}
This implies that the jump part in the middle of the graph of $\tilde\phi$ is contained in $I_{n+1,2}$.
Moreover, note $g_{n+2}(x,t)=\phi(x,t)+g_{n+1,2}(x,t)$.
By Lemma \ref{shape-phi-n+1}, it can be seen that $\partial_xg_{n+2}$ has exactly 2 zeros in $I_{n+1,j}$ denoted by $\{\tilde{c}_{n+2,j},\tilde{c}'_{n+2,j}\}$ with $$|\tilde{c}_{n+2,j}-\tilde{c}'_{n+2,j}|\leq Cl^{-\frac{1}{2}}_{k},$$ which implies \boxed{\eqref{tic}}. Moreover, we can prove that the graph of $g_{n+2}(x,t)$ possesses a shape satisfying (\ref{five-piece}) - (\ref{g_nmin}) as follows (see Figure 2).

\begin{equation}\label{five-piece}
\left\{\begin{array}{ll}
q_{N+n}^2\ge -\partial_x{g}_{n+2}\ge q_{N+n}^{-2}, \ {g}_{n+2}\in  [-\frac{\pi}{100},\ \frac{\pi}{100} ], & x \le C_1\\
 \\
 l_{k}^{2}\ge\partial^2_x{g}_{n+2}\geq l_{k}^{\mu},\ \partial_x{g}_{n+2} \ \text{has one zero},& x \in  [C_1,C_2] \\
\\
l^2_{k}\geq \partial_x{g}_{n+2}\geq l_{k}^{2-\eta},\ {g}_{n+2}\in  [\frac\pi{10}, \frac{9\pi}{10}], & x \in  [C_2,C_3] \\
\\
  l_{k}^{2}\ge-\partial^2_x{g}_{n+2}\geq l_{k}^{\mu},\ \partial_x{g}_{n+2} \ \text{has one zero},& x \in [C_3,C_4]\\
\\
 q^{2}_{N+n}\geq -\partial_x{g}_{n+2}\geq q^{-2}_{N+n}, {g}_{n+2}\in  [\frac{99\pi}{100},\frac{101\pi}{100} ], & x \ge C_4.\\
\\
~|\partial_x^{m_1}\partial_t^{m_2} g_{n+2}(x,t)|\leq l_{k}^5,\qquad~m_1+m_2=2,
 \end{array}\right.
\end{equation}

where $$C_1:=d_{n+1}-k\alpha +c_{n+1,1}-l_{k}^{-1-\eta};~C_2:=d_{n+1}-k\alpha +c_{n+1,1}-l_{k}^{-2-\eta};$$
$$C_3:= d_{n+1}-k\alpha +c_{n+1,1}+l_{k}^{-2-\eta};~C_4:=d_{n+1}-k\alpha +c_{n+1,1}+l_{k}^{-1+\eta}.$$

\begin{equation}\label{c-n+2-c-n+1}
|c_{n+2,j}-c_{n+1,j}|<C\lambda^{-\frac12r_{n}},\quad j=1,2.
\end{equation}
\begin{equation}\label{range-of-distance}
\pi-l_{k}^{-1+3\eta}\leq \max_{x\in I_{n+1,2}}\{g_{n+2}(x,t)\}-\min_{x\in I_{n+1,2}}\{g_{n+2}(x,t)\} \leq \pi-l_{k}^{-1-3\eta}.
\end{equation}
\begin{equation}\label{2ndderivative}
 |\partial^2_x g_{n+2}|\ge l_k^{1-10\mu}{\rm\ if \ } |\partial_x g_{n+2}|\le 100^{-1}q_{N+n}^{-2}.
\end{equation}
\begin{equation}\label{g-derivative-t}
 \frac{1}{10}\leq \partial_t g_{n+2}(x,t)\leq Cl_k^5.
\end{equation}
\begin{equation}\label{g_nondegenerate}
|g_{n+2}(x,t)|\geq C\cdot \text{dist}(x,X)^3,
\end{equation}where $X$ denotes the set of the zeros of $g_{n+2}$ in $I_{n+1}$ (this implies \boxed{\eqref{maxmax}}).

Besides, if $|c_{n+1,2}+k\alpha-c_{n+1,1}|=0,$ then we have
\begin{equation}\label{g_nmin}\min\limits_{x\in I_{n+1}} |g_{n+2}(x,t)(\rm{mod}~\pi)|>l_k^{-\frac{3}{2}}.
\end{equation}
The proof of (\ref{five-piece}) - (\ref{g_nmin}) will be given in Appendix A.

Then by (\ref{five-piece}) it follows that $\partial_x g_{n+1}$ has at most two zeros ($\tilde{c}_{n+1,j}$ and $\tilde{c}'_{n+1,j}$) in each $I_{n+1,j}$, which divides $I_{n+1,j}$ into at most three monotone intervals.   This together with (\ref{range-of-distance}) implies that $g_{n+1}$ may has at most 2 zeros in each $I_{n,j}$. Hence \boxed{\eqref{either12}} holds true. On the other hand, \boxed{\eqref{ga2}} holds directly from \eqref{c-n+2-c-n+1} and \boxed{\eqref{maxmax1}} follows from the last term of \eqref{five-piece}. And \eqref{2ndderivative} implies \boxed{\eqref{range-g-n}} and \boxed{\eqref{range-g-n-lower-bound}}. \boxed{\eqref{lm17-main1}}follows from \eqref{g-derivative-t}. \boxed{\eqref{lm17-main}} of $g_{n+2}$ directly follows from \eqref{lm17-main1} and the fact $|k|<q_{N+n}.$ \eqref{g_nmin} implies the condition of Case {\bf 3}. We will discuss \eqref{cn1cn2} and \eqref{Case3lem}  in the proof of  Case $\textbf{3}_{n+1}$ later (note \eqref{Case3lem} is a special case of (\ref{g_nmin1})).

 For \eqref{cc}, we need the following technical lemma, which is proved in the Appendix A.

\begin{Lemma}\label{technical_argument}
For $E=E_2E_1\in {\rm SL}(2,\mathbb R)$ satisfying $\|E_2\|\gg\|E_1\|\gg1$, it holds that $$|s(E)-E_1^{-1}\cdot s(E_2)|<C\|E\|^{-2},\quad |s(E_2)-E_1\cdot s(E)|<C\|E_2\|^{-2}.$$
Similarly, for $E=E_2E_1\in {\rm SL}(2,\mathbb R)$ satisfying $\|E_1\|\gg\|E_2\|\gg1$, it holds that
$$|u(E)-E_2\cdot u(E_1)|<C\|E\|^{-2},\quad |u(E_1)-E_2^{-1}\cdot u(E)|<C\|E_1\|^{-2}.$$
\end{Lemma}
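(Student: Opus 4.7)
The plan is to reduce both identities to a single SVD computation for $E_2$. Choose the orthonormal basis $(\hat{s}(E_2),\hat{p}_2)$ of the input plane of $E_2$, so that for any vector $v=\alpha\hat{s}(E_2)+\beta\hat{p}_2$ one has $\|E_2 v\|^2 = \alpha^2\|E_2\|^{-2}+\beta^2\|E_2\|^2$. Applying this to $w := E_1\hat{s}(E)$ and using the defining property $\|E_2 w\| = \|E\hat{s}(E)\|=\|E\|^{-1}$, I obtain
\[
\alpha^2\|E_2\|^{-2}+\beta^2\|E_2\|^2 = \|E\|^{-2},
\]
which yields the two key inequalities
\[
|\beta|\le\|E\|^{-1}\|E_2\|^{-1},\qquad \|w\|=\sqrt{\alpha^2+\beta^2}\le\sqrt{2}\,\|E_2\|\|E\|^{-1}.
\]

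The projective angular distance between $w/\|w\|$ and $\hat{s}(E_2)$ is comparable to $|\beta|/\|w\|$, since $|\beta|\ll\|w\|$ forces $|\alpha|\ge\tfrac12\|w\|$. This gives the second identity $|s(E_2)-E_1\cdot s(E)|\lesssim |\beta|/\|w\|$ directly, with the crude lower bound $\|w\|\ge\|E_1\|^{-1}$ producing the stated $C\|E_2\|^{-2}$ bound after absorbing an $\|E_1\|^2$ factor into $C$ under the hypothesis $\|E_2\|\gg\|E_1\|$. For the first identity I would then pull back through $E_1^{-1}$ using the elementary derivative formula $|d(A\cdot\theta)/d\theta|_{\hat d} = \|A\hat d\|^{-2}$ for the projective action of $A\in\mathrm{SL}(2,\mathbb R)$. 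At direction $\hat{s}(E_2)$ this multiplies angular errors by $\|E_1^{-1}\hat{s}(E_2)\|^{-2}\approx\|w\|^2$ (the approximation being valid since $E_1^{-1}\hat{s}(E_2)$ is projectively close to $\hat{s}(E)$ and $E_1\hat{s}(E)=w$ by definition), yielding
\[
|s(E)-E_1^{-1}\cdot s(E_2)|\lesssim |\beta|\,\|w\|\le\frac{\|w\|}{\|E\|\,\|E_2\|}\le\frac{1}{\|E\|^2},
\]
where the last step uses the upper bound $\|w\|\le\sqrt{2}\|E_2\|\|E\|^{-1}$.

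The two $u$-statements follow by duality. The decomposition $E^{-1}=E_1^{-1}\cdot E_2^{-1}$ satisfies $\|E_1^{-1}\|=\|E_1\|\gg\|E_2^{-1}\|=\|E_2\|$, interchanging the roles of the two factors; applying the already-established $s$-identities to this inverted product and using $s(A^{-1})=u(A)$ gives the claimed $u$-statements under the hypothesis $\|E_1\|\gg\|E_2\|$. I expect the main obstacle to be the clean verification of the approximation $\|E_1^{-1}\hat{s}(E_2)\|\approx\|w\|^{-1}$ and the transfer of angular errors through the projective action; both are routine provided one keeps the quantity $\|w\|=\|E_1\hat{s}(E)\|$ as a single parametrizing scale and uses the hypothesis $\|E_2\|\gg\|E_1\|$ to dominate the polynomial-in-$\|E_1\|$ constants that appear along the way.
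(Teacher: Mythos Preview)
Your approach is essentially the same as the paper's: both exploit the characterization that a direction $d$ lies within $C\|A\|^{-2}$ of $s(A)$ iff $\|A\hat d\|\le C\|A\|^{-1}$. The paper carries this out in explicit SVD coordinates (writing $E_1=\mathrm{diag}(e_1,e_1^{-1})$, $E_2=\mathrm{diag}(e_2,e_2^{-1})R_\theta$ and computing $\|Ew\|$, $\|E_2w'\|$ by hand), while you work coordinate-free via the decomposition $w=\alpha\hat s(E_2)+\beta\hat p_2$ and the projective-derivative pullback. Your derivation of the first inequality is clean and matches the paper's bound with universal constant.

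There is one genuine imprecision in your second inequality. The crude estimate $\|w\|\ge\|E_1\|^{-1}$ yields only $|s(E_2)-E_1\cdot s(E)|\lesssim\|E_1\|^2\|E_2\|^{-2}$, and the factor $\|E_1\|^2$ cannot literally be absorbed into a universal $C$. The paper's explicit coordinate computation in fact shows the sharper relation $\|w\|\approx\|E_2\|/\|E\|$ (equivalently $\|E_2w'\|\le Ce_2^{-1}$ with $C$ truly universal), which your single SVD constraint $\alpha^2\|E_2\|^{-2}+\beta^2\|E_2\|^2=\|E\|^{-2}$ does not by itself pin down. You can recover this within your framework by noting that $\hat s(E)$ is within $C\|E\|^{-2}$ of the unit vector along $E_1^{-1}\hat s(E_2)$ (your first inequality), hence $\|w\|=\|E_1\hat s(E)\|\approx\|E_1^{-1}\hat s(E_2)\|^{-1}$, and then using the norm formula $\|E\|\approx\|E_2\|\,\|E_1^{-1}\hat s(E_2)\|$ from Lemma~\ref{arctan}. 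With that correction your argument goes through; the duality reduction for the $u$-statements is fine.
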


\noindent \textbf{The proof of \eqref{cc}:}
\noindent Note that for $A\in \text{SL}(2,\mathbb R)$, the induced map $A:\mathbb {RP}^1\rightarrow \mathbb {RP}^1$ satisfies $\frac{dA}{d\theta}(\theta)=\|A\hat\theta\|^{-2}$, where $\hat\theta\in \theta$ is a unit vector.
By applying lemma \ref{technical_argument} to $A_{r^+_{n+1}-k}(T^kx,t)\cdot A_k(x)\cdot A_{r^{-}_{n+1}}(T^{-r^{-}_{n+1}}x,t)$, one can obtain
$$|g_{n+2}(x,t)-M_1\cdot g_{n+2}(T^{ k} x,t)|<\max\{\|A_{r^+_{n+1}}(x,t)\|^{-\frac32},\|A_{-r^{-}_{n+1}}(x,t)\|^{-\frac32}\},\quad x\in I_{n+1,1},$$
$$|g_{n+2}(x,t)-M_2\cdot g_{n+2}(T^{-k} x,t)|<\max\{\|A_{r^+_{n+1}}(x,t)\|^{-\frac32},\|A_{-r^{-}_{n+1}}(x,t)\|^{-\frac32}\},\quad x\in I_{n+1,2},$$
$$l_k^{-2}<M_j<l_k^{2},~j=1,2,$$
which imply \boxed{\eqref{cc}} for $g_{n+2}$ by (\ref{g_nondegenerate}).


\vskip .5cm

\noindent{\bf (3)\ The case from Case $\textbf{2}_n$ to Case $\textbf{2}_{n+1}$.}


 \vskip .2cm

In this case, we have both Case $\textbf{2}_n$ and Case $\textbf{2}_{n+1}$, which means that there exist $k_{n}$ and $k_{n+1}$ satisfying $0\leq |k_{n}|<q_{N+n-1}$ and $0\leq |k_{n+1}|<q_{N+n}$ such that
$$
\inf\limits_{j\in \Z}|d_{n}-k\alpha-j|<|I_n|,\ \ \inf\limits_{j\in \Z}|d_{n+1}-k\alpha-j|<|I_{n+1}|.
$$

First we assume that $k_{n+1}=k_{n}$. Note that $k_{n+1}$ is the unique integer such that $T^jx\in I_{n+1,1}$ for $0\leq |j|<q_{N+n}$ and $x\in I_{n+1,2}$. Then by the diophantine condition, we have $r^{X}_{n+1} >q_{N+n}^{2000}>k_{n+1}$ and $T^{r^{X}_n}x\notin I_{n+1},~X\in \{+,-\}.$
Without loss of generality, assume $k_{n+1}>0$.
We rewrite $A_{r^+_{n+1}}(x,t)=A_{r^+_{n+1}-r^+_{n}}(T^{r^+_n}x,t)\cdot A_{r^+_{n}}(x,t)$.
Similar to the case from Case $\textbf{1}_n$ to Case $\textbf{2}_{n+1}$, we obtain
$$\|A_{r^+_{n+1}-r^+_{n}}(T^{r^+_n}x,t)\|>\lambda_n^{(1-C\eta_n)(r_{n+1}^+-r^+_n)},$$
$$\Big|\partial_x^{m_1}\partial_t^{m_2}\|A_{r^+_{n+1}-r^+_{n}}(T^{r^+_n}x,t)\|\Big| <\|A_{r^+_{n+1}-r^+_{n}}(T^{r^+_n}x,t)\|^{1+C(m_1+m_2)\eta_n}.$$
We also have the same estimates for $A_{r^+_{n}}(x,t)$ by the inductive assumption.
Note that $T^{r^+_{n}}x\in I_{n}\backslash I_{n+1}$, which implies ${\rm dist} (T^{r^+_{n}}x, C^{(n+1)})\geq cq_{N+n}^{-2000\tau}$.
Then the angle between $A_{r^+_{n+1}-r^+_{n}}(T^{r^+_n}x,t)$ and $A_{r^+_{n}}(x,t)$ is measured by $|g_{n+1}(T^{r^+_n}x,t)|$, which is larger than $cq_{N+n}^{-6000\tau}$.
Then by Lemma \ref{arctan} and Lemma \ref{norm_derivative}, we obtain
$$\|A_{ r_{n+1}^+}(x,t)\|>\lambda_n^{(1-C\eta_n)q_{N+n}^{2000}-3|k|}\geq \lambda_{n+1}^{r_{n+1}^+},$$
$$\Big|\partial_x^{m_1}\partial_t^{m_2}\|A_{ r_{n+1}^{+}}(x,t)\|\Big|<\|A_{ r_{n+1}^+}(x,t)\|^{1+2C(m_1+m_2)\eta_n},\quad m_1+m_2=1, 2.$$
Thus \boxed{\eqref{norm-lower-bound}~\text{and}~\eqref{norm-derivative}} hold true.
By Lemma \ref{nonresonance_phi}, we have
\begin{equation}\label{g-II-II-characteristic}
\|g_{n+1}- g_{n+2}\|_{C^2}\leq l_{k_{n+1}}^{-2+3\eta}<\lambda^{-\frac32r_n},
\end{equation}
which implies \boxed{\eqref{gn-gn+1}}.
Note that all the estimates in {\eqref{maxmax}-\eqref{I-second-derivative}} and {\eqref{cc}-\eqref{lm17-main1} remain true under a small $C^2$ perturbation. Thus we have proved \boxed{\eqref{maxmax}}, \boxed{\eqref{lm17-main}} and \boxed{\eqref{ga110}-\eqref{lm17-main1}} for $(n+1)-$th step.

Now we assume that $k_{n+1}\neq k_{n}$. By the Diophantine condition, we have $0\leq |k_n|<q_{N+n-1}\leq |k_{n+1}|<q_{N+n}$.
This implies $$|d_{n+1}-k_n\alpha|\geq |(k_{n+1}-k_n)\alpha|-|d_{n+1}-k_{n+1}\alpha|>cq^{-\tau}_{N+n}-q^{-1500\tau}_{N+n}\gg |I_{n+1}|.$$
Then it follows that the jump part of $g_{n+1}$ is outside $I_{n+1,2}$. Hence \boxed{(\ref{I-zero-derivative}) - \eqref{I-second-derivative}} hold for $g_{n+1}$ restricting in $I_{n+1,2}$. Then by the discussion of the case from Case $\textbf{1}_n$ to Case $\textbf{2}_{n+1}$, \boxed{\eqref{norm-lower-bound} - \eqref{norm-derivative}}, \boxed{\eqref{maxmax}-\eqref{lm17-main}} and \boxed{\eqref{ga110}-\eqref{range-g-n-lower-bound}} for $(n+1)-$th step. \boxed{\eqref{lm17-main1}} can be obtained by \eqref{g-II-II-characteristic} and the fact $\frac{1}{10}<|\partial_t g_{n+1}|<C\lambda^{5|k|}.$

\vskip 0.4cm
\noindent {\bf (4)\ The case from Case $\textbf{2}_n$ to Case $\textbf{1}_{n+1}$.} This is similar to the case from Case $\textbf{2}_n$ to Case $\textbf{2}_{n+1}$ with $k_{n+1}\neq k_{n}$.

Now, it remains to show \eqref{cn1cn2} and \eqref{Case3lem} of Case $\textbf{3}_{n+1}.$

\noindent \textbf{The proof of \eqref{cn1cn2} and \eqref{Case3lem} of Case $\textbf{3}_{n+1}:$}

\boxed{\eqref{cn1cn2}} follows from \eqref{maxmax} and \eqref{cc}. For \eqref{Case3lem}, one notes $\min\limits_{x\in I_{n+1}}\left\vert g_{n+2}(x,t)(\text{mod}~\pi)\right\vert> \lambda^{-r^{\frac{1}{100}}_{n+1}}.$ Thus by the definition of $I_{j},~j\geq n+1$, we can guarantee that $for~any~j\geq n+1$,
\begin{equation}\label{54}\left\{\begin{array}{ll}&step~j~belongs~to~Case~\textbf{2},\\&~\min\limits_{x\in I_{j}}\left\vert g_{j+1}(x,t)(\rm{mod}~\pi)\right\vert>c\lambda^{-r^{\frac{1}{100}}_{n+1}}.\end{array}\right.\end{equation}

By the diophantine condition,  for each $x\in \R/\Z,$ it holds that $x+m\alpha \in I_{n+1}$ for some $m\leq |I_{n+1}|^{-C}:=M_1.$ Set $M_2=\max\{\max\limits_{x\in I_{n+1}}[r^+_{n+1}(x,t)]^2,\max\limits_{x\in I_{n+1}}[r^-_{n+1}(x,t)]^2\}$. For each $x\in I_{n+1}$ and  each $M\geq \max\{M_2,M_1\},$ let $ 1\leq j_p\leq M,$ $1\leq p\leq m$ be all the times such that
$$j_p-j_{p-1}>|k_n|,~x+j_p\alpha \in I_{n+1},$$ where $j_0=0.$ Then by Proposition \ref{uh} and \eqref{54}, we have
$$\begin{array}{ll}&\| A_{M}(x,t)\|=\|A_{M-j_m}(x+j_m\alpha,t) \prod\limits_{p=1}^m A_{j_p-j_{p-1}}(x+j_{p-1}\alpha,t)\|\\&\geq \left\|A_{M-j_m}(x+j_m\alpha,t)\right\|^{-1}\prod\limits_{p=1}^m \left\|A_{j_p-j_{p-1}}(x+j_{p-1}\alpha,t)\right\|^{\frac{3}{5}}\\&\geq \lambda^{-M^{\frac{1}{2}}}\lambda^{\frac{1}{2}j_m}\geq \lambda^{-M^{\frac{1}{2}}}\lambda^{\frac{1}{2}M}\geq \lambda^{\frac{1}{3}M}.\end{array}$$

Now for any $M\geq \max\{M_2,M_1\}$ and $x\in \R/\Z,$ let $j_1$ be the first time such that $x+j_1\alpha\in I_{n+1},$ then we have
$$\|A_{M}(x,t)\|\geq \|A_{j_1}(x,t)\|^{-1}\|A_{M-j_1}(x+j_1\alpha,t)\|\geq \lambda^{-M}\lambda^{\frac{1}{3}(M-j_1)}\geq \lambda^{\frac{1}{4}M}.$$

Similarly,  for any $M\geq \max\{M_2,M_1\}$ and $x\in \R/\Z,$ we have $\|A_{-M}(x,t)\|\geq \lambda^{\frac{1}{4}M}.$ Therefore, by taking $c=\frac{1}{\lambda^{\frac{1}{4}\max\{M_2,M_1\}}}$ and $\rho=\lambda^{\frac{1}{4}}$,  for each $n\in \Z$ and for each $x\in \R/\Z$, we have
$$\|A_{n}(x,t)\|\geq c\rho^n.$$ Then Proposition \ref{johnson} and Proposition \ref{yocc} complete the proof of \boxed{\eqref{Case3lem}}.

\section*{Appendix A: The proofs of some claims in Section \ref{proofofinduction}}\label{appdb}
\setcounter{equation}{0}
\renewcommand{\theequation}{A.\arabic{equation}}
\setcounter{theorem}{0}
\renewcommand{\thetheorem}{A.\arabic{theorem}}

In this section, we prove some claims in Section \ref{proofofinduction}. In the following proofs, we let $C$, $C_1$ and $C_2$ be positive universal constants.

\noindent{\bf Proof of Lemma \ref{arctan}.}
Note that
$$A_{n_1+n_2}=R_{u_{n_2}}\left(
      \begin{array}{cc}
        l_{n_2} & 0 \\
        0 & l_{n_2}^{-1} \\
      \end{array}
    \right)R_{\theta}\left(
      \begin{array}{cc}
        l_{n_1} & 0 \\
        0 & l_{n_1}^{-1} \\
      \end{array}
    \right)R_{\frac\pi2-s_{n_1}}.
$$
Then it is easy to see
\begin{equation}\label{l-n-1+n-2}
l_{n_1+n_2}^2+l_{n_1+n_2}^{-2}
= l_{n_2}^2l_{n_1}^2\cos^2\theta+l_{n_2}^{-2}l_{n_1}^2\sin^2\theta +l_{n_2}^2l_{n_1}^{-2}\sin^2\theta+l_{n_2}^{-2}l_{n_1}^{-2}\cos^2\theta.
\end{equation}
Then the norm estimate for $l_{n_1+n_2}$ follows.
The explicit formula of $s_{n_1+n_2}(x)$ and $u_{n_1+n_2}(T^{n_1+n_2}x)$ also can be given by a straightforward computation.

\

\noindent{\bf Proof of Lemma \ref{norm_derivative}.}
Assume $|\theta-\frac\pi2|>\max\{l_{n_1}^{-\eta},l_{n_2}^{-\eta}\}$.
Then $|\cos\theta|>\max\{l_{n_1}^{-\eta},l_{n_2}^{-\eta}\}$.
And recall we have
$$\Big|\partial_x^{m_1}\partial_t^{m_2}l_{n_j}\Big|\leq l_{n_j}^{1+(m_1+m_2)\eta},\quad \Big|\partial_x^{m_1}\partial_t^{m_2}\theta\Big|\leq l_{n_j}^{(m_1+m_2)\eta}, \quad m_1+m_2=1,2,\ j=1,2.$$
Take the derivative on both sides of (\ref{l-n-1+n-2}) and reserve the main term. Then a direct computation shows that
$$\Big|\partial_x^{m_1}\partial_t^{m_2}l_{n_1+n_2}\Big|\leq C \Big|\partial_x^{m_1}\partial_t^{m_2}(l_{n_2}l_{n_1}\cos\theta)\Big|\leq  l_{n_1+n_2}^{1+(m_1+m_2)\eta}, \quad m_1+m_2=1,2.$$
Similarly, if $l_{n_1}^\mu\geq l_{n_2}$ (or $l_{n_2}^\mu\geq l_{n_1}$), then for $m_1+m_2=1,2$, it holds that
$$\Big|\partial_x^{m_1}\partial_t^{m_2}l_{n_1+n_2}\Big|\leq C\Big|\partial_x^{m_1}\partial_t^{m_2} \sqrt{l_{n_2}^2l_{n_1}^2\cos^2\theta+l_{n_2}^{-2}l_{n_1}^2\sin^2\theta}\Big|\leq l_{n_1+n_2}^{1+(m_1+m_2)(1+2\mu)\eta}.$$

\

\noindent{\bf Proof of Lemma \ref{nonresonance_phi}.}
Recall
$$\tilde\phi=-\frac12{\rm arccot\ }f(l_{n_1}, l_{n_2}, \theta),$$
$$f(l_{n_1}, l_{n_2}, \theta)
=-\frac{l_{n_1}^2-l_{n_1}^{-2}l_{n_2}^{-4}}{2(1-l_{n_2}^{-4})}\cot\theta -\frac{l_{n_1}^2l_{n_2}^{-4}-l_{n_1}^{-2}}{2(1-l_{n_2}^{-4})}\tan\theta.$$
And
$$\partial_x\tilde \phi=\frac{ \partial_xf}{2(1+f^2)},\quad \partial_t\tilde \phi=\frac{ \partial_tf}{2(1+f^2)},$$
\begin{equation}\label{2jieddd}\partial_x^{m_1}\partial_t^{m_2}\tilde \phi =\frac{-f\partial_x^{m_1}f\partial_t^{m_2}f}{(1+f^2)^2} +\frac{\partial_x^{m_1}\partial_t^{m_2}f}{2(1+f^2)},\quad m_1+m_2=2.\end{equation}

Assume $|\theta-\frac\pi2|>\max\{l_{n_1}^{-\eta},l_{n_2}^{-\eta}\}$.
Then $|\cot\theta|>\max\{l_{n_1}^{-\eta},l_{n_2}^{-\eta}\}$.
And hence $|f(l_{n_1}, l_{n_2}, \theta)|\geq C l_{n_1}^{2-\eta}$, which leads to  $|\tilde \phi|\leq C l_{n_1}^{-2+\eta}$.
Moreover, a direct computation gives
$$|\partial_x^{m_1}\partial_t^{m_2} f|\leq |f|^{1+(m_1+m_2)\eta},\quad m_1+m_2=1,2$$
and then $$|\partial_x^{m_1}\partial_t^{m_2}\tilde \phi|\leq Cl_{n_1}^{-2+(m_1+m_2+1)\eta},\quad m=0,1,2.$$

Now we assume $l_{n_1}^\mu\geq l_{n_2}$.
Since $l_{n_1}\gg l_{n_2}$, it is not hard to see
$$\partial_x^{m_1}\partial_t^{m_2}f\approx \partial_x^{m_1}\partial_t^{m_2}(l_{n_1}^2\cot\theta) +\partial_x^{m_1}\partial_t^{m_2}(l_{n_1}^2l_{n_2}^{-4}\tan\theta),\quad m_1+m_2=0,1,2.$$
Moreover, for $m_1+m_2=0$, it holds that
$$|f|\geq C (l_{n_1}^2\cot\theta+l_{n_1}^2l_{n_2}^{-4}\tan\theta)\geq Cl_{n_1}^2l_{n_2}^{-2}\geq C l_{n_1}^{2-2\mu}.$$
Hence by the assumptions on the estimates of $\partial_x^{m_1}\partial_t^{m_2}l_{n_1}$, $\partial_x^{m_1}\partial_t^{m_2}l_{n_2}$ and $\partial_x^{m_1}\partial_t^{m_2}\theta$, we have
$$|\tilde \phi|=\Big|\frac12{\rm arccot\ } f\Big|\leq Cl_{n_1}^{-2+2\mu},$$
$$|\partial_x\tilde \phi|=\Big|\frac{ \partial_xf}{2(1+f^2)}\Big|\leq Cl_{n_1}^{-2+(1+2\mu)\eta},\quad |\partial_t\tilde \phi|=\Big|\frac{ \partial_tf}{2(1+f^2)}\Big|\leq Cl_{n_1}^{-2+(1+2\mu)\eta},$$
$$|\partial_x^{m_1}\partial_t^{m_2}\tilde \phi|\leq \Big|\frac{f\partial_x^{m_1}f\partial_t^{m_2}f}{(1+f^2)^2} \Big| +\Big|\frac{\partial_x^{m_1}\partial_t^{m_2}f}{2(1+f^2)}\Big|\leq Cl_{n_1}^{-2+2(1+2\mu)\eta},\quad m_1+m_2=2.$$

\

\noindent{\bf Proof of Lemma \ref{shape-phi-n+1}.}
Assume $l_{n_2}^\mu\geq l_{n_1}$.
Since $l_{n_2}\gg l_{n_1}$, it is easy to see
\begin{equation}\label{partial_f}
{\partial^m_X f}\approx {-\frac12\partial^m_X(l_{n_1}^2\cot\theta-l_{n_1}^{-2}\tan\theta)},\quad m=0,1,2,~X\in \{x,t\}.
\end{equation}
If $\theta$ varies from $0$ to $\frac\pi2$, then $f$ varies from $-\infty$ to $+\infty$, and then we can choose one of single-valued branches of ${\rm arccot}$ such that $\tilde \phi$ varies from $0$ to $\frac\pi2$.
If $\theta$ varies from $\frac\pi2$ to $\pi$, then $f$ varies from $-\infty$ to $+\infty$, and then we can choose one of single-valued branches of ${\rm arccot}$ such that $\tilde \phi$ varies from $\frac\pi2$ to $\pi$.
Hence, it is easy to see that if $\theta$ changes from $\frac\pi2-l_{n_1}^{-\eta}$ to $\frac\pi2+l_{n_1}^{-\eta}$, then $\tilde \phi$ increases from $l_{n_1}^{-2+\eta}$ to $\pi-l_{n_1}^{-2+\eta}$. Thus we have proved \boxed{(i)}.

If $|\frac{\pi}{2}-\theta|\geq l_{n_1}^{-1+\eta}$, then $|\cot\theta|>l_{n_1}^{-\eta}$ and $|\tan\theta|<l_{n_1}^{-\eta}$, which lead to
$$|f|\geq Cl_{n_1}^{2}|\cot\theta|\geq Cl_{n_1}^{2-\eta}.$$
Then similar to the case $l_{n_1}^\mu\geq l_{n_2}$ in the proof of Lemma \ref{nonresonance_phi}, we have
$$|\partial^m_x\tilde\phi(x,t)|+|\partial^m_t\tilde\phi(x,t)|+|\partial^2_{xt}\tilde\phi(x,t)|\leq  l_{n_1}^{-2+2m\eta}\cdot|\cot\theta|^{-(m+1)},\quad m=0,1,2.$$
Hence we have proved \boxed{(ii)} and \boxed{(vi)} under the condition $|\frac{\pi}{2}-\theta|\geq l_{n_1}^{-1+\eta}$.

For (iii), we assume $l_{n_1}^{-1-\eta}<|\frac{\pi}{2}-\theta|< l_{n_1}^{-1+\eta}$.
By omitting all the small terms of $\partial^2_x\tilde\phi(x,t)$, we have
$$\partial^2_x\tilde\phi(x,t)\approx\frac{l_{n_1}^6\cdot \cot \theta}{1+l_{n_1}^8\cdot\cot^4 \theta}\cdot \left(\partial_x\theta\right)^2.$$
Then the assumption on $\theta$ gives $l_{n_1}^{-1-\eta}<|\cot\theta|< l_{n_1}^{-1+\eta}$. And note that $l_{n_1}^{\eta}\geq |\partial_x\theta|\geq q_n^{-600\tau}\geq l_{n_1}^{-\eta}$. In the last inequality, we have used $l_{n_1}^\eta\geq \lambda^{\eta r_{n}}\geq \lambda^{10^{-3}\varepsilon\eta q_{n}}\geq q_n^{-600\tau}$. Combining all these estimates, we obtain
$$l_{n_1}^7\geq |\partial^2_X\tilde\phi(x,t)|\geq l_{n_1}^{1-7\eta}\gg 1,~X\in\{x,t\},$$ which yields \boxed{(iii)} and \boxed{(vi)} under the condition $l_{n_1}^{-1-\eta}<|\frac{\pi}{2}-\theta|< l_{n_1}^{-1+\eta}$.

For (iv), we assume $|\frac{\pi}{2}-\theta|\leq l_{n_1}^{-1-\eta}$. Then $|\cot\theta|\leq l_{n_1}^{-1-\eta}$, and hence $|\tan\theta|\geq l_{n_1}^{1+\eta}$.
Also note that we have $l_{n_1}^{\eta}\geq |\partial_x\theta|\geq l_{n_1}^{-\eta}$.
By omitting the small terms of $\partial_xf$, we have
$$|\partial_xf|\geq C(l_{n_1}^2+l_{n_1}^{-2}\tan^2\theta)|\partial_x\theta|.$$
Recall (\ref{partial_f}). If $c\leq l_{n_1}^2|\cot\theta|\leq C$, then $|\cot\theta|\approx l_{n_1}^{-2}$ and $|f|\leq C$, which leads to
$$|\partial_X\tilde\phi(x,t)|\geq \frac{|\partial_xf|}{1+C^2}\geq l_{n_1}^{2-\eta},~X\in\{x,t\}.$$
If $l_{n_1}^2|\cot\theta| \leq c$, then $|f|\approx \frac{1}{2}l_{n_1}^{-2}|\tan\theta|$. And hence
$$|\partial_X\tilde\phi(x,t)|\geq \frac{Cl_{n_1}^{-2}\tan^2\theta\cdot|\partial_x\theta|}{1+Cl_{n_1}^{-4}\tan^2\theta} \geq l_{n_1}^{2-\eta},~X\in\{x,t\}.$$
If $l_{n_1}^2|\cot\theta| \geq C$, then $|f|\approx \frac{1}{2}l_{n_1}^{2}|\cot\theta|$. Then we have
$$|\partial_X\tilde\phi(x,t)|\geq \frac{Cl_{n_1}^{2}\cdot|\partial_x\theta|}{1+Cl_{n_1}^{4}\cot^2\theta} \geq l_{n_1}^{-2-\eta}\tan^2\theta\geq l_{n_1}^\eta\gg 1,~X\in \{x,t\}.$$
Combining the above three inequalities together, we have proved \boxed{(iv)}.

For (v), we assume $|\frac{\pi}{2}-\theta|\leq  l_{n_1}^{-2-\eta}$.
It is easy to see $c\leq \frac{|f|}{l_{n_1}^{-2}|\tan\theta|}\leq C$ and then $|f|\geq l_{n_1}^{\eta}\gg 1$. And hence
$$
|\partial_X\tilde\phi(x,t)|\geq \frac{Cl_{n_1}^{-2}\tan^2\theta\cdot|\partial_x\theta|}{1+Cl_{n_1}^{-4}\tan^2\theta}\geq l_{n_1}^{2-\eta},~X\in\{x,t\}.
$$
At this time, we have
$$|\tilde\phi(x,t)-\frac\pi2|<C|f|^{-1}\leq Cl_{n_1}^{-\eta}.$$ Then we obtain \boxed{(v)}.

For (vi), it remains to prove the case $|\frac{\pi}{2}-\theta|\leq l_{n_1}^{-1-\eta},$ which implies $|\cot\theta|\leq l_{n_1}^{-1-\eta}$ and hence $|\tan\theta|\geq l_{n_1}^{1+\eta}$. Also note that we have $l_{n_1}^{\eta}\geq |\partial_x\theta|\geq l_{n_1}^{-\eta}$. We set $l_{n_1}^2\cot\theta=U.$ Clearly,
\begin{equation}\label{11223}|U|\leq l_{n_1}^{\frac{10}{9}}.\end{equation}
Note that for $X,Y \in \{x,t\}$ it holds that $$|\partial_X U|=|\partial_X (l_{n_1}^2\cot\theta)|=2l_{n_1}(\partial_X l_{n_1})\cdot \cot\theta-l_{n_1}^2\cdot (1+\cot^2\theta)(\partial_X \theta)|\leq Cl_{n_1}^3$$ and $$\begin{array}{ll}|\partial_Y\partial_X U|&=|\partial_Y\partial_X (l_{n_1}^2\cot\theta)|\\&=\left|2(\partial_Y l_{n_1})(\partial_X l_{n_1})\cdot \cot\theta+2l_{n_1}(\partial_Y\partial_X l_{n_1})\cdot \cot\theta+\right.\\& 2l_{n_1}(\partial_X l_{n_1})\cdot (-1-\cot^2\theta)(\partial_Y\theta)-2l_{n_1}(\partial_Y l_{n_1})\cdot (1+\cot^2\theta)(\partial_X \theta)\\&\left.-l_{n_1}^2\cdot (2\cot\theta)(-1-\cot^2 \theta)(\partial_X \theta \partial_Y \theta)+l_{n_1}^2\cdot (1+\cot^2\theta)(\partial_Y\partial_X \theta)\right|\\&\leq Cl_{n_1}^3.\end{array}$$ Then
\begin{equation}\label{lem12-1}|\partial _X f|= |-\frac{1}{2}(1+\frac{1}{U^2})\partial_X U|\leq Cl_{n_1}^3(1+\frac{1}{U^2})\end{equation} and
\begin{equation}\label{lem12-2}|\partial_Y\partial _X f|= \big|\frac{1}{U^3}(\partial_X U)(\partial_Y U)-\frac{1}{2}(1+\frac{1}{U^2})\partial_Y\partial_X U\big|\leq C l_{n_1}^6(\frac{1}{U^2}+\left\vert\frac{1}{U^3}\right\vert).\end{equation}
Recall \eqref{2jieddd}:
$$|\partial_X\partial_Y\tilde \phi|= \Big|\frac{-f\partial_Xf\partial_Yf}{(1+f^2)^2} +\frac{\partial_X\partial_Yf}{2(1+f^2)}\Big|.$$

Now we  consider separately the following 3 cases:

\begin{enumerate}
\item If $c\leq |U|\leq C,$ then $|f|\leq C_1.$ Then \eqref{2jieddd}, \eqref{lem12-1} and \eqref{lem12-2} imply that $|\partial_X\partial_Y\tilde \phi|\leq C_1 l_{n_1}^6.$
\item If $|U| \leq c,$ then $f\approx \frac{1}{2}U^{-1}>1.$ Then the equalities of \eqref{2jieddd} and \eqref{lem12-1} implies $\partial _X f$ is dominated by $-\frac{1}{2}\frac{\partial_X}{U^2}$ and $\partial _Y\partial _X f$ by $\frac{\partial _Y U\partial _X U}{U^3}-\frac{1}{2}\frac{\partial_X\partial_Y U}{U^2}.$

Then \eqref{lem12-2} imply that
$$\begin{array}{ll}|\partial_X\partial_Y\tilde \phi|= \Big|\frac{-2f\partial_Xf \partial_Yf+\partial_X\partial_Yf(1+f^2)}{2(1+f^2)^2} \Big|&\leq C\frac{U^{-4}}{(1+U^{-2})^2}l_{n_1}^6\leq Cl_{n_1}^6.
\end{array}$$
\item If $|U| \geq C,$ then $1\leq |f|\approx |\frac{1}{2}U|\leq Cl_{n_1}^{\frac{10}{9}}$ (by \eqref{11223}). Then \eqref{2jieddd}, \eqref{lem12-1} and \eqref{lem12-2} directly imply $|\partial_X\partial_Y\tilde \phi|\leq C l_{n_1}^7.$
\end{enumerate}
Then we finish the proof of \boxed{(vi)}.

\noindent{\bf Proof of (\ref{five-piece}).}
Recall we have defined
$$g_{n+2}(x,t)=\phi(x,t)+g_{n+1,2}(x,t),$$
with $\phi(x,t)=-\frac12{\rm arccot}f(\|A_{r_{n+1}^+-k}(T^kx,t)\|,\|A_{k}(x,t)\|,\frac\pi2-g_{n+1,1}(T^kx,t))$, $g_{n+1,1}=s_{r_{n+1}^+-k}-u_k$, $g_{n+1,2}=s_k-u_{r_{n+1}^{-}}$ and $\|g_{n+1,j}-g_{n,j}\|_{C^2}\leq \lambda^{-\frac32r_{n}}$.
Then $g_{n+1,1}(T^kx,t)$ satisfies (\ref{I-zero-derivative})-(\ref{I-second-derivative}). Recall $d_{n+1}(k)=c_{n+1,1}+k\alpha-c_{n+1,2}$.
And the zeros of $g_{n+1,1}(T^kx,t)$ and $g_{n+1,2}(x,t)$ approximate to $c_{n+1,1}+k\alpha=c_{n+1,2}+d_n$ and $c_{n+1,2}$ with an error of $O(\lambda^{-\frac32r_{n}})$, respectively.
Hence we can apply Lemma \ref{shape-phi-n+1} to $g_{n+2}$ and then (\ref{five-piece}) follows.

\

\noindent{\bf Proof of (\ref{c-n+2-c-n+1}) and (\ref{g_nmin}).}
We have (\ref{defgn+2}) and (\ref{partial_f}). Note that $g_{n+1,1}(T^kx,t)$ satisfies (\ref{I-zero-derivative})-(\ref{I-second-derivative}).
Then to solve $g_{n+2}(x,t)=0$, we can rewrite $g_{n+2}$ as
\begin{equation}\label{gn+21}g_{n+2}=-a_1l_{k}^{-2}(a_2(x+k\alpha-e_1))^{-1}+a_3(e_2-x),\end{equation}
where $q_{n}^{-3}\leq a_1,a_2,a_3\leq q_{n}^{3}$, $l_{k}\geq \lambda^{r_{n}}$, and $e_j$ ($j=1,2$) denotes the zero of $g_{n+1,j}$ satisfying $|e_j-c_{n+1,j}|<\lambda^{-\frac32 r_{n}}$.
Then we may have either no solution or two solutions as follows
$$c_{n+2,2},c'_{n+2,2}=e_1-k\alpha+\frac12(e_2-e_1+k\alpha\pm \sqrt{(e_2-e_1+k\alpha)^2-4a_1a_2^{-1}a_3^{-1}l_k^{-2}}).$$
For the no solution case, $g_{n+1}$ reaches its extremum at $$c_{n+2,2}=c'_{n+2,2}=e_2-a_1^\frac12a_2^{-\frac12}a_3^{-\frac12}l_k^{-1} =e_1-k\alpha+a_1^\frac12a_2^{-\frac12}a_3^{-\frac12}l_k^{-1}.$$
In either case, one can easily verify $$|c_{n+2,2}-c_{n+1,2}|<l_k^{-\frac34}\leq \lambda^{-\frac34 r_n},\quad |c'_{n+2,2}-c_{n+1,1}+k\alpha|<l_k^{-\frac34}\leq \lambda^{-\frac34 r_n}.$$
Similarly, we can also obtain $|c_{n+2,1}-c_{n+1,1}|< \lambda^{-\frac34 r_n}$.

Moreover, one notes that if $|e_1-e_2-k\alpha|=0,$ then \eqref{gn+21} implies
$$\min\limits_{x\in I_{n+1}}|g_{n+2}(x,t)(\rm{mod}~\pi)|\geq 2\sqrt{a_1a_3}l_k^{-1},$$ which implies \eqref{g_nmin}.
\

\noindent{\bf Proof of (\ref{range-of-distance}).}
From (\ref{five-piece}), We can find $b_{n+2,1}$ and $b_{n+2,2}$ such that $\partial_xg_{n+2}(b_{n+2,j},t)=0$.
As in the proof of Lemma \ref{shape-phi-n+1}, by omitting the small terms, $\partial_x f$ can be rewritten as
$$\partial_x f\approx \frac12(l_{k}^2+l_k^{-2}\tan^2\theta')\partial_x\theta',$$
with $l_k=\|A_k(x,t)\|$ and $\theta'=\frac\pi2 -g_{n+1,1}(T^kx,t)$.
Then $\partial_xg_{n+2}(x,t)=0$ is equivalent to
\begin{equation}\label{pag0}
\frac{(l_{k}^2+l_k^{-2}\tan^2\theta')\partial_x\theta'} {2+2(l_{k}^2\cot\theta'+l_{k}^{-2}\tan\theta')^2}=-\partial_xg_{n+1,2}.
\end{equation}
Note that $q_{n+1}^{-4}\leq |\frac{\partial_xg_{n+1,2}}{\partial_x\theta'}|\leq q_{n+1}^{4}$. By the same discussion on the comparison between $l_k^2$ and $l_k^{-2}\tan\theta'$, we have $s_1l_k^{-1}<|\cot\theta'| <s_2l_k^{-1}$ with $q_{n+1}^{-4}<s_1,s_2<q_{n+1}^{4}$.
Hence it holds that
$$C_1s_1l_k^{-1}<|b_{n+2,j}-c_{n+1,2}-d_{n+1}|<C_2s_1l_k^{-1}.$$
Let $b_{n+2,1}$ be the one with $b_{n+2,j}-c_{n+1,2}-d_{n+1}>0$ and $b_{n+2,2}$ be the one with $b_{n+2,j}-c_{n+1,2}-d_{n+1}<0$.
Then by Lemma \ref{shape-phi-n+1},
$$g_{n+2}(b_{n+2,1})=g_{n+1,2}(b_{n+2,1})+Cs_1^{-1}l_{k}^{-1},\quad g_{n+2}(b_{n+2,2})=g_{n+1,2}(b_{n+2,2})+\pi-Cs_1^{-1}l_{k}^{-1}.$$
Note $g_{n+1,2}(b_{n+2,2})-g_{n+1,2}(b_{n+2,1})\leq Cs_2(b_{n+2,2}-b_{n+2,1})\leq Cs_2s_1l_k^{-1}$.
Hence
$$ \pi -C_1s_1l_k^{-1} \leq g_{n+2}(b_{n+2,2})-g_{n+2}(b_{n+2,1})\leq \pi -C_2s_1l_k^{-1}.$$

\

\noindent{\bf Proof of (\ref{2ndderivative}).}
We assume $|\partial_x g_{n+2}|\leq 100^{-1}q_{n+1}^{-2}$.
Similar to (\ref{pag0}), by omitting the small terms, we have
$$q_{n+1}^{-6}\leq \frac{l_{k}^2+l_k^{-2}\tan^2\theta'} {2+2(l_{k}^2\cot\theta'+l_{k}^{-2}\tan\theta')^2}\leq q_{n+1}^6,$$
which implies $q_{n+1}^{-6}l_k^{-1}<|\cot\theta'|<q_{n+1}^6l_{k}^{-1}$.
Then by (iii) of Lemma \ref{shape-phi-n+1}, we have $|\partial^2_x g_{n+2}|\geq l_{k}^{1-10\mu}.$

\

\noindent{\bf Proof of (\ref{g-derivative-t}).}
Recall
$g_{n+2}(x,t)=\phi(x,t)+g_{n+1,2}(x,t),$
with

$\phi(x,t)=-\frac12{\rm arccot}f(\|A_{r_{n+1}^+-k}(T^kx,t)\|$, $\|A_{k}(x,t)\|,\frac\pi2-g_{n+1,1}(T^kx,t))$, $g_{n+1,1}=s_{r_{n+1}^+-k}-u_k$, $g_{n+1,2}=s_k-u_{r_{n+1}^{-}}$ and $\|g_{n+1,j}-g_{n+1}\|_{C^2}\leq \lambda^{-\frac32r_{n}}$.

It is not hard to see that $\partial_t\phi(x,t)$ and $\partial_t g_{n+1,2}(x,t)$ have the same positive sign. Hence we have
$$\partial_t g_{n+2}(x,t)\geq \partial_t\phi(x,t)+\partial_t g_{n+1,2}(x,t)>\partial_t g_{n+1,2}(x,t).$$ Thus, during the process from some Case to Case $\textbf{2}$, \( \partial_t g_{\cdot}\) increases, which implies the only possible loss from the first step occurs in the transition from Case ${\bf X}_p$  to Case ${\bf 1}_{p+1}$. By the fact $|g_{p}-g_{p+1}|_{C^2}\leq \lambda^{-r_{p-1}},$ we obtain $$\partial_t g_{n+2}(x,t)>\partial_t g_{0,2}(x,t)-\sum\limits_{p=1}^{\infty}\lambda^{-r_{p-1}}>1-\frac{1}{10}>\frac{1}{10}.$$ Then we obtain the lower bound. For the upper bound, one can obtain it in a similar way as to obtain \eqref{five-piece}.

\

\noindent{\bf Proof of (\ref{g_nondegenerate}).}
Note that at this case, there exist $b_{n+2,1}\neq b_{n+2,2}$ such that $\partial_xg_{n+2}(b_{n+2,j})=0$.
Without loss of generality, we only consider the case with $x\in I_{n+1,2}$ and  $b_{n+2,1}<c'_{n+2,2}<b_{n+2,2}<c_{n+2,2}$. Other cases can be dealt with similarly.

Recall $X=\{c_{n+2,2},c'_{n+2,2}\}$.
Let $r={\rm dist}(x,X)$. 
We divide $B(X,r)$ into three parts: $[c'_{n+2,2}-r,b_{n+2,1}]\cup [b_{n+2,1},c'_{n+2,2}]\cup [c'_{n+2,2},c_{n+2,2}+r]$.
If $x\in (c'_{n+2,2}-r,b_{n+2,1}]$, then $$g_{n+2}(x)=(g_{n+2}(x)-g_{n+2}(b_{n+2,1}))+g_{n+2}(b_{n+2,1}).$$
Note that $g_{n+2}(b_{n+2,1})=g_{n+1,2}(b_{n+2,1})+Cs_1^{-1}l_{k}^{-1}$ and $|b_{n+2,1}-c'_{n+2,2}|\leq |b_{n+2,1}-b_{n+2,2}|<Cs_1l_{k}^{-1}$.
And then
\begin{equation}\label{lower1}|g_{n+2}(b_{n+2,1})|\geq |g_{n+1,2}(b_{n+2,1})|+Cs_1^{-1}l_{k}^{-1}\geq C|b_{n+2,1}-c'_{n+2,2}|^3.\end{equation}
On the other hand, from the proof of (\ref{range-of-distance}),  if $|\partial_xg_{n+2}|<q_{n+1}^{-2}$, then we have $s_4l_k^{-1}<|\cot\theta|<s_5l_k^{-1}$ with $q_{n+1}^{-4}<s_4,s_5<q_{n+1}^4$, which implies $|\partial_x^2g_{n+2}|\geq l_{k}^{1-7\eta}$ due to Lemma \ref{shape-phi-n+1} (iii).
Hence for $x\in (c'_{n+2,2}-r,b_{n+2,1}]$, there exist $d'>0$ and $d''>0$ such that $|x-b_{n+2,1}|=d'+d''$ and
$$|g_{n+2}(x)|\geq q_{n+1}^{-2}d'+Cd''^2\geq C|x-b_{n+2,1}|^3.$$
And then
$$|g_{n+2}(x)|\geq C|x-c'_{n+2,2}|^3.$$

If $x\in [c'_{n+2,2},c_{n+2,2}+r]$, then similarly as above there exist $d'>0$ and $d''>0$ such that ${\rm dist} (x,X)=d'+d''$ and
$$|g_{n+2}(x)|\geq q_{n+1}^{-2}d'+Cd''^2\geq C{\rm dist} (x,X)^3.$$

If $x\in [b_{n+2,1},c'_{n+2,2}]$, then $g_{n+2}$ increases very fast from near $0$ to near $\pi$. Moreover, we have (\ref{lower1}). Then the result follows.

\

\noindent{\bf Proof of Lemma \ref{technical_argument}.}
By singular value decomposition, it suffices to consider the case with
$$E_2=\left(
        \begin{array}{cc}
          e_2 & 0 \\
          0 & e_2^{-1} \\
        \end{array}
      \right)R_\theta,\quad E_1=\left(
        \begin{array}{cc}
          e_1 & 0 \\
          0 & e_1^{-1} \\
        \end{array}
      \right).
$$
Then $s(E_2)=\frac\pi2-\theta$ and $\tan(E_1^{-1}\cdot s(E_2))=e_1^2\cot\theta$.
Assume $\theta\neq 0$, otherwise it is trivial.
Let $w\in E_1^{-1}\cdot s(E_2)$ be a unit vector.
Then we have
$$
\|Ew\|=(1+e_1^4\cot^2\theta)^{-\frac12}\|E\left(
                                            \begin{array}{c}
                                              1 \\
                                              e_1^2\cot\theta \\
                                            \end{array}
                                          \right)\|
                                          =(e_1^{-2}e_2^2\sin^2\theta+e_1^2e_2^2\cot^2\theta)^{-\frac12}=C\|E\|^{-1}.
$$
Then we have $$|E_1^{-1}\cdot s(E_2)-s(E)|<C\|E\|^{-2}.$$

Now we let $w'\in E_1\cdot s(E)$ be a unit vector.
Then we have
$$\|E_2w'\|=\frac1{\|E_1\cdot s(E)\|}\|E_2E_1\cdot s(E)\|=\frac1{\|E\|\cdot\|E_1\cdot s(E)\|}.$$
 Let $w''\in s(E)$ be a unit vector.
Since $\|E\|\gg \|E_1\|$, it holds that
$$
\|E_2w'\|=\frac{C}{\|E\|\cdot\|E_1w''\|}=\frac{C(1+e_1^4\cot^2\theta)^{-\frac12}}{\|E\|}\cdot\|E_1\left(
                                            \begin{array}{c}
                                              1 \\
                                              e_1^2\cot\theta \\
                                            \end{array}
                                          \right)\|^{-1}\le Ce_2^{-1}=C\|E_2\|^{-1},
$$
which implies the lemma.

\

\noindent\bf{\footnotesize Acknowledgements}\quad\rm
{\footnotesize L. Ge was partially supported by NSF of China (Grants 12371185) and the Fundamental Research Funds for the Central Universities (the start-up fund), Peking University. Y. Wang and J. Xu were supported by the NSF of China (Grants 12271245).}\\[4mm]


\begin{thebibliography}{160}
\addcontentsline{toc}{chapter}{Bibliography}

\bibitem[A1]{avila0} A. Avila. Global theory of one-frequency Schr\"odinger operators. {\it Acta Math.} {\bf 215(1)} (2015), 1-54.
\bibitem[A2]{avila1} A. Avila. Almost reducibility and absolute continuity. arXiv:1006.0704.
\bibitem[A3]{avila2} A. Avila. KAM, Lyapunov exponent, and the spectral dichotomy for one-frequency Schr\"odinger operators. arXiv:2307.11071.

\bibitem[AD]{aviladamanik} A. Avila and D. Damanik. Absolute continuity of the integrated density of states for the almost Mathieu operator with non-critical coupling. {\it Invent. Math.} {\bf 172(2)} (2008), 439-453.

\bibitem[AJ1]{AJ1} A. Avila and S. Jitomirskaya. The Ten Martini Problem. {\it Ann. of Math.} {\bf 170} (2009), 303-342.

\bibitem[AJ2]{AJ2} A. Avila and S. Jitomirskaya. Almost localization and almost reducibility. {\it J. Eur. Math. Soc.} {\bf 12} (2010), 93-131.

\bibitem[AK]{ak} A. Avila and R. Krikorian. Reducibility or non-uniform hyperbolicity for quasiperiodic Schr\"odinger cocycles. {\it Ann. of Math.} {\bf 164} (2006), 911-940.

\bibitem[A]{amor} S. Hadj Amor. H\"older continuity of the rotation number for quasi-periodic co-cycles in $SL(2,\mathbb{R})$. {\it Commun. Math. Phys.} {\bf 287(2)} (2009), 565-588.

\bibitem[AYZ]{AYZ} A. Avila, J. You and Q. Zhou. Dry Ten Martini Problem in the non-critical case. arXiv:2306.16254.

\bibitem[AS]{as} J. Avron and B. Simon. Almost periodic operators, II. The density of states. {\it Duke Math. J.} {\bf 50} (1983), 369-391.

\bibitem[BBL]{BBL} R. Band, S. Beckus and R. Loewy.
The Dry Ten Martini Problem for Sturmian Hamiltonians. arXiv:2402.16703.

\bibitem[BS]{bs} J. Bellissard and B. Simon. Cantor spectrum for the almost Mathieu equation. {\it J. Funct. Anal.} {\bf 48} (1982), 408-419.

\bibitem[CEY]{cey} M. Choi, G. Elliott and N. Yui. Gauss polynomials and the rotation algebra. {\it Invent. Math.} {\bf 99} (1990), 225-246.

\bibitem[D]{damanik} D. Damanik. Schr\"odinger operators with dynamically defined potentials. {\it Ergod. Theor. Dyn. Syst.} {\bf 37(6)} (2017), 1681-1764.

\bibitem[DG]{18} D. Damanik and M. Goldstein. On the inverse spectral problem for the quasi-periodic Schr\"odinger equation. {\it Publ. Math. Inst. Hautes Etudes Sci.} {\bf 119} (2014), 217-401.

\bibitem[DGL]{20} D. Damanik, M. Goldstein and M. Lukic. The spectrum of a Schr\"odinger operator with small quasi-periodic potential is homogeneous. {\it J. Spec. Theory} {\bf 6} (2016), 415-427.

\bibitem[DGSV]{22} D. Damanik, M. Goldstein, W. Schlag and M. Voda. Homogeneity of the spectrum for quasi-periodic Schr\"odinger operators. {\it J. Eur. Math. Soc.} {\bf 20}(2018), 3073-3111.

\bibitem[DGY]{DGY} D. Damanik,  A. Gorodetski and W. Yessen.
The Fibonacci Hamiltonian.
{\it Invent. Math.} {\bf 206}(3)(2016), 629-692.



\bibitem[DS]{DS} F. Delyon and B. Souillard. The rotation number for finite difference operators and its properties. {\it Comm. Math. Phys.} {\bf 89} (1983), 415-426.

\bibitem[E]{e} L. Eliasson. Floquet solutions for the 1-dimensional quasi-periodic Schr\"odinger equation. {\it Commun. Math. Phys.} {\bf 146} (1992), 447-482.

\bibitem[G]{g} L. Ge. On the almost reducibility conjecture. {\it Geometric and Functional Analysis} {\bf 4}(2024), 32-59.

\bibitem[GJ]{gj} L. Ge and S. Jitomirskaya. Hidden subcriticality, symplectic structure, and universality of sharp arithmetic spectral results for type I operators. arXiv:2407.08866.

\bibitem[GJY]{gjy} L. Ge, S. Jitomirskaya and J. You. Kotani theory, Puig's argument, and stability of the Ten Martini Problem. arXiv:2308.09321.


\bibitem[GJYZ]{gjyz} L. Ge, S. Jitomirskaya, J. You and Q. Zhou. Multiplicative Jensen's formula and quantitative global theory of one-frequency Schr\"odinger operators. arXiv:2306.16387.



\bibitem[GJZ]{gjzh} L. Ge, S. Jitomirskaya and X. Zhao. Stability of the Non-Critical Spectral Properties I: Arithmetic Absolute Continuity of the Integrated Density of States. {\it Comm. Math. Phys.} {\bf 402(1)} (2023), 213-229.

\bibitem[GYZ]{gyzh} L. Ge, J. You and X. Zhao. H\"older regularity of the integrated density of states for quasi-periodic long-range operators on $\ell^2(\mathbb{Z}^d)$. {\it Comm. Math. Phys.} {\bf 392(2)} (2022), 347-376.


    \bibitem[GYZQ]{gyzq} L. Ge, J. You and Q. Zhou. Structured quantitative almost reducibility and its applications. arXiv:2407.05490.



\bibitem[GY]{27} F. Gesztesy and P. Yuditskii. Spectral properties of a class of reflectionless Schr\"odinger operators. {\it J. Funct. Anal.} {\bf 241} (2006), 486-527.



\bibitem[GS1]{gs1} M. Goldstein and W. Schlag. H\"older continuity of the integrated density of states for quasi-periodic Schr\"odinger equations and averages of shifts of subharmonic functions. {\it Ann. of Math.} {\bf 154(1)} (2001), 155-203.

\bibitem[GS2]{gs2} M. Goldstein and W. Schlag. Fine properties of the integrated density of states and a quantitative separation property of the Dirichlet eigenvalues. {\it Geom. Funct. Anal.} {\bf 18} (2008), 755-869.

\bibitem[GS3]{gs} M. Goldstein and W. Schlag. On resonances and the formation of gaps in the spectrum of quasi-periodic Schr\"odinger equations. {\it Ann. of Math.} {\bf 173} (2011), 337-475.

\bibitem[GSV]{26} M. Goldstein, W. Schlag and M. Voda. On localization and the spectrum of multi-frequency quasi-periodic operators. {\it Vietnam Journal of Mathematics} {\bf 52}(2024), 915-966.






\bibitem[HS]{hs} B. Helffer and J. Sj\"ostrand. Semiclassical analysis for Harper's equation. III. Cantor structure of the spectrum. {\it Mem. Soc. Math. France} {\bf 39} (1989), 1-124.



\bibitem[HW]{hw} A. Hoffman and H. Wielandt. The variation of the spectrum of a normal matrix. {\it Duke Math. J.} {\bf 20} (1953), 37-39.

\bibitem[Jo]{johnson} R. Johnson. Exponential dichotomy, rotation number, and linear differential operators with bounded coefficients. {\it J. Differential Equations} {\bf 61} (1986), 54-78.

\bibitem[JM]{JM} R. Johnson and J. Moser. The rotation number for almost periodic potentials. {\it Comm. Math. Phys.} {\bf 84(3)} (1982), 403-438.

\bibitem[Ka]{Kac} M. Kac. Public communication, at 1981 AMS Annual Meeting.
\bibitem[Ki]{Kir} M. D. Kirszbraun. ber die zusammenziehende und Lipschitzsche Transformationen. Fundamenta Mathematicae 22 (1934): 77-108.
\bibitem[Ko]{kotani} S. Kotani. Generalized Floquet theory for stationary Schr\"odinger operators in one dimension. {\it Chaos, Solitons \& Fractals} {\bf 8(11)} (1997), 1817-1854.

\bibitem[L]{last} Y. Last. Zero measure spectrum for the almost Mathieu operator. {\it Comm. Math. Phys.} {\bf 164} (1994), 421-432.


\bibitem[LYZZ]{LYZZ} M. Leguil, J. You, Z. Zhao and Q. Zhou. Asymptotics of spectral gaps of quasi-periodic Schr\"odinger operators. Preprint.

    \bibitem[LWY]{LWY} J. Liang, Y. Wang and J. You, H\"older continuity of Lyapunov exponent for a class of $C^2$ Schr\"odinger cocycles, {\it Annales Henri Poincar\'e}, {\bf 25} (2024), 1399-1444.

\bibitem[P1]{Puig} J. Puig. Cantor spectrum for the almost Mathieu operator. {\it Comm. Math. Phys.} {\bf 244} (2004), 297-309.

\bibitem[P2]{Puig06} J. Puig. A nonperturbative Eliasson's reducibility theorem. {\it Nonlinearity} {\bf 19(2)} (2006), 355-376.



\bibitem[S]{Barry} B. Simon. Almost periodic Schr\"odinger operators: A review. {\it Adv. Appl. Math.} {\bf 3} (1982), 463-490.

\bibitem[Sin]{sinai} Ya. G. Sinai. Anderson localization for one-dimensional difference Schr\"odinger operator with quasiperiodic potential. {\it J. Statist. Phys.} {\bf 46} (1987), 861-909.



\bibitem[SY]{42} M. Sodin and P. Yuditskii. Almost periodic Sturm-Liouville operators with Cantor homogeneous spectrum. {\it Comment. Math. Helv.} {\bf 70(4)} (1995), 639-658.

\bibitem[WZ1]{wz1} Y. Wang and Z. Zhang. Uniform positivity and continuity of Lyapunov exponents for a class of $C^2$ quasiperiodic Schr\"odinger cocycles. {\it J. Funct. Anal.} {\bf 268} (2015), 2525-2585.

\bibitem[WZ2]{wz2} Y. Wang and Z. Zhang. Cantor spectrum for a class of $C^2$ quasiperiodic Schr\"odinger cocycles. {\it Int. Math. Res. Not.} {\bf 8} (2017), 2300-2336.

\bibitem[XGW]{XGW} J. Xu, L. Ge and Y. Wang. The precise regularity of the Lyapunov exponent for cosine-type quasiperiodic Schr\"odinger cocycles with large couplings. {\it Comm. Math. Phys.} {\bf 406}:80 (2025).

\bibitem[XWYZ]{xwyz} X. Xu, Y. Wang, J. You and Q. Zhou. The absolute continuity of the spectrum of quasiperiodic Schr\"odinger operators. Preprint.



\bibitem[Yoc]{yoc} J.C. Yoccoz. Some questions and remarks about $SL(2, \mathbb{R})$ cocycles. In: Modern Dynamical Systems and Applications, 447-458, Cambridge Univ. Press, Cambridge, 2004.


    \bibitem[Y]{young} L.S. Young. Lyapunov exponents for some quasi-periodic cocycles. {\it Ergod. Th. \& Dynam. Sys.} {\bf 17} (1997), 483-504.

\bibitem[Z1]{z1} Z. Zhang. Positive Lyapunov exponents for quasiperiodic Szeg\"o cocycles. {\it Nonlinearity} {\bf 25} (2012), 1771-1797.

\bibitem[Z2]{z2} Z. Zhang. Resolvent set of Schrodinger operators and uniform hyperbolicity. arXiv:1305.4226





\end{thebibliography}
\end{document}